\newtheorem{definition}{Definition}
\newtheorem{proposition}{Proposition}
\newtheorem{theorem}{Theorem}
\newtheorem{lemma}{Lemma}
\newtheorem{remark}{Remark}
\newtheorem{example}{Example}
\DeclareMathAlphabet{\mathbfit}{OML}{cmm}{b}{it}
\newcommand{\suchthat}{\;\ifnum\currentgrouptype=16 \middle\fi|\;}
\newcommand{\myref}[2]{\hyperref[#2]{$#1$\ref*{#2}}}
\newcommand{\todoIfTime}[1]{}
\newcommand{\todoLater}[1]{}
\newcommand{\todoBetterRevise}[1]{}
\newcommand{\todoRevise}[1]{}
\newcommand{\todoBetterAdd}[1]{}
\newcommand{\todoBetter}[1]{}
\let\originaleqref\eqref
\renewcommand{\eqref}{Eq.~\originaleqref}
\begin{document}

\title{Informational Braess' Paradox: The Effect of Information on Traffic
Congestion\thanks{%
We thank anonymous referees as well as participants at several seminars and conferences for useful suggestions and comments.}}
\author{Daron Acemoglu \thanks{%
Department of Economics, Massachusetts Institute of Technology, Cambridge,
MA, 02139 . \texttt{daron@mit.edu}} \and Ali Makhdoumi \thanks{%
Department of Electrical Engineering and Computer Science, Massachusetts
Institute of Technology, Cambridge, MA, 02139 \texttt{makhdoum@mit.edu}}
\and Azarakhsh Malekian \thanks{%
Rotman School of Management, University of Toronto, ON, M5S 3E6 \texttt{%
azarakhsh.malekian@rotman.utoronto.ca}} \and Asu Ozdaglar \thanks{%
Department of Electrical Engineering and Computer Science, Massachusetts
Institute of Technology, Cambridge, MA, 02139 \texttt{asuman@mit.edu}}}
\date{}
\maketitle

\begin{abstract}
To systematically study the implications of additional information about
routes provided to certain users (e.g., via GPS-based route guidance
systems), we introduce a new class of congestion games in which users have
differing information sets about the available edges and can only use routes
consisting of edges in their information set. After defining the notion of
Information Constrained Wardrop Equilibrium (ICWE) for this class of
congestion games and studying its basic properties, we turn to our main
focus: whether additional information can be harmful (in the sense of
generating greater equilibrium costs/delays). We formulate this question in
the form of Informational Braess' Paradox (IBP), which extends the classic
Braess' Paradox in traffic equilibria, and asks whether users receiving
additional information can become worse off. We provide a comprehensive
answer to this question showing that in any network in the series of
linearly independent (SLI) class, which is a strict subset of
series-parallel networks, IBP cannot occur, and in any network that is not in
the SLI class, there exists a configuration of edge-specific cost functions
for which IBP will occur. In the process, we establish several properties of
the SLI class of networks, which include the characterization of
the complement of the SLI class in terms of embedding a specific set of networks, and also an algorithm which determines whether a graph is SLI in
linear time. We further prove that the worst-case inefficiency performance
of ICWE is no worse than the standard Wardrop equilibrium.
\end{abstract}

\maketitle

\section{Introduction}

The advent of GPS-based route guidance systems, such as Waze or Google maps,
promises a better traffic experience to its users, as it can inform them
about routes that they were not aware of or help them choose dynamically
between routes depending on recent levels of congestion. Though other
drivers might plausibly suffer increased congestion as the routes they were
using become more congested due to this reallocation of traffic, or certain
residents may experience elevated noise levels in their side streets, it is
generally presumed that the users of these systems (and perhaps society as a
whole) will benefit. In this paper, we present a framework for
systematically analyzing how changes in the information sets of users in a
traffic network (e.g., due to route guidance systems) impact the traffic
equilibrium, and show the conditions under which even those with access to
additional information may suffer greater congestion.

Our formal model is a version of the well-known congestion games, augmented
with multiple types of users (drivers), each with a different information
set about the available edges in the network. These different information
sets represent the differing knowledge of drivers about the road network,
which may result from their past experiences, from inputs from their social
network, or from the different route guidance systems they might rely on. A
user can only utilize a route (path between origin and destination)
consisting of edges belonging to her information set. Each edge is endowed with a latency/cost function representing costs due to congestion. We generalize the
classic notion of \emph{Wardrop equilibrium} (\cite{wardrop1952some,  beckmann1956studies} and \cite{schmeidler1973equilibrium}), where each (non-atomic) user takes the level of
congestion on all edges as given and chooses a route with minimum cost (defined as the summation of costs of edges on the route). Our notion of \emph{Information Constrained Wardrop Equilibrium (ICWE)},
 also imposes the same equilibrium condition as Wardrop equilibrium,
but only for routes that are contained in the information set of each type
of user.

After establishing the existence and essential uniqueness of ICWE and
characterizing its main properties for networks with a single
origin-destination pair (an assumption we impose for simplicity and later relax), we turn to
our key question of whether expanding the information sets of some group of
users can make them worse off --- in the sense of increasing the level of
congestion they suffer in equilibrium. For this purpose, we define the
notion of \emph{Informational Braess' Paradox (IBP)}, designating the
possibility that users with expanded information sets experience greater
equilibrium cost. We then provide a tight characterization of when IBP
is and is not possible in a traffic network.

Our main result is that IBP does not occur if and only if the network
is \emph{series of linearly independent (SLI)}. More
specifically, this result means that in an SLI network, IBP can never occur,
ensuring that users with expanded information set always benefit from their
additional information. Conversely, if the network is not SLI, then
there exists a configuration of latency/cost functions for edges for which
IBP will occur. To understand this result, let us consider what the relevant
class of networks comprises. The set of SLI networks is a subset of \emph{series-parallel} 
networks, which are those for which two routes never pass through any edge in opposite directions. An SLI network is obtained by joining
together a collection of \emph{linearly independent (LI)} networks in series. LI
networks are those in which each route includes at least one edge that is not part of any other route. The intuition of our main result is as follows. To show IBP does not occur in an SLI network, we show the result on each of its LI parts. A key property of LI networks used in proving our main result is that for a traffic network with multiple information types if we reduce the total traffic demand, then there exists a route with strictly less flow.  When some users have more information, they change their routing, redirecting it to some subnetwork $A$ of the original network from some other subnetwork $B$ (and since the original network is LI, both $A$ and $B$ are also LI). All else equal, this will increase flows in $A$ and decrease flows in $B$. By the key property of LI networks, this will reduce flows in some route in $B$ and since users with more information have access to routes in $B$, this rerouting cannot increase their costs. Other users adjust their routing by allocating flows away from $A$ (since flow in $A$ has increased), which again by the LI property of the subnetwork implies that costs of some routes in $A$ decrease, establishing \textquotedblleft if\textquotedblright\ part of our main result. 
 The \textquotedblleft only if\textquotedblright\ part is
proved by showing that every non-SLI network embeds one of the collection of networks, and we demonstrate constructively that each one of these networks generates IBP (for some configuration of costs).

 We should also note that, since SLI is a restrictive class of
networks, and few real-world networks would fall into this class, we take
this characterization to imply that IBP is difficult to rule out in practice, and thus
the new, highly-anticipated route guidance technologies may make traffic
problems worse.

Since the class of SLI networks plays a central role in our analysis, a natural question is whether identifying
SLI networks is straightforward. We answer this question by showing that
whether a given network is SLI or not can be determined in linear time. This
result is based on the algorithms for identifying series-parallel networks proposed by
\cite{valdes1979recognition, schoenmakers1995new}, and \cite{ eppstein1992parallel}.

If, rather than considering a general change of information sets, we
specialize the problem so that only one user type does not have complete information about the available set of routes and the change in question is to bring all users complete information,
then we show that an IBP is possible if and only if the network is not 
\emph{series-parallel}. It is intuitive that this class of networks is less
restrictive than SLI, since we are now considering a specific change in
information sets (thus making IBP less likely to occur).

Our main focus is on traffic networks with a single origin-destination pair for which we provide a full characterization of network topologies for occurrence of IBP. In Section \ref{sec:IBPmultipleODpairs}, we consider multiple origin-destination pairs and use our characterization to provide a sufficient condition on the network topology under which IBP does not occur.

Our notion of IBP closely relates to the classic Braess' Paradox (BP), introduced in \cite{braess1968paradoxon} and further studied in  \cite{murchland1970braess} and
\cite{arnott1994economics}, which considers whether an additional edge in the network can increase equilibrium cost. When BP occurs in a network, IBP with a single information type also occurs (since IBP with a single information type can be shown to be identical to BP). Various aspects of BP and congestion games in general
is studied in \cite{murchland1970braess, steinberg1983prevalence,
dafermos1984some, patriksson1994traffic, bottom1999investigation, jahn2005system, ordonez2010wardrop, meir2014playing,
nikolova2014burden, chen2015excluding}, and \cite{feldman2015convergence}. Our characterization of ICWE and IBP clarifies that
our notion is different and, at least mathematically, more general. This can
be seen readily from a comparison of our results to the most closely related
papers to ours in the literature, \cite{milchtaich2005topological,
milchtaich2006network}. The characterizations in \cite{milchtaich2006network} imply that BP can be ruled out in series-parallel networks. Since IBP is a
generalization of BP, it should occur in a wider class of networks, and this
is indeed what our result shows- indeed SLI is a strict
subset of series-parallel networks. This result also indicates that IBP is a
considerably more pervasive phenomenon than BP. Notably, the mathematical argument
for our key theorem is different from \cite{milchtaich2006network} due
to the key difficulty relative to BP that not all users have access to the
same set of edges, and thus changes in traffic that benefit some groups of
users might naturally harm others by increasing the congestion on the routes
that they were previously utilizing.

Issues related to Braess' Paradox arise not only in the context of models of
traffic, but in various models of communication, pricing and choice over
congested goods, and electrical circuits. See e.g., \cite{orda1993competitive}, 
\cite{korilis1997achieving}, \cite{kelly1998rate}, and \cite%
{low1999optimization} for communication networks; the classic works by \cite%
{pigoueconomics} and \cite{samuelson1952spatial} as well as more recent works by 
\cite{johari2003network}, \cite{acemoglu2007competition, ashlagi2009two} and 
\cite{perakis2004price} for related economic problems; \cite{frank1981braess}%
, \cite{cohen1991paradoxical}, and \cite{cohen1997congestion} for mechanical
systems and electrical circuits; and \cite{rosenthal1973class} and \cite%
{vetta2002nash} for general game-theoretic approaches. This observation
also implies that the results we present here are relevant beyond traffic
networks, in fact to any resource allocation problem over a network subject
to congestion considerations. As pointed out in \cite{newell1980traffic} and 
\cite{sheffi1985urban} the Braess' paradox and related inefficiencies are a
clear and present challenge to traffic engineers, who often try to restrict
travel choices to improve congestion (e.g., via systems such as ramp
metering on freeway entrances). 

Other inefficiencies created by providing more information in the context of traffic networks have been studied in \cite{mahmassani1984dynamic, ben1991dynamic, arnott1991does}, and \cite{liu2016effects}. In particular, \cite{arnott1991does} consider a model with atomic drivers in which users decide on their departure time and route choice. They show that providing imperfect information regarding capacity/delay of roads might be worse than providing no information.  
More broadly, inefficiencies created by providing more information in other contexts are studied in \cite{maheswaran2003nash,
sanghavi2004optimal, yang2005revenue, harel2014more, dughmi2014hardness}, and \cite{rogers2015inducing}, among others.

Because our analysis also presents \textquotedblleft price of
anarchy\textquotedblright\ type results, i.e., bounds on the overall level of
inefficiency that can occur in an ICWE, our paper is related to previous
work on the price of anarchy in congestion and related games started by seminal
works of \cite{koutsoupias1999worst} and \cite{roughgarden2002bad} and
followed by \cite{correa2004selfish, correa2005inefficiency}, and \cite%
{friedman2004genericity}, as well as more generally to the analysis of
equilibrium and inefficiency in the variants of this class of games,
including \cite{milchtaich2004random, milchtaich2004social}, \cite%
{acemoglu2007partially, mavronicolas2007congestion, nisan2007algorithmic,
arnott1994economics, lin2004stronger, meir2015playing}, and \cite{anshelevich2008price}.
Here, our result is that the presence of users with different information
sets does not change the worst-case inefficiency traffic equilibrium as
characterized, for example, in \cite{roughgarden2002bad}.

The rest of the paper is organized as follows. In Section \ref{sec:model}, we
introduce our model of traffic equilibrium with users that are heterogeneous
in terms of the information about routes/edges they have access to, and then
define the notion of Information Constrained Wardrop Equilibrium for this
setting. In Section \ref{sec:existence} we prove the existence and essential
uniqueness of Information Constrained Wardrop Equilibrium. Before moving to
our main focus, in Section \ref{sec:graphtheory} we review some
graph-theoretic notions about series-parallel and linearly independent
networks, and then introduce the class of series of linearly independent
networks and prove some basic properties of this class of networks, which are
then used in the rest of our analysis. Section \ref{sec:IBP} defines our
notion of Informational Braess' Paradox. Section \ref{sec:characIBP}
contains our main result, showing that Informational Braess' Paradox occurs
\textquotedblleft if and only if\textquotedblright\ the network is not in
the class of series of linearly independent networks. Section \ref{sec:POA}
characterizes the worst-case inefficiency of Information Constrained Wardrop
Equilibrium, and finally, Section \ref{sec:conclusion} concludes.
 All the omitted proofs are included in the Appendix.  

\section{Model}
\label{sec:model}
We first describe the environment and then introduce our notion of
Information Constrained Wardrop Equilibrium.

\subsection{Environment}

We consider an undirected multigraph without self-loops denoted by $G=(V, \mathcal{E}, f)$ with vertex set $V$, edge set $\mathcal{E}$, and a function $f: \mathcal{E} \to \{\{u, v\}, u, v \in V, u \neq  v\}$ that maps each edge to its end vertices. For the ease of notation we will refer to $G$ as $(V, \mathcal{E})$ and denote an edge $e$ with $f(e)=\{u, v\}$ by $e=(u, v)$. We use the terms node and vertex interchangeably. Each edge $e\in \mathcal{E}$ joins two (distinct) vertices $u $ and $v$, referred to as the \emph{end vertices}
of $e$. An edge $e$
and a vertex $v$ are said to be \emph{incident} to each other if $v$ is an
end vertex of $e$. A \emph{path} $p \in G$ of length $n$ ($n\geq 0$) is a sequence of edges $e_{1}\ldots e_{n}$ in $\mathcal{E}$ where $e_i$ and $e_{i+1}$ share a vertex. If an edge $e$ appears on a path $p$, we write $e \in p$. The first and last vertices of a path $p$ are called the
initial and terminal vertices of $p$, respectively. If $q$ is a path of the
form $e_{n+1}\dots e_{m}$, with the initial vertex the same as
the terminal vertex of $p$ but all the other vertices and edges of $q$ do not belong to $p
$, then $e_{1}\dots e_n e_{n+1}\dots e_{m}$ is
also a path, denoted by $p+q$. For a path $p$ and two nodes $v$ and $u$ on it, we denote the \emph{section} of path between $u$ and $v$ by $p_{uv}$. 

Throughout the paper, we focus on an undirected multigraph $G=(V,\mathcal{E})$ together with an ordered pair of distinct vertices, called terminals, an origin $O$ and a
destination $D$, referred to as a \emph{network}. A subnetwork of $G$ is defined as $(V^{\prime }, \mathcal{E}%
^{\prime })$, where $V^{\prime }\subseteq V$ and $\mathcal{E}^{\prime } \subseteq \mathcal{E}$ and for any $e=(u, v) \in \mathcal{E}^{\prime}$, we have $u, v \in V^{\prime}$. We assume that each vertex and edge belong to at
least one path between the initial vertex $O$ and the terminal vertex $D$.
This assumption is without loss of generality because the vertices and edges
that do not belong to any path from $O$ to $D$ are irrelevant for the
purpose of sending traffic from $O$ to $D$. Any path $r$ with
$O$ as the initial vertex and $D$ as the terminal vertex will be called a \emph{route}. The set of all
routes in a network is denoted by $\mathcal{R}$. 
%
%

We suppose there are $K \ge 1$ types of users (we use the terms users and
players interchangeably) and use the shorthand notation $[K]=\{1,\dots ,K\}$ to denote the set of types. Each type $i\in \lbrack K \rbrack$ has total \emph{traffic demand} $s_{i} \in \mathbb{R}^{+}$, and we denote the vector of traffic demands by $s_{1:K}=(s_{1},\dots ,s_{K})$.
For each type $i$, we use $\mathcal{E}_{i}\subseteq \mathcal{E}$
to denote the set of edges that type $i$ knows and $\mathcal{R}_i$ to denote the routes formed by edges in $\mathcal{E}_i$ (assumed non-empty). We refer to $\mathcal{E}_i$ or $\mathcal{R}_i$ as type $i$'s information set. We use $\mathcal{E}_{1:K}= \left(\mathcal{E}_1, \dots, \mathcal{E}_K \right)$ to denote the information sets of all types.

We use $f^{(i)}=(f_{r}^{(i)}~:~r\in \mathcal{R}_{i})$ to denote the flow
vector of type $i$, where for all $r\in \mathcal{R}_{i}$, $f_{r}^{(i)}\geq 0$
represents the amount of traffic (flow) that type $i$ sends on route $r$. We use 
$f^{(1:K)}=(f^{(1)},\dots ,f^{(K)})$ to denote the flow vector of all types.
Each edge of the network has a cost (latency) function $c_{e}:\mathbb{R}%
^{+}\rightarrow \mathbb{R}^{+}$ which is continuous, nonnegative, and
nondecreasing. We denote the set of all cost functions by $\mathbf{c}=\{c_{e}~:~e\in \mathcal{E}\}$.
For instance, if all the cost functions are affine functions, then for any $%
e\in \mathcal{E}$, we would have $c_{e}(x)=a_{e}x+b_{e}$, for some $%
a_{e},b_{e}\in \mathbb{R}^{+}$. We refer to $(G, \mathcal{E}_{1:K},
s_{1:K}, \mathbf{c})$ as a \emph{traffic network with multiple information types}.
A feasible flow is a flow vector $f^{(1:K)}=(f^{(1)}, \dots,f^{(K)})$ such
that for all $i \in [K]$, $f^{(i)}$ is a flow vector of type $i$, i.e., $f^{(i)}: \mathcal{R}_i \to \mathbb{R
}^+$ and $\sum_{r \in \mathcal{R}_i} f^{(i)}_r = s_i$. We denote the total flow on each route $r$ by $f_r$%
, i.e., $f_r=\sum_{i=1}^K f_r^{(i)}$.


\subsection{Information Constrained Wardrop Equilibrium}

The cost of a route $r$ with respect to a flow $(f^{(1)}, \dots, f^{(K)})$ is
the sum of the cost of the edges that belong to this route, i.e., $%
c_r(f^{(1:K)})= \sum_{e \in r} c_e(f_e)$, where $f_e$ denotes the amount of
traffic that passes through edge $e$, i.e., $f_e=\sum_{r \in \mathcal{R}~:~
e \in r} f_r$. 

We assume flows get allocated at equilibrium according to a ``constrained"
version of Wardrop's principle: flows of each user type are routed along
routes in her information set with minimal (and hence equal) cost. We next
formalize this equilibrium notion.


\begin{definition}[\textbf{Information Constrained Wardrop Equilibrium (ICWE)%
}]\label{def:ICWE}
\textup{\ A feasible flow $f^{(1:K)}=(f^{(1)},\dots ,f^{(K)})$ is an Information Constrained Wardrop Equilibrium (ICWE) if for
every $i \in [K]$ and every pair $r,\tilde{r}\in \mathcal{R}_{i}$ with $f_{r}^{(i)}>0$, we have
\begin{equation}  \label{eq:defCWE}
c_{r}(f^{(1:K)})\leq c_{\tilde{r}}(f^{(1:K)}).
\end{equation}%
This implies that all the routes in $\mathcal{R}_i$ with positive flow from type $i$ have the same
cost, which is smaller or equal to the cost of any other route in $\mathcal{R%
}_{i}$. \emph{The equilibrium cost of type $i$}, denoted by $c^{(i)}$, is then
given by the cost of any route in $\mathcal{R}_{i}$ with positive flow from
type $i$. Note that the Wardrop Equilibrium (WE) is a special case of this
definition for a traffic network with a single information type, i.e., $K=1$%
. }
\end{definition}
We next provide an example that illustrates this definition and how it
differs from the classic Wardrop Equilibrium. 
\begin{figure}[tbp]
\centering
\includegraphics[width=0.4\textwidth]{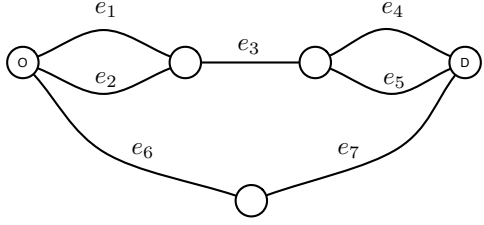}
\caption{Example of a network with edge cost functions given by $%
c_{e_{1}}(x)=c_{e_{4}}(x)=c_{e_{6}}(x)=x$ and  $%
c_{e_{2}}(x)=c_{e_{5}}(x)=c_{e_{7}}(x)=1+ax$ and $c_{e_{3}}=ax$ for some $a>0$.}
\label{fig:example11}
\end{figure}

\begin{example}
\label{ex:differentcosts} \textup{\ Consider the network $G=(V, \mathcal{E})$ given in Figure \ref
{fig:example11} with $s_{1}=s$, $s_{2}=1-s$, and the cost functions specified in Figure \ref
{fig:example11}. There are 5 different routes
from origin to destination, which we denote by $r_{1}=e_{1}e_{3}e_{4}$, $%
r_{2}=e_{1}e_{3}e_{5}$, $r_{3}=e_{2}e_{3}e_{4}$, $r_{4}=e_{2}e_{3}e_{5}$,
and $r_{5}=e_{6}e_{7}$. We let $\mathcal{E}_{1}=\mathcal{E}$ and $\mathcal{E}%
_{2}=\{e_{6},e_{7}\}$, which results
in $\mathcal{R}_{1}=\{r_{1},r_{2},r_{3},r_{4},r_{5}\}$ and $\mathcal{R}_{2}=\{r_{5}\}$, respectively.
\begin{itemize}
\item If $s \le \frac{2+a}{3+2 a}$, ICWE is $f^{(1)}_{r_1}= s$ and $f^{(2)}_{r_5}=1-s$. The equilibrium cost of type $1$ is $c^{(1)}=c_{r_1}(f^{(1:2)})=s+ a s +s = s(a+2)$. The equilibrium cost of type $2$ is $c^{(2)}=c_{r_5}(f^{(1:2)})=(1-s) + \left(1+ a (1- s) \right)=(1-s)(1+a)+1$. Hence, the equilibrium cost of type $1$ and type $2$ users need not be the same.
\item If $s > \frac{2+a}{3+2 a}$, ICWE is $f^{(1)}_{r_1}= \frac{2+a}{3+2 a} $, $f^{(1)}_{r_5}= s-\frac{2+a}{3+2 a} > 0$ and $f^{(2)}_{r_5}=1-s$, which give $c^{(1)}= c^{(2)}= \frac{(2+a)^2}{3+ 2a}$. This illustrates that when different types use a common route in an equilibrium, their equilibrium costs are the same.
\end{itemize}
}
\end{example}

\section{Existence of Information Constrained Wardrop Equilibrium}

\label{sec:existence}

In this section, we show that given a traffic network with multiple
information types $(G,\mathcal{E}_{1:K},s_{1:K},\mathbf{c})$, an ICWE always
exists and it is \textquotedblleft essentially\textquotedblright\ unique,
i.e., for each type, equilibrium cost is the same for all equilibria. Our
proof for existence and essential uniqueness of ICWE
relies on the following characterization, which is a straightforward extension of the
well-known optimization characterization of Wardrop Equilibrium (see \cite%
{beckmann1956studies} and \cite{smith1979existence}). 

\begin{proposition}
\label{pro:potential} A flow $f^{(1:K)}$ is an ICWE if and only if
it is a solution of the following optimization problem:
\begin{align}
& \min \sum_{e\in \mathcal{E}}\int_{0}^{f_{e}}c_{e}(z)dz  \notag
\label{eq:propotential} \\
& f_{e}=\sum_{i=1}^{K}\sum_{r\in \mathcal{R}_{i}~:~e\in r}f_{r}^{(i)},
\notag \\
& \sum_{r\in \mathcal{R}_{i}}f_{r}^{(i)}=s_{i},\text{ and }f_{r}^{(i)}\geq 0%
\text{ for all }r\in \mathcal{R}_{i}.
\end{align}%
We call $\sum_{e\in \mathcal{E}}\int_{0}^{f_{e}}c_{e}(z)dz$ the potential
function and denote it by $\Phi $.
\end{proposition}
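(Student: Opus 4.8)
The plan is to recognize the displayed program as a smooth convex optimization problem and to match its first-order optimality conditions with the ICWE conditions in \eqref{eq:defCWE}; this is the classical Beckmann--McGuire--Winsten potential argument (\cite{beckmann1956studies, smith1979existence}), and the only new wrinkle is the bookkeeping across the $K$ information types.

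First I would set up the convex-program structure. The feasible set is the nonempty (each $\mathcal{R}_i \neq \emptyset$), compact, convex polytope cut out by $\sum_{r\in\mathcal{R}_i} f_r^{(i)} = s_i$ and $f_r^{(i)}\geq 0$. The edge flows $f_e = \sum_{i}\sum_{r\in\mathcal{R}_i:\, e\in r} f_r^{(i)}$ are linear in the route-flow vector, and $t\mapsto \int_0^t c_e(z)\,dz$ is convex and continuously differentiable because $c_e$ is continuous, nonnegative, and nondecreasing; hence $\Phi$ is a $C^1$ convex function of $f^{(1:K)}$. A minimizer therefore exists, and --- the constraints being affine, so that no constraint qualification is needed --- the Karush--Kuhn--Tucker conditions are both necessary and sufficient for global optimality.

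Next I would compute, by the chain rule and the fundamental theorem of calculus, that $\partial\Phi/\partial f_r^{(i)} = \sum_{e\in r} c_e(f_e) = c_r(f^{(1:K)})$, and write the KKT system: $f^{(1:K)}$ is optimal iff it is feasible and there exist multipliers $\mu_i$ ($i\in[K]$) and $\lambda_r^{(i)}\geq 0$ with $c_r(f^{(1:K)}) = \mu_i + \lambda_r^{(i)}$ and $\lambda_r^{(i)} f_r^{(i)} = 0$ for all $i$ and all $r\in\mathcal{R}_i$. These say exactly that every $r\in\mathcal{R}_i$ has $c_r(f^{(1:K)})\geq\mu_i$, with equality whenever $f_r^{(i)}>0$ --- i.e., for each type the positive-flow routes in $\mathcal{R}_i$ share a common cost that is minimal over $\mathcal{R}_i$, which is precisely \eqref{eq:defCWE} with $\mu_i = c^{(i)}$. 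Conversely, given an ICWE one recovers the multipliers by setting $\mu_i := c^{(i)}$ and $\lambda_r^{(i)} := c_r(f^{(1:K)}) - c^{(i)}\geq 0$ (nonnegativity and complementary slackness being immediate from \eqref{eq:defCWE}), so the equivalence holds in both directions. Equivalently, one may phrase the first-order condition as the variational inequality $\sum_{i}\sum_{r\in\mathcal{R}_i} c_r(f^{(1:K)})\,(g_r^{(i)} - f_r^{(i)})\geq 0$ for all feasible $g^{(1:K)}$, and observe that it is violated precisely by the ``swap a small amount of flow from an expensive used route to a cheaper route of the same type'' perturbation exactly when \eqref{eq:defCWE} fails; convexity of $\Phi$ then upgrades this first-order statement to global optimality.

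I do not expect a serious obstacle here: the argument is routine convexity plus KKT. The two points that each deserve a careful sentence are (i) that $\Phi$ is genuinely $C^1$ and convex \emph{as a function of the route-flow vector}, which holds because the edge flows depend on it linearly, so the convexity of the per-edge integrals is preserved under composition; and (ii) that the KKT multiplier $\mu_i$ is type-specific --- it is exactly this per-type multiplier that permits the equilibrium costs $c^{(i)}$ to differ across types, consistent with Example~\ref{ex:differentcosts}.
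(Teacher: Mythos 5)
Your proposal is correct and follows essentially the same route as the paper's proof: identify the program as a smooth convex problem over an affine-constrained polytope, compute $\partial\Phi/\partial f_r^{(i)}=c_r(f^{(1:K)})$, and match the KKT conditions (necessary and sufficient here) with the ICWE conditions in both directions, with the type-specific equality multiplier playing the role of $c^{(i)}$. The only cosmetic differences are the sign convention on the nonnegativity multipliers and your added variational-inequality remark, neither of which changes the argument.
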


Using the characterization of ICWE as the minimizer of a potential function,
we can now show the existence and essential uniqueness.

\begin{theorem}[\textbf{Existence and Uniqueness of ICWE}]
\label{thm:existsnceCWE} Let $(G, \mathcal{E}_{1:K}, s_{1:K}, \mathbf{c})$
be a traffic network with multiple information types.

\begin{itemize}
\item There exists an ICWE flow $f^{(1:K)}=(f^{(1)},\dots ,f^{(K)})$.

\item The ICWE is essentially unique in the sense that if $f^{(1:K)}$ and $%
\tilde{f}^{(1:K)}$ are both ICWE flows, then $%
c_{e}(f_{e})=c_{e}(\tilde{f}_{e})$ for every edge $e\in \mathcal{E}$.
\end{itemize}
\end{theorem}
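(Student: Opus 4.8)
The plan is to derive both statements from the variational characterization in Proposition~\ref{pro:potential}: an ICWE is exactly a global minimizer of the potential $\Phi$ over the feasible polytope
\[
\mathcal{F}=\Bigl\{\,f^{(1:K)}:\ f^{(i)}_r\ge 0\ \text{and}\ \textstyle\sum_{r\in\mathcal{R}_i}f^{(i)}_r=s_i\ \text{for all }i\in[K]\,\Bigr\},
\]
so it suffices to show this minimization problem has a solution and that all solutions induce the same edge loads up to cost.

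For existence, I would first note that $\mathcal{F}$ is nonempty — since each $\mathcal{R}_i$ is assumed nonempty, routing all of type $i$'s demand $s_i$ on a single route of $\mathcal{R}_i$ gives a feasible flow — and that $\mathcal{F}$ is a compact subset of a finite-dimensional Euclidean space, being closed (cut out by linear equalities and non-strict inequalities) and bounded (each coordinate lies in $[0,s_i]$). Next, each edge load $f_e$ is a linear function of the route-flow vector, and $t\mapsto\int_0^t c_e(z)\,dz$ is continuous (indeed finite) because $c_e$ is continuous and nonnegative; hence $\Phi$ is continuous on $\mathcal{F}$. The Weierstrass theorem then yields a minimizer of $\Phi$ over $\mathcal{F}$, which by Proposition~\ref{pro:potential} is an ICWE.

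For essential uniqueness, the key point is that $h_e(t):=\int_0^t c_e(z)\,dz$ is convex on $\mathbb{R}^+$, being continuously differentiable with nondecreasing derivative $h_e'=c_e$; consequently $\Phi$, a sum of the $h_e$ precomposed with linear maps, is convex on $\mathcal{F}$. Now let $f^{(1:K)}$ and $\tilde f^{(1:K)}$ both be ICWE flows; by Proposition~\ref{pro:potential} they attain the same minimal value $\Phi^\star$. The midpoint $\bar f^{(1:K)}=\tfrac12 f^{(1:K)}+\tfrac12 \tilde f^{(1:K)}$ lies in the convex set $\mathcal{F}$ and has edge loads $\bar f_e=\tfrac12(f_e+\tilde f_e)$. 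Convexity of each $h_e$ gives $h_e(\bar f_e)\le\tfrac12 h_e(f_e)+\tfrac12 h_e(\tilde f_e)$, whence $\Phi(\bar f^{(1:K)})\le\Phi^\star$; since $\Phi^\star$ is the minimum, equality holds, forcing $h_e(\bar f_e)=\tfrac12 h_e(f_e)+\tfrac12 h_e(\tilde f_e)$ for every $e$. Equality at the midpoint of a convex function forces it to be affine on the whole interval between $f_e$ and $\tilde f_e$, so $c_e=h_e'$ is constant there; in particular $c_e(f_e)=c_e(\tilde f_e)$, which is the claim.

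The routine parts — compactness, continuity, convexity of $h_e$ — are standard, and this is really a textbook convex-programming argument (cf.\ \cite{beckmann1956studies,smith1979existence}). The only point requiring a little care is the last step: because $c_e$ is merely continuous and nondecreasing (not strictly increasing or differentiable), the edge loads $f_e$ themselves need not coincide across equilibria, so the conclusion is genuinely only about the values $c_e(f_e)$. I expect the mildest obstacle to be phrasing the ``equality at the midpoint of a convex function implies affine on the interval'' step cleanly — for instance via monotonicity of difference quotients, or by iterating the convexity inequality on dyadic sub-midpoints and passing to the limit.
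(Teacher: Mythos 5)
Your proposal is correct and follows essentially the same route as the paper: existence via Weierstrass applied to the continuous potential $\Phi$ on the compact feasible polytope, and essential uniqueness by convexity of $\Phi$, forcing each $\int_0^{\cdot}c_e(z)\,dz$ to be affine between $f_e$ and $\tilde f_e$ and hence $c_e$ constant on that interval. The only difference is cosmetic — you argue at the midpoint and spell out the "equality in convexity implies affine on the segment" step, where the paper invokes the same fact for general $\alpha$ without elaboration.
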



\begin{remark}
\textup{
As shown in \cite{milchtaich2005topological, gairing2006routing}, and \cite{mavronicolas2007congestion} the essential uniqueness of equilibrium does not hold for multiple type  congestion games where different types of users have different cost functions for the same edge. This class of congestion games is also referred to as {\it player-specific congestion games}. Several conditions on the edge cost functions and network topology have been proposed to guarantee the existence of an essentially unique equilibrium (see \cite{konishi1997equilibria, voorneveld1999congestion, milchtaich2005topological, mavronicolas2007congestion, georgiou2009selfish}, and \cite{gairing2013congestion}). In particular, \cite{milchtaich2005topological} provides sufficient and necessary conditions on the network topology under which an essentially unique equilibrium exists. \cite{mavronicolas2007congestion} and \cite{georgiou2009selfish} show that when the edge costs are affine functions and differ by a player-specific additive constant, then an equilibrium exists. Our model is a special case of a player-specific congestion game in which the cost of an edge $e$ for a type $i$ user is $c_e(\cdot)$ if $e \in \mathcal{E}_i$ and $\infty$, otherwise. Therefore, the results of \cite{mavronicolas2007congestion} and \cite{georgiou2009selfish} can directly be used to establish the existence of an equilibrium in our model. For completeness, we provide an alternative proof of Theorem \ref{thm:existsnceCWE} in the Appendix \ref{app:proofs:sec:existence} based on the classical results of \cite{beckmann1956studies, schmeidler1973equilibrium, smith1979existence}, and \cite{milchtaich2000generic}. 
}
\end{remark}


Theorem \ref{thm:existsnceCWE} assumes that the cost functions are non-decreasing. If we strengthen this assumption to strictly increasing
cost functions,  then the results of \cite{roughgarden2002bad}, \cite{mavronicolas2007congestion}, and \cite{georgiou2009selfish}  show that the essential uniqueness result can be strengthened. In this case, the total flow on any edge at any equilibrium would be the same. 


\section{Some Graph-Theoretic Notions}

\label{sec:graphtheory}


In this section, we first present two classes of networks namely series-parallel and linearly independent networks
 which we use 
in our characterization of IBP. In preparation for our main graph-theoretic results, we also
present equivalent characterizations of these networks and delineate
the relations among them. Finally, we define a new class of networks termed series of linearly independent and present a characterization for it in terms of embedding of a few basic networks.

\begin{definition}[\textbf{Series-Parallel Network (SP)}]
\label{def:SP}
\textup{
A (two-terminal) network is called series-parallel if two
routes never pass through an edge in opposite directions. Equivalently, as was shown by \cite{RS42}, a network is
series-parallel if and only if
\begin{itemize}
\item[(i)] it comprises a single edge between $O$ and $D$, or 
\item [(ii)] it is constructed by connecting two series-parallel networks in series, i.e., by joining the destination of one series-parallel network with the origin of the other one, or 
\item [(iii)] it is constructed by connecting two series-parallel networks in parallel, i.e., by joining the origins and destinations of two series-parallel networks.
\end{itemize}
}
\end{definition}

As an example, the networks shown in Figure \ref{fig:fig11} and Figure \ref%
{fig:fig12} are series-parallel networks, while the network shown in Figure %
\ref{fig:fig13} is not series-parallel. The reason is that two routes $%
e_{1}e_{5}e_{4}$ and $e_{2}e_{5}e_{3}$ pass through the edge $e_{5}$ in
opposite directions.

An important subclass of series-parallel networks are \emph{linearly
independent} networks.

\begin{definition}[\textbf{Linearly Independent Network (LI)}]
\label{def:LI}
\textup{
A (two terminal) network is called linearly independent if each route has at least one edge that does not belong to any other route. Equivalently, as was shown by \cite{HL03}, a network is linearly independent if and only if
\begin{enumerate}
\item[(i)] it comprises a single edge between $O$ and $D$, or 
\item[(ii)] it is constructed by connecting a linearly independent network in series with a single edge network, or 
\item[(iii)] it is constructed by connecting two linearly independent networks in parallel. 
\end{enumerate}
}
\end{definition}

This class is termed linearly independent because of an algebraic characterization of the routes when viewed as vectors in the edge space. In particular, for any $r\in
\mathcal{R}$, let $\mathbf{v}_{r}\in \mathbb{F}_{2}^{|\mathcal{E}|}$ be $\mathbf{v}_{r}=(v_{r}^{1},\ldots ,v_{r}^{|\mathcal{E}|})$, where $v_{r}^{i}=1$ if $e_{i}\in r$ and $0$, otherwise. A network $G$ is LI if and only if the set of vectors $\{\mathbf{v}_{r}~:~r\in \mathcal{R}\}$ is linearly independent (see \cite{milchtaich2006network} and \cite{diestel2000graduate}).

As Definitions \ref{def:SP} and  \ref{def:LI} make it clear, the class of linearly
independent networks is a subset of the class of series-parallel networks.
An alternate characterization of linearly
independent and series-parallel networks  is based on the ``graph embedding'' notion, shown by \cite{duffin1965topology} and \cite{milchtaich2006network}, respectively.  We next define a graph embedding and then present these characterizations which will be used later in our analysis. 
\begin{figure}[t]
\centering
\begin{subfigure}[b]{0.25\textwidth}
 \includegraphics[width=\textwidth]{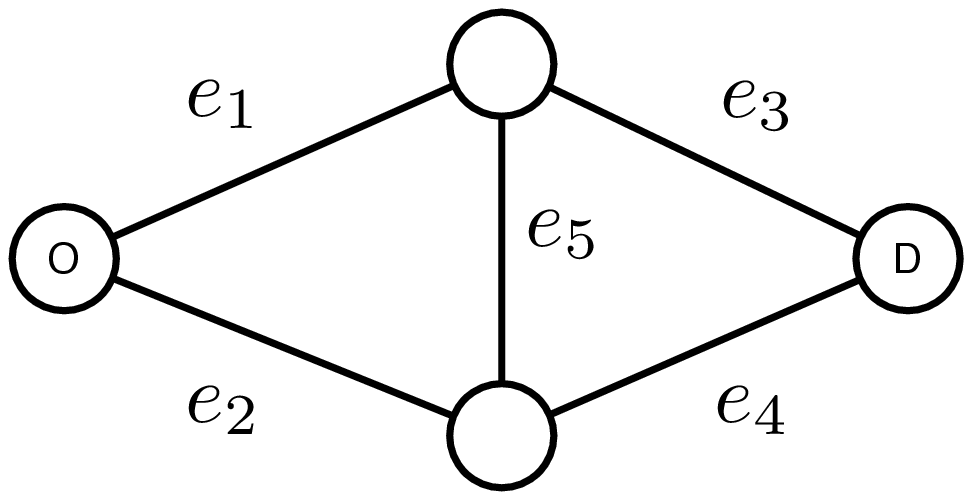}

                \caption{}
                                \label{fig:fig13}

        \end{subfigure}
~ 
\begin{subfigure}[b]{0.29\textwidth}
                \includegraphics[width=\textwidth]{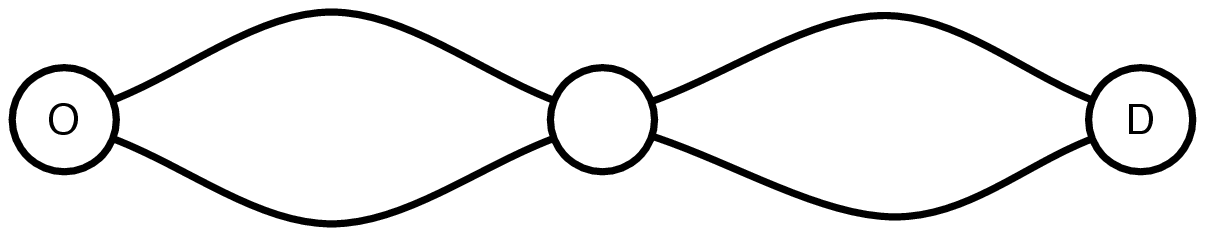}

                \caption{}
                                \label{fig:fig11}

        \end{subfigure}
~ 
\begin{subfigure}[b]{0.39\textwidth}
                \includegraphics[width=\textwidth]{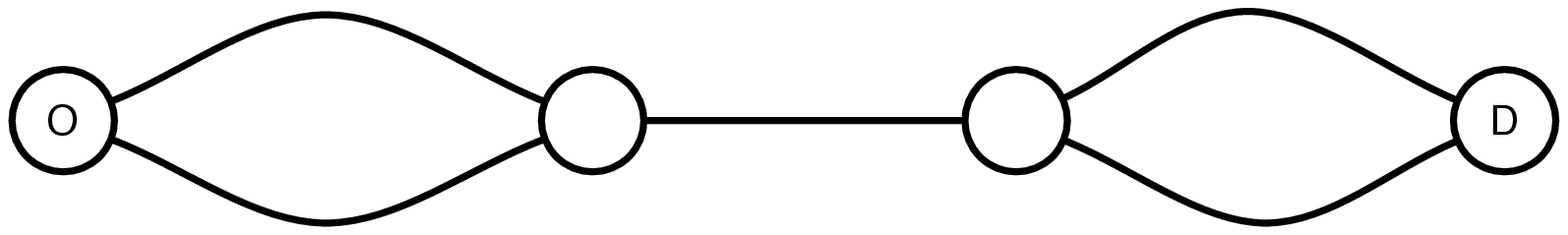}
                \caption{}
                \label{fig:fig12}
        \end{subfigure}
\caption{Networks that cannot be embedded in SP and LI networks: Network (a)
is not embedded in SP networks, Networks (a),(b), and (c) are not embedded in LI networks. }
\label{fig:threenetworks}
\end{figure}

\begin{definition}[\textbf{Embedding}]
\label{def:embedding} \textup{\ A network $H$ is embedded in the network $G$ if
we can start from $H$ and construct $G$ by applying the following steps in any order:
\begin{itemize}
\item [(i)] Divide an edge, i.e., replace an edge with two edges with a single common end
node.
\item [(ii)] Add an edge between two nodes.
\item [(iii)] Extend origin or destination by one edge.
\end{itemize}
}
\end{definition}


\begin{proposition}
\label{pro:alternatedef} 

\begin{itemize}
\item[(a)] [\cite{milchtaich2006network}] A network $G$ is LI if and only if none of the networks shown in Figure \ref
{fig:threenetworks} are embedded in it. Furthermore, a network $G$ is LI if and only if for every pair of
routes $r$ and $r^{\prime }$ and every vertex $v \neq O, D$ common to both routes,
either the section $r_{Ov}$ is equal to $r^{\prime }_{Ov}$, or $r_{vD}$ is
equal to $r^{\prime }_{vD}$.

\item[(b)] [\cite{duffin1965topology} and \cite{milchtaich2006network}] A network $G$ is SP if
and only if the network shown in Figure \ref{fig:fig13} is not embedded in it. Furthermore, a network $G$ is SP if and only if the vertices can be indexed in such a way that, along each route, they have increasing indices.
\end{itemize}
\end{proposition}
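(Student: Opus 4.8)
Both parts have the same shape: a recursively defined class --- SP in part~(b), LI in part~(a), where I take the recursive descriptions of \cite{RS42} and \cite{HL03} as the working definitions --- is claimed to be cut out simultaneously by forbidding a short list of embedded networks and by a ``local'' condition on how pairs of routes overlap (the opposite‑traversal/indexing condition for SP, the section condition for LI). In each part the plan is to close a small cycle of implications among three properties, (I) membership in the class, (II) the local condition, and (III) no network on the forbidden list is embedded: I would prove (I)$\Rightarrow$(II), (I)$\Rightarrow$(III), and ``$\lnot$(III)$\Rightarrow\lnot$(II)'' as the easy arcs, and ``$\lnot$(I)$\Rightarrow\lnot$(III)'' as the one substantive arc; for~(b) the vertex‑indexing statement is folded in as a further equivalent to~(I).

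\noindent The easy arcs go as follows. Each network on a forbidden list is itself outside the class by inspection --- in Figure~\ref{fig:fig13} the routes $e_1e_5e_4$ and $e_2e_5e_3$ traverse $e_5$ in opposite directions, and in each network of Figure~\ref{fig:threenetworks} some route has every edge shared with another route --- and likewise violates the relevant local condition (e.g.\ in Figure~\ref{fig:fig13} the routes $O a b D$ and $O b a D$ and the vertex $a$ witness a section‑condition failure). That none of the three embedding moves can repair either defect is because every move carries the witness upward: under an edge subdivision or a terminal extension the old routes biject with their images (one edge of a route splits into two consecutive edges, or one edge lying on all routes is prepended/appended to each), so an opposite traversal survives, a route lacking a private edge still lacks one (the new pieces lie on exactly the images of the old routes through that edge, and a terminal‑extension edge lies on every route), and a violating triple $(r,r',v)$ of the local condition remains one; adding an edge only enlarges $\mathcal{R}$ and leaves all old routes intact. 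Hence (I)$\Rightarrow$(III) and $\lnot$(III)$\Rightarrow\lnot$(II). For (I)$\Rightarrow$(II) one inducts along the decomposition: in a parallel composition two routes sharing an interior vertex lie in the same constituent (the constituents meet only at $O,D$), so the hypothesis applies; in a series composition a shared interior vertex is interior to one constituent (hypothesis applies) or is the junction vertex, where one of the sections $r_{Ov},r_{vD}$ is forced to coincide --- it is a whole constituent, or for LI the single appended edge. The same induction numbers the vertices of an SP network consistently (index the constituents in decomposition order, with shared terminals identified at the block boundaries), and conversely a consistent indexing orients every edge from smaller to larger endpoint, turning each route into a directed path, so every edge is used in only one direction --- which settles the indexing half of~(b).

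\noindent The substantive arc, and the step I expect to be the real obstacle, is the extraction $\lnot$(I)$\Rightarrow\lnot$(III): from a graph not in the class, produce an embedded forbidden network. For~(b), ``not SP'' yields routes $r,r'$ and an edge $e=(x,y)$ with $r$ traversing $x\to y$ and $r'$ traversing $y\to x$; I would choose such a witness with $r\cup r'$ minimal, suppress its degree‑two vertices, and run a case analysis on how $r$ and $r'$ interleave relative to the four marked vertices $O,D,x,y$, showing that the minimal configuration is exactly a subdivision of the network in Figure~\ref{fig:fig13} lying inside $G$ with terminals $O,D$; the remaining vertices and edges of $G$ are then reintroduced by edge‑additions and subdivisions, so Figure~\ref{fig:fig13} embeds. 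For~(a), if $G$ is not LI then either it is not SP (and part~(b) already embeds Figure~\ref{fig:fig13}) or it is SP but some route $r$ has all edges shared; walking along $r$ one locates a common vertex $v\neq O,D$ and a route $r'$ with $r_{Ov}\neq r'_{Ov}$ and $r_{vD}\neq r'_{vD}$, and the union of the four sections $r_{Ov},r'_{Ov},r_{vD},r'_{vD}$ --- after suppressing degree‑two vertices and, if needed, minimizing the witness --- is a subdivision of one of the two SP‑but‑not‑LI networks in Figure~\ref{fig:threenetworks}, the choice depending on whether the $O$--$v$ and $v$--$D$ portions are internally disjoint and whether they meet each other. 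The delicate points that I expect to consume most of the work are making these interleaving case analyses genuinely exhaustive, checking in each surviving case that the claimed connecting paths are internally disjoint (so one obtains an honest subdivision, hence an honest embedding, rather than merely a topological minor), and confirming that the local condition fails in precisely these configurations. Establishing this arc completes the cycle and thus all three equivalences in each part, together with the vertex‑indexing equivalence in~(b).
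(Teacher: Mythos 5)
First, a point of comparison: the paper does not prove this proposition at all — it is imported as a known result of \cite{milchtaich2006network} and \cite{duffin1965topology} (with the recursive descriptions credited to \cite{RS42} and \cite{HL03}) — so there is no internal argument to measure you against; your proposal has to stand on its own as a proof of these classical characterizations. Your logical skeleton is fine: the cycle (I)$\Rightarrow$(II)$\Rightarrow$(III)$\Rightarrow$(I) closes all the equivalences, and your ``easy'' arcs are genuinely easy and essentially correct — each graph in Figure \ref{fig:threenetworks} (and the Wheatstone graph of Figure \ref{fig:fig13}) violates the relevant condition by inspection, the three operations of Definition \ref{def:embedding} carry a violating witness upward, and the induction along the series/parallel decomposition yields both the section condition and the vertex indexing.

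The genuine gap is that the entire mathematical content of the proposition sits in the one arc you call $\lnot$(I)$\Rightarrow\lnot$(III), and there you give only a plan: pick a minimal witness, suppress degree-two vertices, ``run a case analysis on how $r$ and $r'$ interleave,'' and check internal disjointness. That case analysis \emph{is} Duffin's theorem and Milchtaich's forbidden-network characterization; you flag it yourself as the real obstacle, and none of it is carried out (nor is the final step that the extracted configuration is an embedding in the sense of Definition \ref{def:embedding}, i.e.\ that the rest of $G$ can be rebuilt from the subdivided forbidden graph by the three allowed moves — this uses the standing assumption that every vertex and edge lies on some $O$--$D$ route and needs an argument). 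There is also a circularity as written: you announce that the recursive descriptions of \cite{RS42} and \cite{HL03} are your working definitions, yet inside $\lnot$(I)$\Rightarrow\lnot$(III) you immediately read ``not SP'' as ``some edge is traversed in opposite directions by two routes'' (and, for part (a), ``not LI'' as ``some route has no private edge''). That is $\lnot$(I)$\Rightarrow\lnot$(II), the nontrivial converse of your easy arc (I)$\Rightarrow$(II), and it is exactly what the cycle is supposed to deliver only after this arc is established. Either take the direct conditions as the definitions and invoke the equivalences already cited in Definitions \ref{def:SP} and \ref{def:LI} (then your starting point is legitimate), or prove that converse separately; as written the implications do not close. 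In short: a reasonable strategy outline with sound easy steps, but the decisive extraction argument — the substance of the proposition — is missing.
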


This proposition shows that series-parallel networks are those in which the network shown in Figure \ref%
{fig:fig13}, which is referred to as \emph{Wheatstone network} (see \cite{braess1968paradoxon}), is not embedded. LI networks, in addition, also exclude embeddings of series-parallel networks
that have routes that \textquotedblleft cross\textquotedblright  as
indicated in Figure \ref{fig:fig11} and Figure \ref{fig:fig12}.

We now introduce a new class of networks, which we refer to as \emph{series of linearly independent
networks} (SLI). 
\begin{definition}[\textbf{Series of Linearly Independent Network (SLI)}]\label{def:SLI}
\textup{\ A (two-terminal) network $G$ is called series of linearly independent if and only if
\begin{itemize}
\item [(i)] it comprises a single linearly independent network, or 
\item [(ii)] it is constructed by connecting two SLI networks in series.
\end{itemize}
A biconnected LI network is called an \emph{LI block}, where a graph is biconnected if it is connected and after removing any node and its incident edges the graph remains connected (see \citet[Chapter 3]{bondy1976graph}). Equivalently, a network $G$ is SLI if and only if it 
is constructed by connecting several LI blocks in series (see Appendix \ref{app:proofofDefSLI} for a formal proof). We refer to each of these blocks as an LI block of SLI network $G$. 
}
\end{definition}
\begin{figure}[t]
\centering
\begin{subfigure}[b]{0.19\textwidth}
               \includegraphics[width=\textwidth]{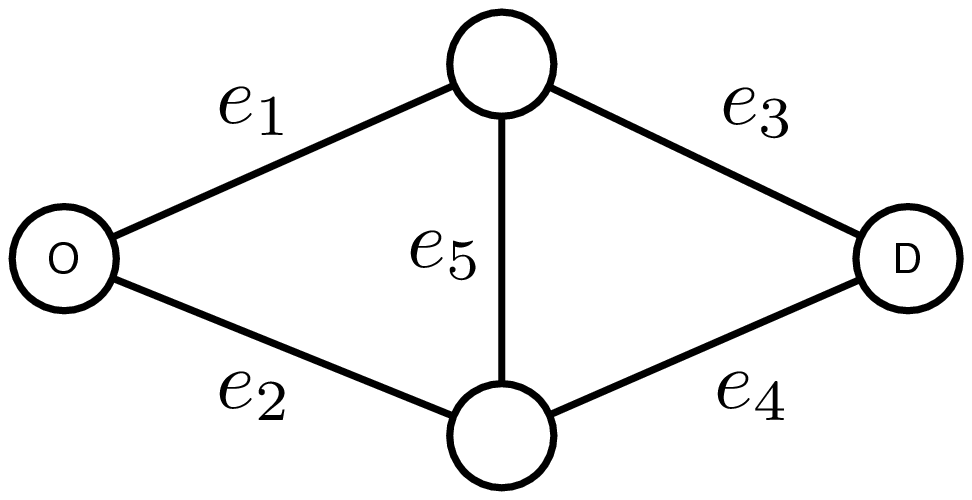}
                \caption{ }
                \label{fig:CE23}
        \end{subfigure}
~ 
\begin{subfigure}[b]{0.19\textwidth}
 \includegraphics[width=\textwidth]{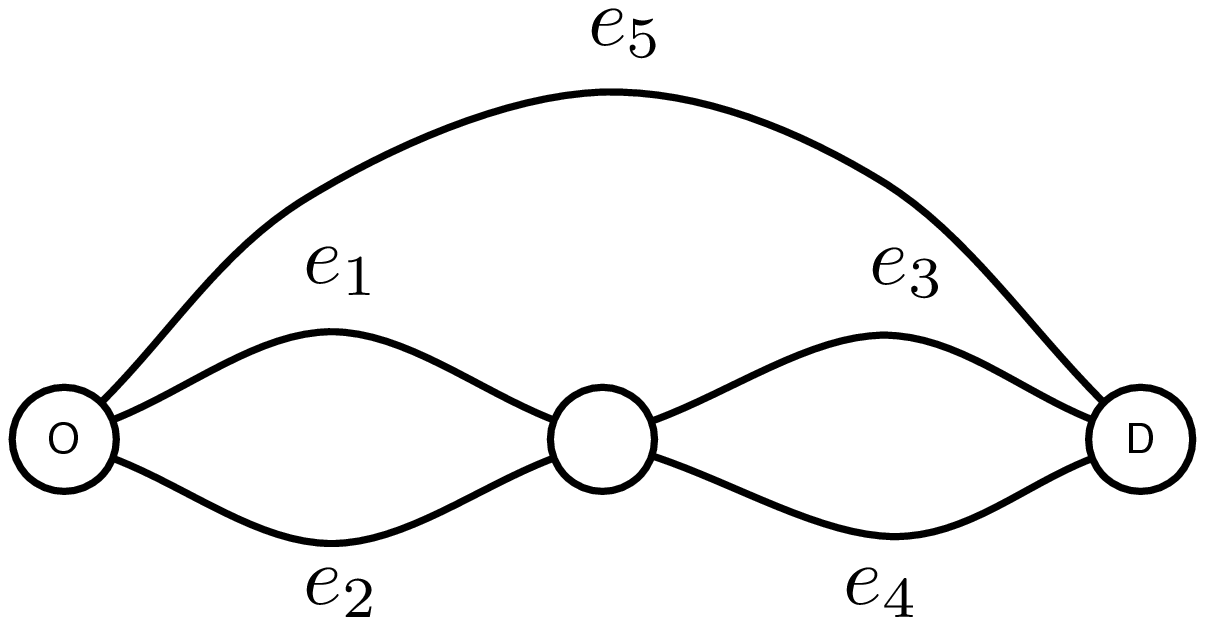}
                \caption{ }
                \label{fig:CE21}
                
        \end{subfigure}
~ 
\begin{subfigure}[b]{0.19\textwidth}
                \includegraphics[width=\textwidth]{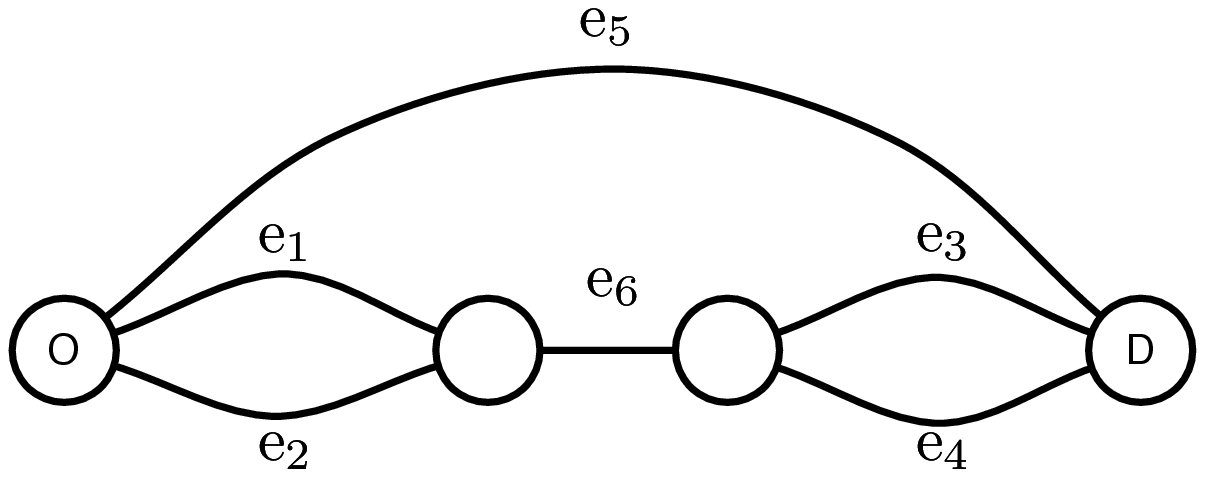}
                \caption{ }
                \label{fig:CE22}
        \end{subfigure}
~ 
\begin{subfigure}[b]{0.18\textwidth}
                \includegraphics[width=\textwidth]{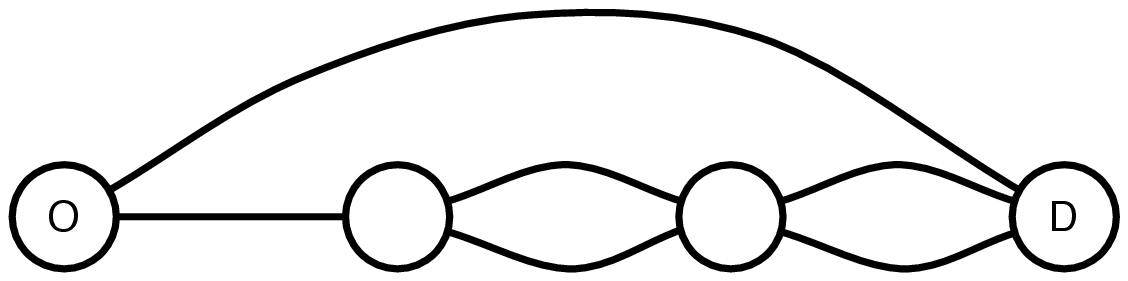}
                \caption{ }
                \label{fig:CE25}
        \end{subfigure}
~ 
\begin{subfigure}[b]{0.18\textwidth}
                \includegraphics[width=\textwidth]{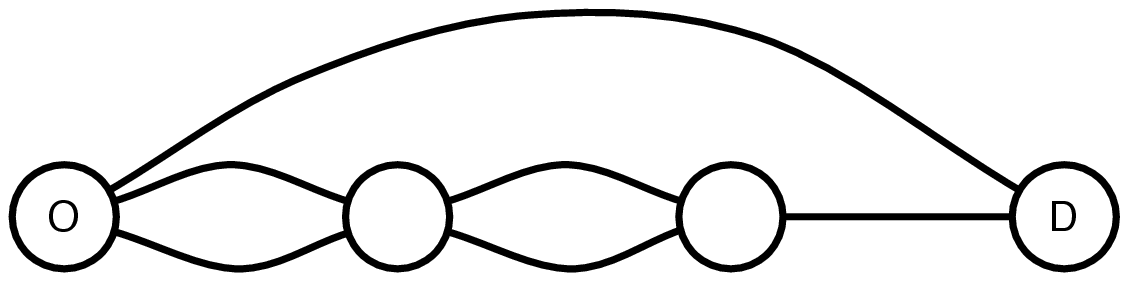}
                \caption{ }
                \label{fig:CE26}
        \end{subfigure}
~ 
\begin{subfigure}[b]{0.18\textwidth}
                \includegraphics[width=\textwidth]{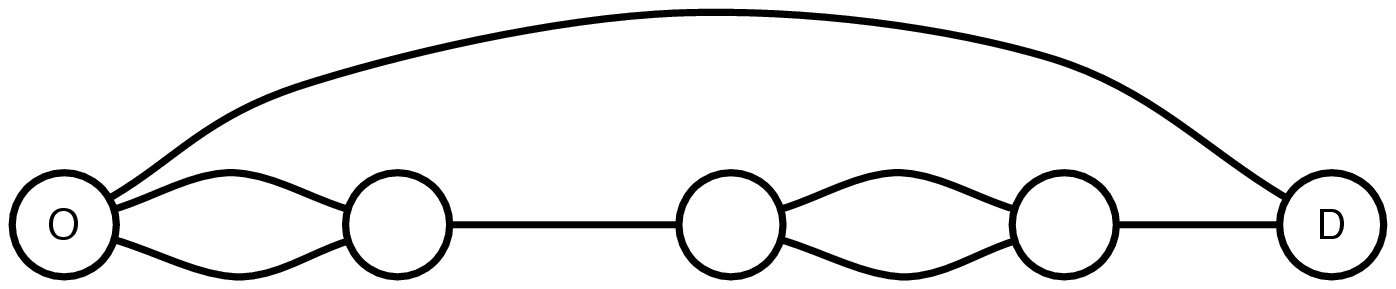}
                \caption{ }
                \label{fig:CE27}
        \end{subfigure}
~ 
\begin{subfigure}[b]{0.18\textwidth}
                \includegraphics[width=\textwidth]{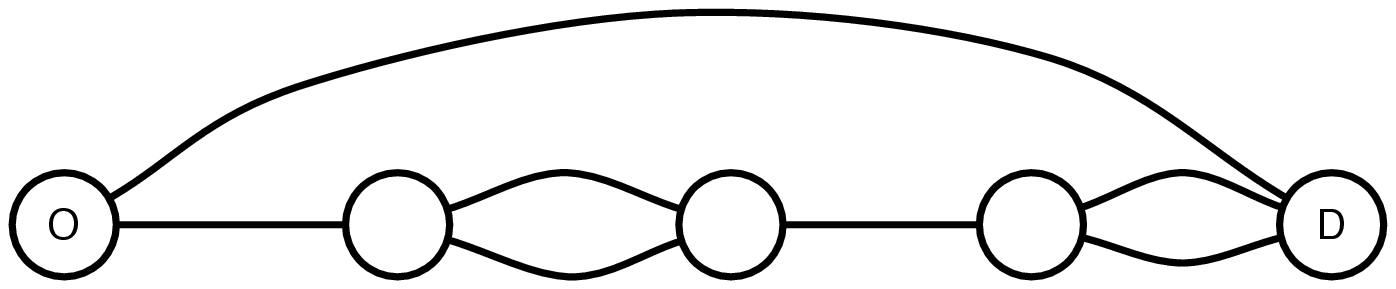}
                \caption{ }
                \label{fig:CE28}
        \end{subfigure}
~ 
\begin{subfigure}[b]{0.18\textwidth}
                \includegraphics[width=\textwidth]{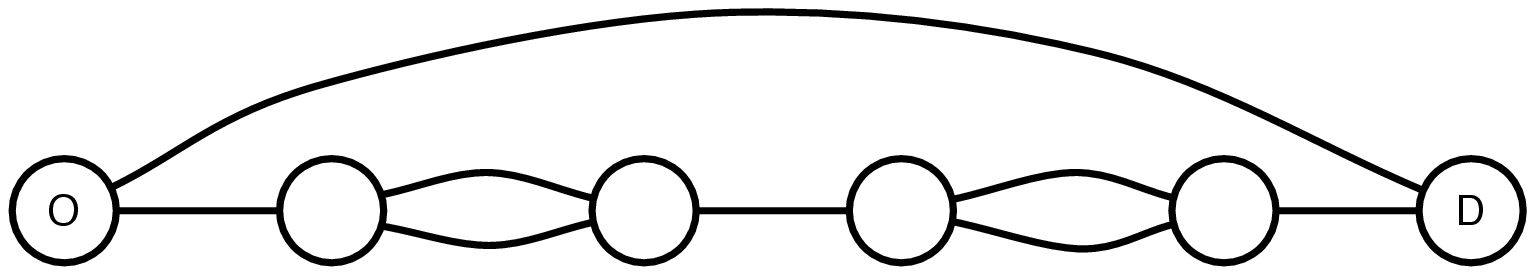}
                \caption{ }
                \label{fig:CE29}
        \end{subfigure}
~ 
\begin{subfigure}[b]{0.18\textwidth}
                \includegraphics[width=\textwidth]{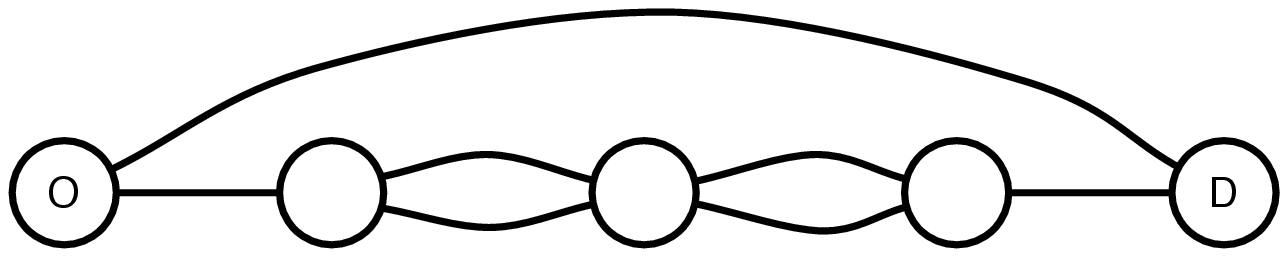}
                \caption{ }
                \label{fig:CE210}
        \end{subfigure}
\caption{Networks that cannot be embedded in SLI networks. }
\label{fig:CE2}
\end{figure}
We next provide a new characterization of SLI networks in terms of graph embedding using the
characterizations for SP and LI networks presented in Proposition \ref%
{pro:alternatedef}.

\begin{theorem}[\textbf{Characterization of SLI}]
\label{thm:SLIembedding} A network $G$ is SLI if and only
if none of the networks shown in Figure \ref{fig:CE2} are embedded in it.
\end{theorem}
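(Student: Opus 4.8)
The plan is to obtain both implications from the facts already in hand: the Wheatstone-embedding and ``crossing routes'' characterizations of series-parallel and LI networks in Proposition~\ref{pro:alternatedef}, and the description of SLI networks as series of LI blocks in Definition~\ref{def:SLI}. The backbone is a hereditary lemma: if $H$ is embedded in $G$ and $G$ is SLI, then $H$ is SLI. I would prove it by checking that each of the three moves of Definition~\ref{def:embedding}, performed in reverse, preserves the SLI property. Un-subdividing a non-terminal degree-two vertex does not alter the (unique) decomposition of a network into maximal biconnected pieces and single edges, and it maps LI blocks to LI blocks: subdividing an edge $e$ merely duplicates the $e$-coordinate in every route's $\mathbb{F}_2$-incidence vector, which preserves linear independence. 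Deleting an edge $e$ affects only the block $B$ containing $e$, and $B-e$, cut down to the union of its origin-destination paths, is again a series of LI blocks, since deleting $e$ simply drops the (zero) $e$-coordinate from the incidence vectors of the surviving routes. Finally, if a terminal has degree one, the extreme block of the series decomposition is that single pendant edge, whose contraction just removes a trivial series factor. Thus an embedding can never manufacture SLI-ness.

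The ``only if'' direction is then immediate: if $G$ is SLI then, by the hereditary lemma, no non-SLI network is embedded in $G$, so it suffices to verify that none of the nine networks of Figure~\ref{fig:CE2} is SLI. This is a finite check --- each listed network either has two routes traversing an edge in opposite directions, hence is not even series-parallel by Proposition~\ref{pro:alternatedef}(b), or is a biconnected series-parallel network whose route vectors are linearly dependent over $\mathbb{F}_2$ (equivalently, two of its routes violate the section condition of Proposition~\ref{pro:alternatedef}(a) at a common interior vertex), so its single biconnected piece is not an LI block.

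For the ``if'' direction I would argue the contrapositive: if $G$ is not SLI, some network of Figure~\ref{fig:CE2} is embedded in $G$. If $G$ is not series-parallel, Proposition~\ref{pro:alternatedef}(b) embeds the Wheatstone network (Figure~\ref{fig:fig13}), which is on the list. Otherwise take the maximal decomposition $G=G_1\cdots G_k$ into single edges and biconnected series-parallel pieces; by Definition~\ref{def:SLI} some biconnected piece $B=G_j$ fails to be LI, and $B$ is embedded in $G$ (starting from $B$, extend its origin by a pendant edge and grow that edge, by subdivisions and edge-additions, into the series-parallel network $G_1\cdots G_{j-1}$, and symmetrically at the destination). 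It then remains to prove the structural claim that every biconnected series-parallel non-LI network embeds one of the other eight networks of Figure~\ref{fig:CE2}. Here I would write $B$ as a parallel composition of its (at least two) branches: since distinct branches use disjoint edges, the route vectors of $B$ are dependent iff those of some branch $H$ are, and a branch $H$, being a series composition, is non-LI iff --- by the recursion in Definition~\ref{def:LI} --- it has two ``heavy'' (non-single-edge) series factors, or, recursively, one heavy factor that is itself non-LI. Selecting two heavy factors and contracting everything else down to single edges leaves a ``double digon'' in series, in parallel with a single edge; the two digons can be separated from one another and from $B$'s terminals by one edge or by nothing in each of the three available gaps, and these $2^3$ possibilities are exactly the eight networks --- each embedded in $B$, hence in $G$ by transitivity.

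The step I expect to be the main obstacle is exactly this last structural claim: upgrading ``biconnected, series-parallel, not LI'' to an explicit embedded member of the list, i.e., checking that the Wheatstone network together with the eight ``double-digon-plus-parallel-edge'' networks is the complete (and finite) set of embedding-minimal non-SLI networks. The care required is in the enumeration --- locating the heavy series factors correctly, verifying that a single parallel edge must be the biconnecting structure (any other way of making the double-digon core biconnected either already embeds one of the eight or introduces a Wheatstone, which is covered by the first case), and confirming that no network on the list embeds another. By contrast, the hereditary lemma, the embeddability of $B$ back into $G$, and the finite verification that all nine listed networks are non-SLI are routine once set up.
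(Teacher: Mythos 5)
Your proposal is correct, but it reaches Theorem~\ref{thm:SLIembedding} by different devices than the paper. For the ``only if'' direction the paper argues directly on the series-of-LI-blocks structure: the Wheatstone network is excluded because SLI networks are series-parallel, the remaining eight cannot sit inside a single LI block by Proposition~\ref{pro:alternatedef}(a), and they cannot straddle two blocks because every route of a series composition passes through the cut vertex, whereas the images of the routes corresponding to $e_5$ and to $e_1e_4,e_2e_3$ would share no internal vertex. You instead prove a heredity lemma (SLI is closed under the reverse embedding operations) and reduce to the finite check that the nine networks are non-SLI; this is more systematic and reusable, at the cost of verifying closure under un-subdivision, deletion (with restriction to routes), and terminal contraction --- your sketches of these are sound, though note that un-subdividing a cut vertex does merge two single-edge blocks, so the block decomposition is not literally unchanged, only the SLI property is. For the ``if'' direction the paper inducts along the series-parallel construction and, in the parallel step, extracts two crossing routes via Proposition~\ref{pro:alternatedef}(a) and the vertex indexing of Proposition~\ref{pro:alternatedef}(b), producing the vertices $A,A',B,B'$ whose coincidences with $O,v,D$ give exactly the eight cases; you instead pass to the block decomposition, embed the offending non-LI block $B$ back into $G$, and locate the ``double digon'' through the canonical parallel/series tree of $B$ via the recursion in Definition~\ref{def:LI} (two heavy series factors, found after recursing into a unique non-LI heavy factor). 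Both analyses land on the same $2^3$ gap patterns, so your enumeration matches the paper's list. Two remarks: your worry about ``other ways of making the double-digon core biconnected'' is moot --- since the non-LI block is biconnected and series-parallel it is a parallel composition, and any route of another branch supplies the parallel $O$--$D$ edge automatically (the paper gets this same route from the second network in its parallel step), so no Wheatstone case analysis is needed there; and both your argument and the paper's rely, at the same level of informality, on the fact that a network containing a subdivision of $H$ as a subnetwork with the same terminals can be built from $H$ by the three operations (the ``single edge is embedded in any network'' ear-growth argument), which you should state once if you write this up. What each approach buys: the paper's is shorter given Proposition~\ref{pro:alternatedef}; yours isolates a clean closure lemma and a decomposition-tree argument that dovetails with the linear-time recognition in Proposition~\ref{pro:recognitionSLI}.
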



%

The class of SLI networks is a subset of series-parallel networks and a superset
of linearly independent networks. 
This class plays an important role in
our characterization of networks that exhibit IBP. \cite{valdes1979recognition} provided an algorithm to determine whether a given network is SP in $O\left(|\mathcal{E}|+|V|\right)$ steps based on a tree decomposition of SP networks. This leads to the question whether one can find a linear time algorithm (i.e., linear in the number of vertices and edges) to recognize an SLI network. We next use the results of \cite{valdes1979recognition} to show that we can recognize whether a given network is SLI in linear time.  

\begin{proposition}
\label{pro:recognitionSLI} There exists an algorithm that can determine
whether a given network $G$ is SLI in $O\left( |\mathcal{E}|+|V| \right)$.
\end{proposition}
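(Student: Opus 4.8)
The plan is to reduce SLI-recognition to series-parallel recognition plus a single linear-time pass over a decomposition tree. First I would run the algorithm of \cite{valdes1979recognition} to test whether $G$ is series-parallel, in $O(|\mathcal{E}|+|V|)$ time; since every SLI network is series-parallel, if this test fails we report ``not SLI'' and stop. Otherwise the same algorithm delivers a decomposition tree of $G$, which after a linear-time normalization (contracting chains of internal nodes of the same type) we take in the \emph{alternating} form: the leaves are the edges of $G$; the internal nodes are labelled $S$ or $P$, strictly alternating along every root-to-leaf path; and the subnetwork $G_v$ attached to an $S$-node (resp.\ $P$-node) $v$ is the series (resp.\ parallel) composition of the subnetworks attached to its (two or more) children. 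This tree has $O(|\mathcal{E}|)$ nodes in total.

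Next I would annotate the tree bottom-up, computing for each node $v$ a flag ``$G_v$ is LI'' and a flag ``$G_v$ is SLI'', reading the rules straight off the recursive characterizations in Definitions \ref{def:LI} and \ref{def:SLI}. A leaf gets both flags. A $P$-node is LI iff all of its children are LI (part (iii) of Definition \ref{def:LI}, together with the fact that a subnetwork of an LI network is LI), and since a nontrivial parallel composition contains no internal cut vertex it is SLI iff it is LI. An $S$-node with children $c_1,\dots,c_k$ is LI iff at most one $c_i$ is a $P$-node and, if so, that $c_i$ is LI --- part (ii) of Definition \ref{def:LI}, which forces all constituents but one to be single edges (these are exactly the leaf children) --- and it is SLI iff every $c_i$ is SLI (part (ii) of Definition \ref{def:SLI} iterated); as the children of an $S$-node are $P$-nodes or leaves, ``SLI'' and ``LI'' agree on them, so an $S$-node is SLI iff all of its $P$-node children are LI. Finally $G$ is SLI iff the root is flagged SLI. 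Each node is processed in time proportional to its number of children, so this pass runs in $O(|\mathcal{E}|)$ and the overall bound is $O(|\mathcal{E}|+|V|)$.

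The step needing the most care is the correctness of these tree rules. The LI rule relies on the equivalence of the recursive definition of LI with the other characterizations in Proposition \ref{pro:alternatedef} --- in particular on the fact that in any series decomposition of an LI network all constituents but one are single edges --- while the SLI rule invokes the ``series of LI blocks'' reformulation of Definition \ref{def:SLI} (proved in Appendix \ref{app:proofofDefSLI}): this is exactly what guarantees that the maximal $P$-rooted subtrees of the decomposition tree, together with the single-edge series components at the top, are precisely the LI blocks of $G$, so that no block is ever split across two children of an $S$-node. A few degenerate inputs --- $G$ a single edge, several parallel edges between two vertices, or $G$ itself biconnected (so the root is a $P$-node) --- are covered by the same rules but worth checking by hand. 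An essentially equivalent route, which I would adopt if it reads more cleanly, is to first compute the biconnected components of $G$ in linear time \citep[see][Chapter~3]{bondy1976graph}, observe that they form a path from $O$ to $D$, and test each block for LI-ness directly; this rests on the same two facts and inherits the same linear running time.
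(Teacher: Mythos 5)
Your proposal is correct and follows essentially the same route as the paper's proof: run the series-parallel recognition algorithm of \cite{valdes1979recognition}, take the resulting decomposition tree, and check LI/SLI bottom-up using the recursive characterizations in Definitions \ref{def:LI} and \ref{def:SLI} (the paper phrases the same rules on the binary tree, with your ``at most one non-leaf child of an $S$-node'' condition appearing there as ``one subtree carries only $S$ labels''). Your alternating multi-child normalization and the alternative via biconnected components are only cosmetic variations on that argument.
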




\section{Informational Braess' Paradox}\label{sec:IBP} 
We first present the classical Braess' Paradox (BP) which is defined for a
traffic network with single type of users with $\mathcal{E}_{1}=\mathcal{E}$%
, denoted by $(G,\mathcal{E}_{1},s_{1},\mathbf{c})$.

\begin{definition}[\textbf{Braess' Paradox (BP)}]
\textup{\ Consider a traffic network with single information type $(G,%
\mathcal{E}_{1},s_{1},\mathbf{c})$. BP occurs if there exists another set of
cost functions $\hat{\mathbf{c}}$ with $\hat{c}_{e}(x)\leq {c}_{e}(x)$ for
all $e\in \mathcal{E}$ and $x \in \mathbb{R}^+$, such that the
equilibrium cost of $(G,\mathcal{E}_{1},s_{1},\hat{\mathbf{c}})$
is strictly larger than the equilibrium cost of $(G,\mathcal{E}_{1},s_{1},{%
\mathbf{c}})$. }
\end{definition}

BP refers to an unexpected increase in equilibrium cost in response to a decrease in edge costs. We next
discuss the Informational Braess' Paradox (IBP), which arises when providing more
information to a subset of users in a traffic network increases those users' costs. 

\begin{definition}[\textbf{Informational Braess' Paradox (IBP)}]\label{def:IBP}
\textup{\ Consider a traffic network with multiple information types $(G,%
\mathcal{E}_{1:K},s_{1:K},\mathbf{c})$. IBP occurs if there exist expanded information sets $\tilde{\mathcal{E}}_{1:K}$ with $\mathcal{E}%
_{1}\subset \tilde{\mathcal{E}}_{1} \subseteq \mathcal{E}$ and $\tilde{\mathcal{E}}_{i}=\mathcal{E}%
_{i}$, for $i=2,\dots ,K$, such that the equilibrium cost of type $1$ in $(G,\tilde{\mathcal{E}}_{1:K},s_{1:K},\mathbf{c})$ is strictly larger
than the equilibrium cost of type $1$ in $(G,\mathcal{E}_{1:K},s_{1:K},
\mathbf{c})$. We denote the equilibrium cost of type $i\in [K]$ before and after the expansion of information sets by $c^{(i)}$ and $\tilde{c}^{(i)}$, respectively. }
\end{definition}

The choice of type $1$ users in this definition is without loss of
generality, i.e., we assume that the information set of only one type
expands and the information sets of the rest of the types remain the same. In comparing IBP to BP, first note that BP occurs in a network if and only if a special case of BP occurs in which we decrease the cost of one of the edges from infinity to its actual cost, i.e., equilibrium cost increases by adding a new edge to the network.  The \textquotedblleft if \textquotedblright part holds by definition and the  \textquotedblleft only if \textquotedblright part holds because the special case of BP occurs in Wheatstone network (as presented in Example \ref{example:notSLI}(a)) and Wheatstone network is embedded in any network that features BP as shown by \cite{milchtaich2005topological}. In light of this, it follows that the occurrence of IBP is a generalization of that of BP since addition of a new edge to the network can be viewed as expansion of the information set of a type to include that edge in a traffic network with single information type.

The next example shows that IBP occurs in all networks shown in Figure \ref{fig:CE2}, i.e.,  all the basic networks that are embedded in non-SLI networks. 

\begin{example}
\label{example:notSLI} \textup{ In this example we will show that for all
networks shown in Figure \ref{fig:CE2}, there exists an assignment of cost
functions along with information sets for which IBP occurs.
\begin{itemize}
\item [(a)] IBP occurs for Wheatstone network shown in Figure \ref{fig:CE23}.  This follows from the occurrence of BP on Wheatstone network as shown in \cite{braess1968paradoxon}. We will provide the example for the sake of completeness in Appendix \ref{app:exampleinfinitelymany}.
\item [(b)] Consider the network shown in Figure \ref{fig:CE21} with cost functions given by $c_{e_1} (x)= \frac{1}{2} x$, $c_{e_2} (x)= x+ \frac{3}{4}$, $c_{e_3} (x)= \frac{4}{3}x$, $c_{e_4}(x)=2$ and $c_{e_5}(x)=x$. The information sets are $\mathcal{E}_1=\{e_2, e_3, e_5\}$, $\mathcal{E}_2=\{e_1, e_4, e_5\}$, and $\tilde{\mathcal{E}}_1=\{e_1, e_2, e_3, e_5\}$. For $s_1=\frac{13}{4}$ and $s_2=1$, the equilibrium flows are
\begin{align*}
&f^{(2)}_{e_1 e_4}= 1, f^{(2)}_{e_5}= 0, f^{(1)}_{e_2 e_3}= \frac{3}{4}, f^{(1)}_{e_5}= \frac{10}{4},\\
&\tilde{f}^{(2)}_{e_1 e_4}= 0, \tilde{f}^{(2)}_{e_5}= 1, \tilde{f}^{(1)}_{e_2 e_3}=0, \tilde{f}^{(1)}_{e_1 e_3}=\frac{6}{4}, \tilde{f}^{(1)}_{e_5}= \frac{7}{4}.
\end{align*}
The resulting equilibrium costs are $c^{(1)}=c^{(2)} = \frac{10}{4}$ and $\tilde{c}^{(1)} = \tilde{c}^{(2)}= \frac{11}{4}$. Since $\tilde{c}^{(1)} > c^{(1)}$, IBP occurs in this network. The main intuition for this example is as follows. After adding $e_1$ to $\mathcal{E}_1$, type $1$ users will no longer use $e_2e_3$ and instead redirect part of their flow over $e_1e_3$. This in turn will increase the cost of $e_1 e_4$ for type $2$ users, and induce them to redirect all their flow from $e_1 e_4$ to $e_5$. In balancing the costs of $e_1e_3$ and $e_5$ for type $1$ users, their equilibrium cost goes up. 
\item [(c)] Finally, for the networks shown in Figures \ref{fig:CE22}, \ref{fig:CE25}, \ref{fig:CE26}, \ref{fig:CE27}, \ref{fig:CE28}, \ref{fig:CE29}, and \ref{fig:CE210}, IBP occurs if we use the same setting as part (b) and include extra edges in all information sets with zero cost. 
\end{itemize}
}
\end{example}
\begin{remark}\label{remark:expansionofexample} \textup{ In Appendix \ref{app:expansionofexample}, we show that Example \ref{example:notSLI}(b) is not degenerate and provide an infinite set of (affine) cost functions for which IBP occurs in this network. Similar to Example \ref{example:notSLI}(c), this argument extends to show that there are infinitely many cost functions for which IBP occurs in networks shown in Figures \ref{fig:CE22},$\dots$, \ref{fig:CE210}.  
Finally, for the network shown in Figure \ref{fig:CE23}, there are infinitely many cost functions for which BP occurs when edge $e_5$ is added, hence IBP occurs as well (see e.g. \cite{steinberg1983prevalence}). 
}
\end{remark}

In a seminal paper, \cite{milchtaich2006network} provided necessary and sufficient conditions on the network topology under which BP occurs. In particular, \cite{milchtaich2006network} showed that for a given traffic network with single information type
$(G,\mathcal{E}_{1},s_{1},\mathbf{c})$, BP does not occur if and only if $G$
is SP. That is, if $G$ is SP, then for any assignment of cost functions $
\mathbf{c}$ and traffic demand, BP does not occur, and  if $G$ is not SP, then there exists an assignment of cost
functions $\mathbf{c}$ for which BP occurs. 

We next investigate conditions on the topology of the network under which
IBP occurs. Similar to the characterization provided by \cite{milchtaich2006network}, we will identify classes of networks for which IBP does not
occur regardless of the cost functions of the edges. Since, as already
noted, IBP is a strict generalization of BP, we will see that IBP can occur
in a broader class of networks, underscoring the problem mentioned in the
introduction that IBP is likely to be a more pervasive problem.\textbf{\ } 

\section{Characterization of Informational Braess' Paradox}

\label{sec:characIBP}

In this section, after establishing the key lemmas which underpin the rest
of our analysis, we provide our main characterization of IBP. In Subsection \ref{sec:IBPres} we provide a characterization for IBP for a more restricted type of change in information sets. We conclude this section with a discussion of extensions of our results to multiple origin-destination pairs. 

\subsection{Three Key Lemmas}

The following lemmas identify properties of the traffic network
consisting of heterogeneous users over an LI network. 

\begin{lemma}\label{lemma:LI12}
\begin{itemize}
\item [(a)] Given an LI network $G$ , let $f^{(1:K)}$ and $\tilde{f}^{(1:K)}$ be two arbitrary
non-identical feasible flows for two traffic networks $(G,\mathcal{E}_{1:K},s_{1:K},\mathbf{c})$ and $(G,\mathcal{E}_{1:K},\tilde{s}_{1:K},\mathbf{c})$, respectively. If $\sum_{i=1}^{K}s_{i}\geq
\sum_{i=1}^{K}\tilde{s}_{i}$, then there exists a route $r$ such that $%
\sum_{i=1}^{K}f_{r}^{(i)}>\sum_{i=1}^{K}\tilde{f}_{r}^{(i)}$ and $f_{e}\geq
\tilde{f}_{e}$, for all $e\in r$.
\item [(b)] Given an LI network $G$,  let $c^{(i)}$ and $\tilde{c}^{(i)}$ denote
the equilibrium cost of type $i\in [K]$ users in traffic networks $(G,\mathcal{E}_{1:K},s_{1:K},\mathbf{c})$ and $(G,\tilde{\mathcal{E}}_{1:K},s_{1:K},\mathbf{c})$, respectively.
If  
$\mathcal{E}_{1}\subseteq \tilde{\mathcal{E}}_{1}$ and 
$\tilde{\mathcal{E}}_{i}=\mathcal{E}_{i}$, for  $i=2,\dots ,K$, then there exists
some $i\in \lbrack K \rbrack$ such that $\tilde{c}^{(i)}\leq c^{(i)}$.
\end{itemize}
\end{lemma}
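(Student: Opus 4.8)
The plan is to prove part (a) by structural induction on the recursive construction of LI networks in Definition~\ref{def:LI}, and then to deduce part (b) from part (a) together with the optimality property built into the definition of an ICWE (namely, that each type's equilibrium cost equals the minimum, under the prevailing edge flows, of the costs of the routes available to it). Throughout I read ``non-identical'' in part (a) at the level of the induced total route-flow vectors $(f_r)_{r\in\mathcal{R}}$ and $(\tilde f_r)_{r\in\mathcal{R}}$: this is the content actually needed, it holds automatically whenever $\sum_i s_i>\sum_i\tilde s_i$, and if these two vectors coincide then so do all edge flows and there is nothing to prove.

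\emph{Part (a).} The base case is a single edge $e_0$: its unique route carries $\sum_i s_i\geq\sum_i\tilde s_i$, and if the route flows differ the inequality is strict, so the conclusion holds. If $G=G'\cdot e_0$ with $G'$ LI (up to reversing the network the single edge is appended at the end), then every route of $G$ is uniquely of the form $r'+e_0$ with $r'$ a route of $G'$; all traffic traverses $e_0$, so $f_{e_0}=\sum_i s_i\geq\sum_i\tilde s_i=\tilde f_{e_0}$, and the restrictions of $f^{(1:K)}$ and $\tilde f^{(1:K)}$ to $G'$ are non-identical feasible flows with total demands $\sum_i s_i\geq\sum_i\tilde s_i$. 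By the induction hypothesis there is a route $\rho'$ of $G'$ with $f_{\rho'}>\tilde f_{\rho'}$ and $f_e\geq\tilde f_e$ on $\rho'$; then $\rho'+e_0$ works in $G$, since the flows on $G'$-edges are unchanged and $e_0$ satisfies the weak inequality for free. If $G=G_1\parallel G_2$ with $G_1,G_2$ LI, every route lies entirely in one part, so $\mathcal{R}=\mathcal{R}(G_1)\sqcup\mathcal{R}(G_2)$; writing $S_j,\tilde S_j$ for the total $f$- and $\tilde f$-flow carried by routes in $G_j$, we have $S_1+S_2=\sum_i s_i\geq\sum_i\tilde s_i=\tilde S_1+\tilde S_2$, hence $S_j\geq\tilde S_j$ for at least one $j$, and since the overall route flows differ they differ on $G_1$ or on $G_2$. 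If the part on which they differ has $S_j\geq\tilde S_j$, apply the induction hypothesis to $G_j$ directly; otherwise the flows differ on, say, $G_1$ while $S_1<\tilde S_1$, which forces $S_2>\tilde S_2$, so the restricted flows on $G_2$ also differ and the induction hypothesis applies to $G_2$. In every case the route produced lies in $G$ and its edges, all belonging to a single part, retain their flows, so both conclusions transfer.

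\emph{Part (b).} Let $f$ and $\tilde f$ be ICWE flows of $(G,\mathcal{E}_{1:K},s_{1:K},\mathbf c)$ and $(G,\tilde{\mathcal{E}}_{1:K},s_{1:K},\mathbf c)$, and suppose for contradiction that $\tilde c^{(i)}>c^{(i)}$ for every $i\in[K]$. If the route-flow vectors of $f$ and $\tilde f$ coincide, then so do all edge flows, and using $\mathcal{R}_1\subseteq\tilde{\mathcal{R}}_1$ we get $\tilde c^{(1)}=\min_{r\in\tilde{\mathcal{R}}_1}c_r(f)\leq\min_{r\in\mathcal{R}_1}c_r(f)=c^{(1)}$, a contradiction. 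Otherwise the route-flow vectors differ; since the two networks have the same demands $s_{1:K}$, part (a) (with $\tilde s_{1:K}=s_{1:K}$) yields a route $r^{*}$ with $f_{r^{*}}>\tilde f_{r^{*}}\geq0$ and $f_e\geq\tilde f_e$ for all $e\in r^{*}$, so by monotonicity of the $c_e$, $c_{r^{*}}(f)\geq c_{r^{*}}(\tilde f)$. Because $f_{r^{*}}>0$, some type $i$ has $f_{r^{*}}^{(i)}>0$, hence $r^{*}\in\mathcal{R}_i$ and $c^{(i)}=c_{r^{*}}(f)$; and since $\mathcal{E}_i\subseteq\tilde{\mathcal{E}}_i$ we have $r^{*}\in\tilde{\mathcal{R}}_i$, so $\tilde c^{(i)}\leq c_{r^{*}}(\tilde f)$. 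Chaining, $\tilde c^{(i)}\leq c_{r^{*}}(\tilde f)\leq c_{r^{*}}(f)=c^{(i)}$, contradicting $\tilde c^{(i)}>c^{(i)}$.

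The main obstacle is the parallel case of part (a): one must invoke the induction hypothesis on a component that both carries (weakly) at least as much total flow under $f$ as under $\tilde f$ and has non-identical restricted flows, yet a priori the component on which the flows differ need not be the one carrying more flow. This is repaired by the short counting argument above, which crucially uses that the total demand did not increase; a secondary point is fixing the correct reading of ``non-identical'' (route totals, not per-type flows), which already matters in the base case. Given part (a), part (b) is then nearly immediate.
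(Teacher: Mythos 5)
Your proof is correct, but it takes a genuinely different route from the paper's on part (a). The paper disposes of part (a) by aggregating the $K$ types into a single-type flow $f_r=\sum_i f_r^{(i)}$ and citing \citet[Lemma 5]{milchtaich2006network} for single-type flows on LI networks; you instead reprove that lemma from scratch by structural induction on the recursive construction of LI networks in Definition \ref{def:LI} (single edge; LI network in series with an edge; parallel join of two LI networks), with the counting argument $S_1+S_2\geq \tilde S_1+\tilde S_2$ correctly resolving which parallel component to recurse on. Part (b) is essentially the paper's argument: apply part (a) with $\tilde s_{1:K}=s_{1:K}$ to get a route $r^{*}$ with $f_{r^{*}}>0$ and weakly decreased edge flows, pick a type using $r^{*}$, and chain the ICWE inequalities using $\mathcal{E}_i\subseteq\tilde{\mathcal{E}}_i$ (the paper adds the small observation, implicit in your write-up, that $f^{(1:K)}$ is also feasible for the expanded-information network so that part (a) applies verbatim). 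Two of your choices are actually slightly more careful than the paper: reading ``non-identical'' at the level of aggregate route flows (the per-type reading would make (a) false on a single edge with equal total demands), and separately handling in (b) the degenerate case where the two equilibria induce identical route flows, which the paper's appeal to part (a) silently skips. What each approach buys: the paper's reduction is shorter and reuses a known result, while yours is self-contained, makes explicit exactly where the LI structure is used, and in effect supplies the proof of the cited lemma.
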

This lemma directly follows from \citet[Lemma 5 and Theorem 3]{milchtaich2006network}. The first part of the this lemma shows that in an LI network, if the total traffic increases, then there exists at least one route whose flow strictly increases, and the flow on each of its edges weakly increases. The second part shows that in an LI network, if we expand the
information set of type $1$ users, then the equilibrium cost of at
least one of the types does not increase. In fact, a similar argument shows that even if we expand the information set of multiple types, then the equilibrium cost of at
least one of the types does not increase (see \citet[Theorem 3]{milchtaich2006network}). Note that this result is not sufficient for
establishing that IBP does not occur over LI networks because what we need to
establish is that it is the equilibrium cost of type $1$ users that does not
increase. 
For completeness, in Appendix \ref{app:Proof:lemma:LI12}, we show how this lemma follows from the results of \cite{milchtaich2006network}. 

The next lemma shows a property of equilibrium flows and equilibrium costs in a network which is the result of attaching two networks in series. We use the following definition to state the lemma. Suppose $f^{(1:K)}$ is a feasible flow for $(G, \mathcal{E}_{1:K}, \mathbf{s}_{1:K}, \mathbf{c})$ where $G$ is the result of attaching $G_1$ and $G_2$ in series. We denote the attaching point of $G_1$ and $G_2$ by $D_1$. The \emph{restriction} of $f^{(1:K)}$ to $G_1$ (similarly to $G_2$) is defined as $\bar{f}^{(1:K)}= (\bar{f}^{(1)}, \dots, \bar{f}^{(K)})$ where the flow of type $i$ on any route $\bar{r}$ in $G_1$ is the summation of the flows of type $i$ on all routes of $G$ which contain $\bar{r}$. Formally, for any $i \in [K]$ we have $\bar{f}^{(i)} (\bar{r})= \sum_{r \in \bar{\mathcal{R}}_i (\bar{r})} f^{(i)}(r)$, where $\bar{\mathcal{R}}_i (\bar{r})= \{ r \in \mathcal{R}_i~:~ r_{O D_1}= \bar{r} \}$. 
Note that $\bar{f}^{(1:K)}$ is a feasible flow on $G_1$.
\begin{lemma}\label{lem:sumflowseries}
\begin{itemize}
\item [(a)] If $G$ is the result of attaching two networks $G_1$ and $G_2$ in series, then the restriction of an equilibrium flow for $G$ to each of $G_1$ and $G_2$ is an equilibrium flow. 
\item [(b)] If $G$ is the result of attaching two networks $G_1$ and $G_2$ in series, then the equilibrium cost of any type on $G$ is the summation of the equilibrium costs of that type on $G_1$ and $G_2$.
\end{itemize}
\end{lemma}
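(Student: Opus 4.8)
The plan is to use the single structural fact that defines a series attachment: the junction vertex $D_1$ is a cut vertex of $G$, so every route $r$ of $G$ decomposes uniquely as $r = r_{OD_1} + r_{D_1D}$, where $r_{OD_1}$ is a route of $G_1$ (from $O$ to $D_1$) with $r_{OD_1}\subseteq \mathcal{E}(G_1)$ and $r_{D_1D}$ is a route of $G_2$ (from $D_1$ to $D$) with $r_{D_1D}\subseteq \mathcal{E}(G_2)$. The two traffic networks to compare against are $(G_1,\mathcal{E}^{(1)}_{1:K},s_{1:K},\mathbf{c})$ and $(G_2,\mathcal{E}^{(2)}_{1:K},s_{1:K},\mathbf{c})$, where $\mathcal{E}^{(j)}_i:=\mathcal{E}_i\cap\mathcal{E}(G_j)$; since every $\mathcal{R}_i$ is nonempty, splitting any route in it shows each restricted route set $\mathcal{R}^{(j)}_i$ is nonempty and, more precisely, $\mathcal{R}^{(1)}_i=\{\,r_{OD_1}:r\in\mathcal{R}_i\,\}$ (and symmetrically for $G_2$), with the sets $\bar{\mathcal{R}}_i(\bar r)$ partitioning $\mathcal{R}_i$ as $\bar r$ ranges over $\mathcal{R}^{(1)}_i$. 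A first observation, used throughout, is that restriction preserves edge flows: for $e\in\mathcal{E}(G_1)$ one has $\bar f_e=f_e$, because $e\in r$ iff $e\in r_{OD_1}$; hence edge costs are unchanged and, for any route $r=r_{OD_1}+r_{D_1D}$ of $G$, $c_r(f^{(1:K)})=c_{r_{OD_1}}(\bar f^{(1:K)})+c_{r_{D_1D}}(\hat f^{(1:K)})$, where $\bar f$ and $\hat f$ denote the restrictions to $G_1$ and $G_2$. Feasibility of each restriction follows by summing over the partition above.

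For part (a), let $f^{(1:K)}$ be an ICWE of $G$ and $\bar f^{(1:K)}$ its restriction to $G_1$; fix a type $i$ and routes $\bar r,\tilde{\bar r}\in\mathcal{R}^{(1)}_i$ with $\bar f^{(i)}_{\bar r}>0$. Since $\bar f^{(i)}_{\bar r}=\sum_{r\in\bar{\mathcal{R}}_i(\bar r)}f^{(i)}_r$, some $r\in\mathcal{R}_i$ with $r_{OD_1}=\bar r$ has $f^{(i)}_r>0$; write $r=\bar r+\rho$ with $\rho=r_{D_1D}$. The concatenation $\tilde r:=\tilde{\bar r}+\rho$ is again a route of $G$ (legitimate because $G_1$ and $G_2$ meet only at $D_1$), and all its edges lie in $\mathcal{E}_i$ (those of $\tilde{\bar r}$ since $\tilde{\bar r}\in\mathcal{R}^{(1)}_i$, those of $\rho$ since $r\in\mathcal{R}_i$), so $\tilde r\in\mathcal{R}_i$. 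Applying \eqref{eq:defCWE} to the pair $r,\tilde r$ and cancelling the common tail cost $c_\rho(\hat f^{(1:K)})$ --- common because edge costs depend only on the aggregate edge flows, which $r$ and $\tilde r$ share on $\rho$ --- yields $c_{\bar r}(\bar f^{(1:K)})\le c_{\tilde{\bar r}}(\bar f^{(1:K)})$, i.e.\ the ICWE condition on $G_1$. The argument for $G_2$ is symmetric, appending a common $G_1$-head to two $G_2$-routes.

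For part (b), fix a type $i$. By Theorem~\ref{thm:existsnceCWE} the quantities $c^{(i)}$, $c^{(i)}_{G_1}$, $c^{(i)}_{G_2}$ are all well defined, so it suffices to verify the identity for one ICWE of $G$. Take such an $f^{(1:K)}$ and a route $r=r_{OD_1}+r_{D_1D}\in\mathcal{R}_i$ with $f^{(i)}_r>0$. Then $\bar f^{(i)}_{r_{OD_1}}\ge f^{(i)}_r>0$ and $\hat f^{(i)}_{r_{D_1D}}\ge f^{(i)}_r>0$, so by part (a) the restrictions $\bar f^{(1:K)}$ and $\hat f^{(1:K)}$ are ICWEs realizing the type-$i$ equilibrium cost along $r_{OD_1}$ and $r_{D_1D}$ respectively; hence $c^{(i)}=c_r(f^{(1:K)})=c_{r_{OD_1}}(\bar f^{(1:K)})+c_{r_{D_1D}}(\hat f^{(1:K)})=c^{(i)}_{G_1}+c^{(i)}_{G_2}$.

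I expect no conceptual obstacle; the work is bookkeeping. The one place to be careful is the lifting/cancellation step in part (a): one must check that $\tilde{\bar r}+\rho$ is a genuine route lying in $\mathcal{R}_i$ (using both the cut-vertex structure and the fact that the information set restricted along the series decomposition behaves like a product) and that its $G_2$-portion contributes exactly the same cost to $c_r$ and to $c_{\tilde r}$ (using that each $c_e$ is a function of the total flow $f_e$ only). A slightly slicker but not obviously shorter alternative would invoke Proposition~\ref{pro:potential}: $\Phi_G=\Phi_{G_1}+\Phi_{G_2}$ as a function of edge flows, and a short coupling argument shows the feasible aggregate edge-flow set of $G$ is the product of those of $G_1$ and $G_2$, so minimizing $\Phi_G$ decomposes into minimizing $\Phi_{G_1}$ and $\Phi_{G_2}$; I would keep the direct argument as the primary proof and mention this only in passing.
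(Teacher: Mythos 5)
Your proof is correct and follows essentially the same route as the paper's: decompose each route of $G$ at the junction vertex, transfer the equilibrium inequality on $G_1$ (resp.\ $G_2$) by concatenating the alternative route with the common tail (resp.\ head) and cancelling its cost, and obtain part (b) by splitting the cost of a positive-flow route using part (a). Your additional bookkeeping (explicit restricted information sets, the appeal to essential uniqueness, and the optional potential-function remark) only makes explicit what the paper leaves implicit.
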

The third lemma shows our key lemma that we will use in the proof of
Theorem \ref{thm:SLI}. Intuitively, this lemma states that in an LI network, if we decrease the traffic 
on one subset of routes $\mathcal{R}_A$ of the network and reroute it through the rest of the routes in the network, denoted by $\mathcal{R}_B=\mathcal{R}\setminus \mathcal{R}_A$, then the maximum cost improvement over all the routes in $\mathcal{R}_A$ cannot be smaller than the minimum cost improvement over all the routes in $\mathcal{R}_B$. This
result will enable us to establish that in an LI or SLI network, the
reallocation of traffic due to one type of users obtaining more information
cannot harm that type.

\begin{lemma}
\label{lem:maxminLI} Given an LI network $G$,  we let  
$\mathcal{R}_{A}, \mathcal{R}_{B} \neq \emptyset$ denote a partition of routes 
$\mathcal{R}$, i.e., $\mathcal{R}_{B} = \mathcal{R} \setminus \mathcal{R}_A$. We let $f^{(1:K)}$ and 
$\tilde{f}^{(1:K)}$ be two feasible flows for traffic networks 
$(G,\mathcal{E}_{1:K},s_{1:K},\mathbf{c})$ and $(G,\mathcal{E}_{1:K},\tilde{s}_{1:K},\mathbf{c})$, respectively. For these two flows, we let the traffic over $\mathcal{R}_A$ and $\mathcal{R}_B$ be 
 $s_{A}=\sum_{r\in \mathcal{R}_{A}}\sum_{i=1}^{K}f_{r}^{(i)}$, $\tilde{s}_{A}=\sum_{r\in \mathcal{R}_{A}}\sum_{i=1}^{K}\tilde{f}_{r}^{(i)}$, $s_{B}=\sum_{r\in \mathcal{R}_{B}}\sum_{i=1}^{K}f_{r}^{(i)}$, and $\tilde{s}_{B}=\sum_{r\in \mathcal{R}_{B}}\sum_{i=1}^{K}\tilde{f}_{r}^{(i)}$.
If $\tilde{s}_{A}\leq s_{A}$ and $\tilde{s}_{B}\geq s_{B}$, then we have
\begin{equation*}
\max_{r\in \mathcal{R}_{A}}\{ c_{r}-\tilde{c}_{r} \} \geq \min_{r\in \mathcal{R}_{B}} \{ c_{r}-\tilde{c}_{r} \},
\end{equation*}
where for any route $r$, $c_r$ and $\tilde{c}_r$ denote the cost of this route with flows $f^{(1:K)}$ and $\tilde{f}^{(1:K)}$, respectively. 
\end{lemma}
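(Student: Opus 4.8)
The plan is to argue by contradiction: suppose that $\max_{r\in\mathcal{R}_A}\{c_r-\tilde c_r\}<\min_{r\in\mathcal{R}_B}\{c_r-\tilde c_r\}$, and derive a contradiction with the assumption that $\tilde s_A\le s_A$ and $\tilde s_B\ge s_B$. Under this hypothesis there is a single threshold $\delta$ with $c_r-\tilde c_r<\delta$ for every $r\in\mathcal{R}_A$ and $c_r-\tilde c_r>\delta$ for every $r\in\mathcal{R}_B$; equivalently $\tilde c_r>c_r-\delta$ on $\mathcal{R}_A$ and $\tilde c_r<c_r-\delta$ on $\mathcal{R}_B$. The point is that, route by route, the flow $\tilde f$ has made routes in $\mathcal{R}_A$ \emph{relatively more expensive} and routes in $\mathcal{R}_B$ \emph{relatively cheaper} compared with $f$, yet simultaneously has shifted mass \emph{off} $\mathcal{R}_A$ and \emph{onto} $\mathcal{R}_B$. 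I want to show these two facts are incompatible in an LI network.

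The key structural tool is the LI property in its algebraic form: the route-incidence vectors $\{\mathbf v_r:r\in\mathcal R\}$ are linearly independent over $\mathbb F_2$, and — more usefully here — each route $r$ contains a ``private'' edge $e_r$ that lies on no other route (this is exactly the defining property in Definition~\ref{def:LI}, and it holds recursively on the subnetworks produced by the series/parallel decomposition). Because every route has a private edge, the total-flow-on-routes vector determines, and is determined by, the edge flows in a transparent way on those private edges: $f_{e_r}=\sum_{i}f_r^{(i)}$. First I would use part~(a) of Lemma~\ref{lemma:LI12}, or rather a ``signed'' refinement of its proof, applied to the partition: since $\tilde s_A\le s_A$, total traffic over the routes in $\mathcal R_A$ weakly decreases, so there is some route in $\mathcal R_A$ whose flow weakly decreases and hence (monotone costs, private edge) whose cost weakly decreases — giving $c_r-\tilde c_r\ge 0$ for that route; symmetrically, since $\tilde s_B\ge s_B$ there is a route in $\mathcal R_B$ whose flow weakly increases and whose cost weakly increases, giving $c_r-\tilde c_r\le 0$ there. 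That already forces $\max_{\mathcal R_A}\{c_r-\tilde c_r\}\ge 0\ge\min_{\mathcal R_B}\{c_r-\tilde c_r\}$ — but only weakly, and only for \emph{one} route on each side, whereas the lemma's inequality must compare the max over all of $\mathcal R_A$ with the min over all of $\mathcal R_B$. So the real work is to promote this to the stated form.

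To do that I would induct on the series/parallel construction of the LI network $G$ (Definition~\ref{def:LI}(i)--(iii)), using Lemma~\ref{lem:sumflowseries} to handle series composition (costs add, the partition $\mathcal R_A/\mathcal R_B$ of $G$ may or may not respect the series split, so one has to track how the cost-improvement of a full route decomposes across the two series blocks) and handling parallel composition by noting that a route of $G_1\parallel G_2$ lies entirely in $G_1$ or entirely in $G_2$, so the partition restricts to each side and the demand inequalities $\tilde s_A\le s_A$, $\tilde s_B\ge s_B$ split additively — at which point the induction hypothesis on the smaller LI networks applies to at least one side, and a short argument (comparing the side where $\mathcal R_A$-traffic dropped against the side where $\mathcal R_B$-traffic rose, possibly the same side, possibly different) closes the case. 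The base case is a single edge, where $\mathcal R$ is a single route and the partition is trivial. The main obstacle I anticipate is the bookkeeping in the series case when $\mathcal R_A$ and $\mathcal R_B$ are not ``aligned'' with the series decomposition: a single route $r=r_{OD_1}+r_{D_1D}$ has $c_r-\tilde c_r=(c_{r_{OD_1}}-\tilde c_{r_{OD_1}})+(c_{r_{D_1D}}-\tilde c_{r_{D_1D}})$, and one must argue that the worst route in $\mathcal R_A$ and the best route in $\mathcal R_B$ can be ``witnessed'' by choosing, independently in each series block, routes that are simultaneously extremal there — which is where the private-edge/LI structure (and the fact that in an LI network the routes through $D_1$ in each block can be combined freely, cf.\ Proposition~\ref{pro:alternatedef}(a)) is essential. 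This is exactly the part of Milchtaich's argument that does not transfer verbatim, and I would expect the bulk of the proof to be a careful induction making this gluing precise.
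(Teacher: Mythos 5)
There is a genuine gap, and it sits exactly at the step you treat as settled. From Lemma~\ref{lemma:LI12}(a) applied to the flows restricted to $\mathcal{R}_A$ (total demands $s_A\ge\tilde s_A$) you do get a route $r\in\mathcal{R}_A$ whose $\mathcal{R}_A$-originated flow weakly decreases on every edge of $r$; but the route cost $c_r=\sum_{e\in r}c_e(f_e)$ is evaluated at the \emph{total} edge flows, and the non-private edges of $r$ also carry traffic from routes in $\mathcal{R}_B$, whose total has gone \emph{up}. So the inference ``flow weakly decreases, hence (monotone costs, private edge) cost weakly decreases, giving $c_r-\tilde c_r\ge 0$'' is false: the private edge controls only one edge of $r$, and the cost of $r$ can strictly increase because of added $\mathcal{R}_B$-flow on shared edges. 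You also misdiagnose what would remain if that step were true: producing one route in $\mathcal{R}_A$ with $c_r-\tilde c_r\ge 0$ and one route in $\mathcal{R}_B$ with $c_{r'}-\tilde c_{r'}\le 0$ already yields $\max_{\mathcal{R}_A}\{c_r-\tilde c_r\}\ge 0\ge\min_{\mathcal{R}_B}\{c_r-\tilde c_r\}$, i.e., the full statement of Lemma~\ref{lem:maxminLI} — this is precisely the paper's easy case. The real difficulty is that such a pair need not exist: it can happen that \emph{every} route in $\mathcal{R}_A$ becomes strictly more expensive under $\tilde f$ (or every route in $\mathcal{R}_B$ strictly cheaper), and that is the case your outline never resolves.

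The paper's proof supplies the missing mechanism. In the case where $c_r<\tilde c_r$ for all $r\in\mathcal{R}_A$, it takes the route $r$ given by Lemma~\ref{lemma:LI12}(a), observes that since the cost of $r$ nevertheless rose there must be an edge $e\in r$ on which the $\mathcal{R}_B$-flow strictly increased, and then uses the LI structural property (Proposition~\ref{pro:alternatedef}(a): through any internal vertex, all routes share either the $O$-side section or the $D$-side section) to show that all routes through $e$ agree on a common prefix and suffix, isolating a subnetwork $G'$ on which the restricted partition again satisfies the hypotheses $\tilde s'_A\le s'_A$, $\tilde s'_B\ge s'_B$. The induction hypothesis (on the number of edges) is applied to $G'$, and the conclusion is lifted back to $G$ because the common prefix and suffix contribute identical cost differences to every route through $e$. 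Your alternative plan — induction on the series/parallel construction with Lemma~\ref{lem:sumflowseries} for the series case — leaves exactly this gluing step (which you yourself call the bulk of the proof) unresolved, and without something playing the role of the ``edge where $\mathcal{R}_B$-flow strictly increased on a route whose $\mathcal{R}_A$-flow weakly decreased'' argument, the series-case bookkeeping cannot be closed.
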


Before proving this lemma for a general LI network, let us show it for the
special case where $G$ consists of parallel edges from $O$ to $D$. In this
case $\mathcal{R}_{A}$ and $\mathcal{R}_{B}$ are two disjoint sets of edges
from $O$ to $D$. 
Since $\tilde{s}_{A}\leq s_{A}$, there exists an edge $e_{A}$ in $\mathcal{R}_{A}$ such that $\tilde{f}_{e_{A}}\leq f_{e_{A}}$. Similarly, since $\tilde{s}_{B}\geq s_{B}$, there exists $e_{B}\in \mathcal{R}_{B}$ such that $\tilde{f}_{e_{B}}\geq f_{e_{B}}$. Since the cost functions are nondecreasing, we have  
\begin{equation*}
\max_{r\in \mathcal{R}_{A}}\{ c_{r}-\tilde{c}_{r} \} \geq
c_{e_{A}}(f_{e_{A}})-c_{e_{A}}(\tilde{f}_{e_{A}})\geq 0\geq
c_{e_{B}}(f_{e_{B}})-c_{e_{B}}(\tilde{f}_{e_{B}})\geq \min_{r\in \mathcal{R}_{B}} \{ c_{r}-\tilde{c}_{r} \},
\end{equation*}%
which is the desired result. The proof for the general case is by induction
on the number of edges and is included next.
\begin{proof}
We first note a
consequence of Proposition \ref{pro:alternatedef}: 
\\\textbf{Claim 1:} If a network $G$ is LI then for any vertex $v$, either the sections from $O$ to $v$ of all routes that pass through $v$ (which consists of $v$ and all the vertices and edges preceding it on the route) are identical or the sections from $v$ to $D$ of all routes that pass through $v$ (which consists of $v$ and all the vertices and edges succeeding it on the route) are identical. Consider a route $r$ that passes through $v$. First, note that since $G$ is SP, part (b) of Proposition \ref{pro:alternatedef} implies that the only common node of $r_{Ov}$ and $r_{vD}$ is $v$. Also, the only common node of $r'_{Ov}$ and $r'_{vD}$ is $v$.  Claim 1 follows since if the contrary holds, then there exist two routes $r=r_{Ov}+ r_{vD}$ and $r'=r'_{Ov}+r'_{vD}$ with a common vertex $v$ such that $r_{Ov} \neq r'_{Ov}$ and $r_{vD} \neq r'_{vD}$. This contradicts the statement of part (a) of Proposition \ref{pro:alternatedef}.
\newline
We now prove Lemma \ref{lem:maxminLI} using induction on the number of edges. For a single edge it evidently holds. For a general 
LI network, we have the
following cases:
\begin{itemize}
\item There exists $r \in \mathcal{R}_A$ such that $c_r \ge \tilde{c}_r$ and $%
r^{\prime }\in \mathcal{R}_B$ such that $c_{r^{\prime }} \le \tilde{c}%
_{r^{\prime }}$. This leads to
\begin{align*}
\max_{r\in \mathcal{R}_A} \{ c_r-\tilde{c}_r \} \ge c_r-\tilde{c}_r \ge 0 \ge
c_{r^{\prime }}-\tilde{c}_{r^{\prime }}\ge \min_{r\in \mathcal{R}_B} \{ c_r-\tilde{c%
}_r \},
\end{align*}
which concludes the proof in this case.
\item For any $r\in \mathcal{R}_{A}$, we have $c_{r}<\tilde{c}_{r}$. We break the proof into three steps. 

\noindent \textbf{Step 1:} There exists a route $r \in \mathcal{R}_A$ and an edge $e \in r$ with the following properties: (i) The flow on $r$ from $\tilde{s}_A$ is less than or equal to the flow on $r$ from $s_{A}$. (ii) The flow on $e$ from $\tilde{s}_B$ is larger than the flow on $e$ from $s_{B}$ and the flow on $e$ from $\tilde{s}_A$ is less than or equal to the flow on $e$ from $s_{A}$. 

\noindent This step follows from invoking part (a) of Lemma \ref{lemma:LI12}. Since $
\tilde{s}_{A}\leq s_{A}$, using part (a) of Lemma \ref{lemma:LI12} there
exists a route $r\in \mathcal{R}_{A}$ such that the flow on each edge of $r$
from $\tilde{s}_{A}$ is less than or equal to the flow from $s_{A}$.
However, we know that the overall cost of any $r\in \mathcal{R}_{A}$ has gone
up, i.e., $\tilde{c}_{r}>c_{r}$. This implies that there exists an edge $e\in r$ such that  the flow from $\tilde{s}%
_{B}$ on $e$ is more than the flow from $s_{B}$ on $e$.

\noindent \textbf{Step 2:} Let $\mathcal{R}_{e}$  denote the set of routes using edge $e=(u_e,v_e)$ as defined in Step 1. $\mathcal{R}_e$ has the following properties: (i) Either there exists a vertex $D'\in V$ such that all routes $r\in \mathcal{R}_e$ have a common path from $O$ to $v_e$ and a common path from $D'$ to $D$, or there exists a vertex $O'\in V$ such that all routes $r\in \mathcal{R}_e$ have a common path from $u_e$ to $D$ and a common path from $O$ to $O'$. Without loss of generality, we assume it is the former case. (ii) There exists a subnetwork $G'$ with origin $O'=v_e$ and destination $D'$ such that for the restricted parts of $\mathcal{R}_A$ and $\mathcal{R}_B$ over $G'$, denoted by $\mathcal{R}'_A$ and $\mathcal{R}'_B$, if  we let $s'_A$, $\tilde{s}'_A$, $s'_B$, and $\tilde{s}'_B$ to denote the corresponding traffic demands on $\mathcal{R}'_A$ and $\mathcal{R}'_B$, then we have $\tilde{s}'_A  \le s'_A$ and $\tilde{s}'_B \ge s'_B$.

\noindent  Using Claim
1, for an edge $e=(u_e,v_e)$ either there is a unique path from $O$ to $v_e$, or there is a unique path from $v_e$ to $D$; we assume without loss of generality it is the former case. We let $D'$ to be the first node on route $r$ such that all routes in $\mathcal{R}_e$ coincide from $D'$ to $D$. Therefore, all routes $r\in \mathcal{R}_e$ have a common path from $O$ to $v_e$ and a common path from $D'$ to $D$, showing the first property.\\
We next  show that the subnetwork consisting of all routes from $v_e$ to $D'$, denoted by $G'$, satisfies the second property. 
To see this, note that the flows on $G'$ are only the ones that are passing through edge $e$. From Step 1, we know that the flow on $e$ from $\tilde{s}_B$ is larger than the flow on $e$ from $s_{B}$ and the flow on $e$ from $\tilde{s}_A$ is less than or equal to the flow on $e$ from $s_{A}$, showing the second property.  

\noindent \textbf{Step 3:}  Using steps 1 and 2, and induction hypothesis for $G'$, we will show that 
\begin{align*}
\max_{r \in \mathcal{R}_A} \{ c_r -\tilde{c}_r \} \ge \min_{r \in
\mathcal{R}_B} \{ c_r - \tilde{c}_r \}.
\end{align*}
First note that $\mathcal{R}_e \cap \mathcal{R}_A \neq \emptyset$ since $r\in \mathcal{R}_A$ and $e\in r$.  Furthermore, $\mathcal{R}_e \cap \mathcal{R}_B \neq \emptyset$, since as explained in Step 1, the flow on $e$ from $\tilde{s}_B$ is strictly positive. This in turn shows that $\mathcal{R}'_A$ and $\mathcal{R}'_{B}$ are nonempty. 
Using Step 2, all the conditions of Lemma \ref{lem:maxminLI} hold for subnetwork $G'$. Therefore, we can use the induction hypothesis for LI network $G'$ to obtain 
\begin{align*}
\max_{r \in \mathcal{R}^{\prime }_A} \{ c_r -\tilde{c}_r \} \ge \min_{r \in
\mathcal{R}^{\prime }_B} \{ c_r - \tilde{c}_r \}.
\end{align*}
Using Step 2, for all the routes in $\mathcal{R}_e$ the costs of going
from $O$ to $O'$, denoted by $c_{O' \to O}$ are the same. Similarly, the costs of all routes in $\mathcal{R}_e$ going from $D'$ to $D$, denoted by $c_{D' \to D}$ are the same. 
 Therefore, we have
\begin{align*}
& \max_{r \in \mathcal{R}_A} \{ c_r- \tilde{c}_r \} \ge \max_{r \in \mathcal{R}_A \cap \mathcal{R}_e} \{ c_r- \tilde{c}_r \} =  (c_{O \to O'}- \tilde{c}_{O
\to O'})+\max_{r \in \mathcal{R}^{\prime }_A} \{ c_r -\tilde{c}_r \} +(c_{D'
\to D}- \tilde{c}_{D' \to D}) \\
& \ge (c_{O \to O'}- \tilde{c}_{O \to O'})+ \min_{r \in \mathcal{R}^{\prime
}_B} \{ c_r - \tilde{c}_r \} + (c_{D' \to D}- \tilde{c}_{D' \to D})= \min_{r \in \mathcal{R}_B \cap \mathcal{R}_e} \{ c_r - \tilde{c}_r \}
\ge \min_{r \in \mathcal{R}_B} \{ c_r - \tilde{c}_r \} ,
\end{align*}
which concludes the proof in this case.
\item For any $r \in \mathcal{R}_B$, we have $c_r > \tilde{c}_r$. The proof of this case is similar to the previous case. We state the three steps without repeating the reasoning for each of them. 

\noindent \textbf{Step 1:} There exists a route $r \in \mathcal{R}_B$ and an edge $e \in r$ with the following properties: (i) The flow on $r$ from $\tilde{s}_B$ is larger than or equal to the flow on $r$ from $s_{B}$. (ii) The flow on $e$ from $\tilde{s}_A$ is smaller than the flow on $e$ from $s_{A}$ and the flow on $e$ from $\tilde{s}_B$ is larger than or equal to the flow on $e$ from $s_{B}$. 

\noindent \textbf{Step 2:} Let $\mathcal{R}_{e}$  denote the set of routes using edge $e=(u_e,v_e)$ as defined in Step 1. We have the following properties: (i) Either there exists a vertex $D'\in V$ such that all routes $r\in \mathcal{R}_e$ have a common path from $O$ to $v_e$ and a common path from $D'$ to $D$, or there exists a vertex $O'\in V$ such that all routes $r\in \mathcal{R}_e$ have a common path from $u_e$ to $D$ and a common path from $O$ to $O'$. Without loss of generality, we assume it is the former case. (ii) There exists a subnetwork $G'$ with origin $O'=v_e$ and destination $D'$ such that for the restricted parts of $\mathcal{R}_A$ and $\mathcal{R}_B$ over $G'$, denoted by $\mathcal{R}'_A$ and $\mathcal{R}'_B$, if  we let $s'_A$, $\tilde{s}'_A$, $s'_B$, and $\tilde{s}'_B$ to denote the corresponding traffic on $\mathcal{R}'_A$ and $\mathcal{R}'_B$, then we have $\tilde{s}'_A  \le s'_A$ and $\tilde{s}'_B \ge s'_B$.

\noindent \textbf{Step 3:} Again, using steps 1 and 2, and induction hypothesis for $G'$, we have
\begin{align*}
& \max_{r \in \mathcal{R}_A} \{ c_r- \tilde{c}_r \} \ge  \min_{r \in \mathcal{R}_B} \{ c_r - \tilde{c}_r \} ,
\end{align*}
which completes the proof.

\end{itemize}
\end{proof}

\subsection{Characterization of Informational Braess' Paradox}

We next present our main result, which states that IBP does not occur if
and only if the network is SLI. The idea of this
result, as already discussed in the Introduction, is the following. To show the \textquotedblleft if\textquotedblright\ part, we note that using Lemma \ref{lem:sumflowseries} it suffices to show IBP does not occur in LI networks. Consider an expansion of the information set of type $1$ and the new equilibrium flows. If the equilibrium cost of type $1$ increases, then ICWE definition implies the following:
\begin{itemize}
\item Consider all types with increased equilibrium costs (including type $1$). All routes used by these types (in the equilibrium before information expansion) have higher costs in the new equilibrium. 
\item Consider all types with decreased equilibrium costs. All routes used by these types (in the equilibrium after information expansion) have lower costs in the new equilibrium. 
\end{itemize}
Using these two claims, it follows that the total flow sent over the routes with higher costs is lower, and the total flow sent over the routes with lower costs is higher (see Figure \ref{fig:proofMultiple}). Since the network is LI, Lemma \ref{lem:maxminLI} leads to a contradiction. 
 The \textquotedblleft only
if\textquotedblright\ part holds because any non-SLI network embeds
one of the networks shown in Figure \ref{fig:CE2}, and an IBP can
be constructed for each of them (Example \ref{example:notSLI}) which then extends to an IBP for the non-SLI network. 
\begin{theorem}[\textbf{Characterization of IBP}]
\label{thm:SLI} IBP does not occur if and only if
$G$ is SLI. More specifically, we have the following.
\begin{itemize}
\item[(a)] If $G$ is SLI, for any traffic network $(G,\mathcal{E}_{1:K},s_{1:K},\mathbf{c})$ with arbitrary assignment of cost functions $\mathbf{c}$, $K$, traffic demands $s_{1:K}$, and information sets $\mathcal{E}_{1:K}$,  IBP does not occur.

\item[(b)] If $G$ is not SLI, there exists an assignment of cost functions $\mathbf{c}$, $K$, traffic demands $s_{1:K}$, and information sets $\mathcal{E}_{1:K}$ in which IBP
occurs.
\end{itemize}
\end{theorem}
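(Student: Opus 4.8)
The plan is to prove the two directions separately, leveraging the machinery already built up in the three key lemmas and the embedding characterization (Theorem \ref{thm:SLIembedding}).

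\textbf{The ``if'' direction.} Suppose $G$ is SLI, so by Definition \ref{def:SLI} it is a series composition of LI blocks $G_1, \dots, G_m$. By Lemma \ref{lem:sumflowseries}, the restriction of any ICWE to each $G_j$ is an ICWE of the corresponding subnetwork, and the equilibrium cost of type $1$ on $G$ is the sum of its equilibrium costs on the $G_j$'s; the same holds after the information expansion (the expanded edge lies in exactly one block, and on the other blocks the flows and information sets restrict identically). Hence it suffices to prove IBP does not occur on a single LI network. So fix an LI network $G$, a traffic network $(G, \mathcal{E}_{1:K}, s_{1:K}, \mathbf{c})$, and an expansion $\mathcal{E}_1 \subset \tilde{\mathcal{E}}_1$. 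Let $f^{(1:K)}$ and $\tilde f^{(1:K)}$ be the corresponding ICWE flows, with equilibrium costs $c^{(i)}$ and $\tilde c^{(i)}$. Assume for contradiction that $\tilde c^{(1)} > c^{(1)}$. Partition the route set: let $\mathcal{R}_A$ be the set of routes that carry positive $\tilde f$-flow from some type whose cost strictly decreased ($\tilde c^{(i)} < c^{(i)}$), and let $\mathcal{R}_B = \mathcal{R} \setminus \mathcal{R}_A$. The ICWE conditions for $\tilde f$ force every route in $\mathcal{R}_A$ to satisfy $\tilde c_r = \tilde c^{(i)} < c^{(i)} \le c_r$ for the witnessing type (using the before-expansion optimality of $c^{(i)}$ among routes in $\mathcal{R}_i$, and that $\tilde{\mathcal{E}}_i = \mathcal{E}_i$ for $i \ne 1$; type $1$ needs the separate observation that $\mathcal{R}_1 \subseteq \tilde{\mathcal{R}}_1$). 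Symmetrically, every route in $\mathcal{R}_B$ that carries positive $f$-flow from a type with increased cost — including type $1$ — satisfies $c_r = c^{(i)} < \tilde c^{(i)} \le \tilde c_r$, so on $\mathcal{R}_B$ we have $c_r < \tilde c_r$ on a spanning set of $f$-positive routes. Now the conservation argument: total demand is unchanged, flow on $\mathcal{R}_A$ strictly decreased (the ``decreasing-cost'' types moved flow onto it? — one must check the direction carefully) — more precisely, one shows $\tilde s_A \le s_A$ because every type routes its $f$-flow off of routes whose cost went up and which by the above lie in $\mathcal{R}_B$, so flow migrates into $\mathcal{R}_B$, i.e. $\tilde s_B \ge s_B$ and $\tilde s_A \le s_A$. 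Then Lemma \ref{lem:maxminLI} gives $\max_{r \in \mathcal{R}_A}(c_r - \tilde c_r) \ge \min_{r \in \mathcal{R}_B}(c_r - \tilde c_r)$, but by construction the left side is $< 0$ (since $c_r < \tilde c_r$ on the witnessing routes of $\mathcal{R}_A$ — here one needs that the max over \emph{all} of $\mathcal{R}_A$ is still negative, which requires $\mathcal{R}_A$ to consist entirely of routes with $c_r < \tilde c_r$; careful choice of the partition, essentially the argument sketched in the paragraph before the theorem, is needed) and the right side is $\ge 0$, a contradiction.

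\textbf{The ``only if'' direction.} Suppose $G$ is not SLI. By Theorem \ref{thm:SLIembedding}, $G$ embeds at least one of the networks $H$ in Figure \ref{fig:CE2}. By Example \ref{example:notSLI}, each such $H$ admits cost functions, demands $s_{1:2}$, and information sets $\mathcal{E}_{1:2}$, $\tilde{\mathcal{E}}_1$ exhibiting IBP. One then pulls this configuration back along the embedding: each embedding operation (divide an edge, add an edge, extend a terminal) can be accommodated by assigning cost function identically $0$ to every new edge and to the two halves of a divided edge (splitting the original cost so the sum is preserved), and by including all new edges in all information sets. Adding a zero-cost parallel edge never changes equilibrium costs (it is weakly dominated, or carries flow at cost $0 \le$ everything), dividing an edge is cost-neutral, and extending a terminal by a zero-cost edge merely prepends/appends a common $0$-cost segment to all routes. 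Hence the equilibrium costs of both types, before and after expansion, are exactly those in $H$, so IBP occurs in $G$ for this configuration. This is essentially the content of Example \ref{example:notSLI}(c), made into a general reduction.

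\textbf{Main obstacle.} The delicate point is the ``if'' direction: correctly defining the partition $\{\mathcal{R}_A, \mathcal{R}_B\}$ so that simultaneously (i) every route in $\mathcal{R}_A$ has strictly lower cost under $\tilde f$ than under $f$, (ii) every route in $\mathcal{R}_B$ carrying relevant flow has strictly higher cost, and (iii) the flow inequalities $\tilde s_A \le s_A$, $\tilde s_B \ge s_B$ hold. The ICWE conditions only directly control routes that carry positive flow from a given type, so one must argue that routes with \emph{zero} flow from all relevant types can be sorted into the two classes consistently — using monotonicity of costs and the equilibrium inequalities for the types that \emph{do} use them. I expect this bookkeeping, together with verifying that the chain of inequalities feeding Lemma \ref{lem:maxminLI} is strict in the right place, to be where the real work lies; everything else is assembly of the lemmas.
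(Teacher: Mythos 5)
Your overall architecture is the same as the paper's (reduce to a single LI block via Lemma \ref{lem:sumflowseries}, partition the routes, and contradict Lemma \ref{lem:maxminLI}; for the converse, embed one of the networks of Figure \ref{fig:CE2} and extend Example \ref{example:notSLI} along the embedding), but in the ``if'' direction the step you yourself flag as ``the main obstacle'' is exactly where the proposal is not a proof, and your partition does not close it. You define $\mathcal{R}_A$ as the routes carrying positive \emph{post}-expansion flow of some strictly improved type; on those routes $c_r-\tilde c_r>0$, flow migrates \emph{into} this set (not out of it, as your migration sentence asserts), and the complement $\mathcal{R}_B$ contains routes carrying no flow from any relevant type, whose cost change is uncontrolled. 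Hence neither the sign condition you need on $\mathcal{R}_B$ nor the flow inequalities required by Lemma \ref{lem:maxminLI} are available, and your sketch in fact contradicts its own sign conventions (you assert the maximum over $\mathcal{R}_A$ is negative although your witnessing routes satisfy $c_r>\tilde c_r$). The paper resolves this by partitioning on cost comparison, $\mathcal{R}_A=\{r:\tilde c_r>c_r\}$ and $\mathcal{R}_B=\{r:\tilde c_r\le c_r\}$, so every route (including zero-flow ones) is sorted, and by proving two claims: every type whose cost increased placed zero \emph{old} flow on $\mathcal{R}_B$, and every type whose cost weakly decreased places zero \emph{new} flow on $\mathcal{R}_A$. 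These yield $\tilde s_A\le s_A$, $\tilde s_B\ge s_B$ with the right orientation and $\max_{r\in\mathcal{R}_A}\{c_r-\tilde c_r\}<0\le\min_{r\in\mathcal{R}_B}\{c_r-\tilde c_r\}$, contradicting Lemma \ref{lem:maxminLI}. You also never establish nonemptiness of both route classes, which is a hypothesis of that lemma; the paper gets it from Lemma \ref{lemma:LI12}(b), and note that this lemma only guarantees a type with \emph{weakly} decreased cost, whereas your witnessing types must improve strictly and so may fail to exist.

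In the ``only if'' direction the skeleton is right, but your treatment of the embedding step ``add an edge between two nodes'' fails: assigning the new edge cost identically $0$ and including it in every information set can change the equilibrium. A zero-cost chord is the most attractive edge possible, not ``weakly dominated''; it creates new, possibly strictly cheaper routes and is precisely the kind of addition that triggers Braess-type rerouting, so the equilibrium costs in $G$ need not coincide with those of the embedded network $H$ and the IBP configuration can be destroyed. The paper instead keeps every added edge out of all information sets (equivalently, infinite cost), so it is never used and the equilibria before and after the expansion are literally preserved; your handling of edge subdivision (splitting the cost) and of terminal extension (a common series edge given to all types) is fine and matches the paper.
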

\begin{proof}
\textbf{Part (a):} 
To reach a contradiction, suppose that $\tilde{c}^{(1)}>c^{(1)}$. By Definition \ref{def:SLI}, $G$ is obtained from attaching several LI blocks in series, denoted by $G_1, \dots, G_{N}$ for some $N \ge 1$. Using part (b) of Lemma \ref{lem:sumflowseries}, we have $\tilde{c}^{(1)}= \sum_{t=1}^N \tilde{c}^{(1)}_t >   \sum_{t=1}^N c^{(1)}_t  = c^{(1)}$, where $c^{(1)}_t$ denotes the equilibrium cost of type $1$ users in $G_t$. Therefore, there exists one LI block such as $j$ for which $\tilde{c}^{(1)}_j > c^{(1)}_j$. Also, using part(a) of Lemma \ref{lem:sumflowseries}, the restriction of equilibrium flows $f^{(1:K)}$ and $\tilde{f}^{(1:K)}$ to $G_j$ creates an equilibrium flow for this LI block. Therefore, IBP occurs in LI block $G_j$. In the rest of the proof of part (a),
we will assume IBP occurs in an LI block (and hence LI network) and reach a contradiction. We let $f^{(1:K)}$ and $\tilde{f}^{(1:K)}$ be the equilibrium flows before and after the information set expansion. Also, for any route $r \in \mathcal{R}$, we let $c_{r}$ and $\tilde{c}_r$ to denote the cost of route $r$ with flows $f^{(1:K)}$ and $\tilde{f}^{(1:K)}$, respectively. 
\newline
We partition the set $[K]$ into groups $A$ and $B$ as follows 
\begin{equation*}
A=\{i\in \lbrack k \rbrack~:~\tilde{c}^{(i)}>c^{(i)}\},
\end{equation*}%
and 
\begin{equation*}
B=\{i\in \lbrack k \rbrack~:~\tilde{c}^{(i)}\leq c^{(i)}\},
\end{equation*}%
i.e., set $A$ denotes all types with higher equilibrium cost in the game
with higher information, and set $B$ denotes the rest of the types.\newline
We also partition the routes of the network into two subsets $\mathcal{R}%
_{A} $ and $\mathcal{R}_{B}$, where 
\begin{equation*}
\mathcal{R}_{A}=\{r\in \mathcal{R}~:~\tilde{c}_{r}>c_{r}\},
\end{equation*}%
and 
\begin{equation*}
\mathcal{R}_{B}=\{r\in \mathcal{R}~:~\tilde{c}_{r}\leq c_{r}\},
\end{equation*}%
i.e., $\mathcal{R}_{A}$ denotes all routes that have higher costs in the
game with higher information, and $\mathcal{R}_{B}$ denotes the rest of the
routes. We show the following claims: \newline
\textbf{Claim 1:} For any type $i\in A$ and any route $r\in \mathcal{R}_{B}$, we have $%
f_{r}^{(i)}=0$, i.e., for a given type $i$, if the equilibrium cost increases in the game
with higher information, then the cost of all routes that type $i$ was
using (with strictly positive flow) also increases. This follows since if $r\not\in \mathcal{R}_{i}$, then $f_{r}^{(i)}=0$.
Otherwise, $r\in \mathcal{R}_{i}$ which implies $r\in \tilde{\mathcal{R}}%
_{i} $ as well, where $\tilde{\mathcal{R}}_i$ denotes the set of available routes to type $i$ in the expanded information set. Assuming $i\in A$ and  $r\in \mathcal{R}_B$, we have
\begin{equation*}
c_{r} \ {\geq } \ \tilde{c}_{r} \ {\geq } \ \tilde{c}^{(i)} \ {>} \ c^{(i)}, 
\end{equation*}%
where the first inequality follows from the definition of the set $\mathcal{R%
}_{B}$. The second inequality follows from the
definition of ICWE. The third inequality follows from the definition of set 
$A$. The overall inequality and the definition of ICWE show that $f_{r}^{(i)}=0$. \newline
\textbf{Claim 2:} For any type $i\in B$ and any route $r\in \mathcal{R}_{A}$, we have $%
\tilde{f}_{r}^{(i)}=0$, i.e., for a given route, if the cost of the route in the equilibrium increases in the game with higher information, then the equilibrium costs of
all types that are using this route in the equilibrium of the higher
information game also increases. This follows since if $r\not\in \tilde{\mathcal{R}}_{i}$, then $\tilde{f}%
_{r}^{(i)}=0$. Otherwise, $r\in \tilde{\mathcal{R}}_{i}$ which implies $r\in 
\mathcal{R}_{i}$, because $1\not\in B$ and information set of all other types
are fixed. Assuming $i\in B$ and  $r\in \mathcal{R}_A$, we have 
\begin{equation*}
\tilde{c}_{r}\ {>} \ c_{r} \ {\geq } \ c^{(i)} \ {\geq } \ \tilde{c}^{(i)},
\end{equation*}%
where the first inequality follows from the definition of the set $\mathcal{R%
}_{A}$. The second inequality follows from the
definition of ICWE. The third inequality follows from the definition of set 
$B$. The overall inequality and the definition of ICWE show that $\tilde{f}_{r}^{(i)}=0$. 
\newline
\textbf{Claim 3:} Letting $s_{A}=\sum_{r\in \mathcal{R}_{A}}%
\sum_{i=1}^{K}f_{r}^{(i)}$, $\tilde{s}_{A}=\sum_{r\in \mathcal{R}%
_{A}}\sum_{i=1}^{K}\tilde{f}_{r}^{(i)}$, $s_{B}=\sum_{r\in \mathcal{R}%
_{B}}\sum_{i=1}^{K}f_{r}^{(i)}$, and $\tilde{s}_{B}=\sum_{r\in \mathcal{R}%
_{B}}\sum_{i=1}^{K}\tilde{f}_{r}^{(i)}$, we have $\tilde{s}_{A}\leq s_{A}$
and $\tilde{s}_{B}\geq s_{B}$. \newline
This follows from Claims 1 and 2. The traffic on the routes in $\mathcal{R}_{A}$ from $f^{(1:K)}$ is $s_{A}$ which is the entire traffic demand $s_{i}$ for all 
$i\in A$ (Claim 1) and possibly some portion of the traffic demand $s_{j}$ for $j\in
B $. On the other hand, the traffic on the routes in $\mathcal{R}_{A}$ from $\tilde{f}^{(1:K)}$ is $\tilde{s}_{A}$, which contains only some portion of
the traffic demand $s_{i}$ for $i\in A$. Claim 2 implies that for all $j\in B$ the traffic demand $\tilde{s}_{j}$ is only sent on the routes in $\mathcal{R}_{B}$. This
shows that $\tilde{s}_{A}\leq s_{A}$ which in turn leads to $\tilde{s}%
_{B}\geq s_{B}$ (see Figure \ref{fig:proofMultiple} for an illustration of
the partitioning and the flows).

\begin{figure}[t]
\centering
\includegraphics[width=.6\textwidth]{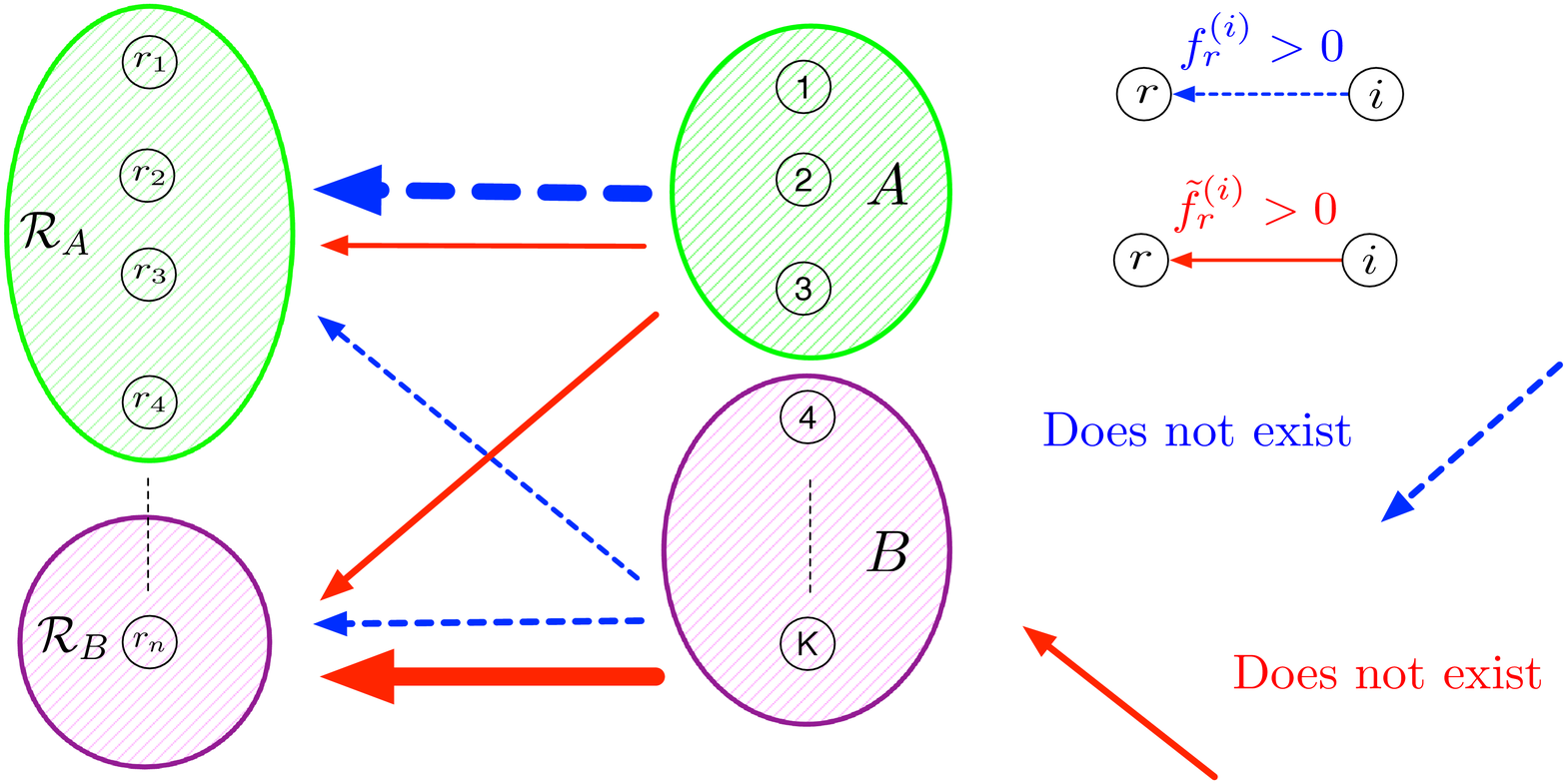}
\caption{Proof of Theorem \ref{thm:SLI}: set $A$ ($B$) represents types with higher (lower) equilibrium costs and set $\mathcal{R}_A$ ($\mathcal{R}_B$) represents routes with higher (lower) costs.  There is no dashed (blue) arrow from $A$ to $\mathcal{R}_B$ which illustrates Claim 1 and there is no solid arrow (red) from $B$ to $\mathcal{R}_A$ which illustrates Claim 2.}
\label{fig:proofMultiple}
\end{figure}
Part (b) of Lemma \ref{lemma:LI12} shows that there exists type $i$ for
which $\tilde{c}^{(i)}\leq c^{(i)}$, which in turn shows that set $B$ is
nonempty. Also by the contradiction assumption $1\in A$, which implies that both $A$ and $B$ are nonempty. Using Claim 1, if $A$ is nonempty, then $\mathcal{R}_{A}\neq \emptyset $ as the flow $f^{(1:K)}$
of the types in $A$ can only go to routes in $\mathcal{R}_{A}$. Also using Claim 2,
since $B$ is nonempty, we have $\mathcal{R}_{B}\neq
\emptyset $ as the flow $\tilde{f}^{(1:K)}$ of the types in $B$ can only go
to routes in $\mathcal{R}_{B}$. Therefore, we have partitioned the
routes of the network into two nonempty sets $\mathcal{R}_{A}$ and $%
\mathcal{R}_{B}$ such that $\tilde{c}_{r}>c_{r}$ for all $r\in \mathcal{R}%
_{A}$ and $\tilde{c}_{r} \le c_{r}$ for all $r\in \mathcal{R}_{B}$. In other
words, we have $\max_{r\in \mathcal{R}_{A}} \{ c_{r}-\tilde{c}_{r} \}<0 $
and $\min_{r\in \mathcal{R}_{B}} \{ c_{r}-\tilde{c}_{r} \} \geq 0$. We now
have all the pieces to use Lemma \ref{lem:maxminLI} which yields  
\begin{equation*}
0>\max_{r\in \mathcal{R}_{A}} \{ c_{r}-\tilde{c}_{r} \} \geq \min_{r\in 
\mathcal{R}_{B}} \{ c_{r}-\tilde{c}_{r} \} \geq 0,
\end{equation*}%
which is a contradiction, completing the proof of part (a). 
\\\textbf{Part (b):} The proof of this part follows from Theorem \ref%
{thm:SLIembedding}. 
Let $G$ be a non-SLI
network. Using Theorem \ref{thm:SLIembedding}, one of the networks shown in
Figure \ref{fig:CE2} must be embedded in $G$.
Using Example \ref{example:notSLI}, for all
networks shown in Figure \ref{fig:CE2}, there exists an assignment of cost
functions and information sets for which IBP occurs.

To construct an example
for $G$ we start from the cost functions for which the embedded network
features IBP (as shown in Example \ref{example:notSLI}) and then
following the steps of embedding, given in Definition \ref{def:embedding}, we will update the information sets as well
as the cost functions in a way that IBP occurs in the final network $G
$. The updates of information sets and cost functions are as follows.
\begin{itemize}
\item [(i)] If the step of embedding is to divide an edge, we assign half of the original edge cost to each of the new edges and update the information set by adding both
newly created edges to the same information set as of the original edge.
This guarantees that the equilibrium flow of the network after dividing an
edge is the same as the one before.
\item [(ii)] If the step of embedding is to add an edge, then we include that edge in
none of the information sets (or equivalently assign cost infinity to it). This
guarantees that the new edge is never used in any equilibrium.
\item [(iii)] If the step of embedding is to extend origin or destination, we
let the cost of the new edge to be $c(x)=x$ and update all of the information sets by adding
this edge to them. Since this edge will be used by all types and the flow on
it will not change, this step of embedding does not affect the equilibrium flow.
\end{itemize}
This construction establishes that since IBP is present in the initial
network, i.e., one of the networks shown in Figure \ref{fig:CE2}, it will
be present in the network $G$ as well. This completes the proof of part (b). %
\end{proof}
Recall that in Remark \ref{remark:expansionofexample} we showed for each of the networks shown in Figure \ref{fig:CE2} there exist infinitely many cost functions for which IBP occurs. This shows that if IBP occurs in a network, then it occurs for infinitely many cost functions. Because, if IBP occurs in a network $G$, Theorem \ref{thm:SLI} part (a) implies $G$ is not SLI and Theorem \ref{thm:SLIembedding} shows that one of the basic networks shown in Figure \ref{fig:CE2} is embedded in $G$. Finally, by construction of the proof of  Theorem \ref{thm:SLI} part (b), the cost function configuration of the basic network can be extended to network $G$, showing that IBP occurs for infinitely many cost functions. 



\subsection{IBP with Restricted Information Sets}
\label{sec:IBPres}
In this subsection, we show that restricting focus to networks with
a much more specific information structure --- whereby only one type does
not know all the edges and the change in question informs this type of all
edges ---\ allows us to establish that IBP does not occur in a larger set of
networks. Interestingly, in this case, IBP does not occur in exactly the
same set of networks on which BP does not occur, series-parallel networks,
though the two concepts continue to be very different even under this more
specific information structure. The similarity is that after the change, as
in the classic Wardrop Equilibrium setting studied for BP, there is no more
heterogeneity among users. We first define IBP with restricted information sets and then 
state the characterization of network topology which leads to it. 

\begin{definition}[\textbf{IBP with Restricted Information Sets}]
\textup{
Consider a traffic network with multiple information types $(G,%
\mathcal{E}_{1:K},s_{1:K},\mathbf{c})$. IBP with restricted information sets occurs if there exist 
expanded information sets $\tilde{\mathcal{E}}_{1:K}$ with $\mathcal{E}_{1}\subset \tilde{\mathcal{E}}_{1}=\mathcal{E}$, and $\mathcal{E}_i=\tilde{\mathcal{E}}_{i}=\mathcal{E}$ for $i=2, \dots, K$, such that the equilibrium cost of type $1$ in $(G,\tilde{\mathcal{E}}_{1:K},s_{1:K},\mathbf{c})$ is strictly larger
than the equilibrium cost of type $1$ in $(G,\mathcal{E}_{1:K},s_{1:K},
\mathbf{c})$. We denote the equilibrium cost of type $i\in [K]$ before and after expansion of information by $c^{(i)}$ and $\tilde{c}^{(i)}$, respectively. 
}
\end{definition}
\begin{theorem}
\label{thm:parttowhole}  IBP with restricted information sets does not occur if and only if the network $G$ is
SP. More specifically, we have the following.

\begin{itemize}
\item[(a)] If $G$ is SP, for any network with multiple information sets $(G,\mathcal{E}_{1:K},s_{1:K},\mathbf{c})$ with arbitrary assignment of cost functions $\mathbf{c}$, $K$, traffic demands $s_{1:K}$, and information set $\mathcal{E}_{1}$,   IBP with the restricted information sets does not occur.

\item[(b)] If $G$ is not SP, there exists an assignment of cost functions $\mathbf{c}$, $K$, traffic demands $s_{1:K}$, and information set $\mathcal{E}_{1}$ in which IBP with restricted information sets occurs.

\end{itemize}
\end{theorem}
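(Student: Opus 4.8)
The plan is to reduce both directions to results already in hand. For part~(b), the key observation is that BP is a special case of IBP with restricted information sets: take $K=1$ (or more precisely, a single uninformed type), set $\mathcal{E}_1$ to omit one edge $e$, and let $\tilde{\mathcal{E}}_1 = \mathcal{E}$; then expanding $\mathcal{E}_1$ to $\tilde{\mathcal{E}}_1$ is exactly the operation of adding edge $e$ to a single-type network. As noted in the discussion after Definition~\ref{def:IBP}, BP occurs in a network if and only if it occurs via such a single-edge addition, and by \cite{milchtaich2006network}, if $G$ is not SP then BP occurs for some cost configuration. Hence if $G$ is not SP, IBP with restricted information sets occurs, giving part~(b). (One should double-check the degenerate case $K=1$ is permitted by the definition, or else add a dummy type $2$ with $\mathcal{E}_2 = \mathcal{E}$ and $s_2$ arbitrarily small, which does not affect the argument.)

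For part~(a), suppose $G$ is SP and, towards a contradiction, that $\tilde{c}^{(1)} > c^{(1)}$. The crucial simplification relative to Theorem~\ref{thm:SLI} is that \emph{after} the expansion all types have information set $\mathcal{E}$, so the post-change equilibrium $\tilde{f}^{(1:K)}$ is an ordinary Wardrop equilibrium with a single pooled demand $\sum_i s_i$ — all types face the same route costs and the flow is as if there were one type. In particular $\tilde{c}^{(i)} = \tilde{c}^{(1)}$ for every $i$, so with the partition $A = \{i : \tilde{c}^{(i)} > c^{(i)}\}$, $B = \{i : \tilde{c}^{(i)} \le c^{(i)}\}$ used in the proof of Theorem~\ref{thm:SLI}, we still have $1 \in A$, and $B$ is nonempty by part~(b) of Lemma~\ref{lemma:LI12} only if we can invoke it — but here the cleaner route is: since $\tilde c^{(i)}$ is common across $i$ and $c^{(1)} < \tilde c^{(1)}$, the total equilibrium cost $\sum_i s_i \tilde c^{(i)}$ versus $\sum_i s_i c^{(i)}$ comparison, together with the potential-function characterization (Proposition~\ref{pro:potential}), should force $B \neq \emptyset$; alternatively one shows $B\ne\emptyset$ directly from the fact that lowering congestion on the routes type $1$ abandons cannot raise every type's cost.

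With $A, B$ both nonempty, partition the routes as $\mathcal{R}_A = \{r : \tilde{c}_r > c_r\}$, $\mathcal{R}_B = \mathcal{R}\setminus\mathcal{R}_A$, exactly as in Theorem~\ref{thm:SLI}; Claims~1--3 there go through verbatim (they used only the ICWE inequalities and the fixed information of types $2,\dots,K$, both of which still hold), yielding $\tilde{s}_A \le s_A$ and $\tilde{s}_B \ge s_B$ with $\max_{r \in \mathcal{R}_A}\{c_r - \tilde{c}_r\} < 0 \le \min_{r \in \mathcal{R}_B}\{c_r - \tilde{c}_r\}$. The only missing ingredient is an analogue of Lemma~\ref{lem:maxminLI} for SP (rather than LI) networks: given a partition of routes with $\tilde{s}_A \le s_A$, $\tilde{s}_B \ge s_B$, one must show $\max_{\mathcal{R}_A}\{c_r - \tilde{c}_r\} \ge \min_{\mathcal{R}_B}\{c_r - \tilde{c}_r\}$. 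I expect \textbf{this SP version of the max-min lemma to be the main obstacle.} The natural attack is induction on the series-parallel decomposition (Definition~\ref{def:SP}): the single-edge base case is trivial; for a series composition $G = G_1$--$G_2$, restrict both flows to each block (Lemma~\ref{lem:sumflowseries}-type reasoning), observe the demand-comparison hypotheses on $\mathcal{R}_A, \mathcal{R}_B$ descend to at least one block, apply induction there and use that the shared prefix/suffix costs cancel; for a parallel composition $G = G_1 \parallel G_2$, the routes split cleanly between the two parallel parts, and a counting argument shows the hypothesis $\tilde s_A\le s_A$, $\tilde s_B\ge s_B$ must hold within one of the parts, to which induction applies. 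The delicate point — and the reason LI was needed in Lemma~\ref{lem:maxminLI} — is handling parallel compositions where $\mathcal{R}_A$ and $\mathcal{R}_B$ both straddle the two parallel branches; here the extra structure that saves us is that after the change everything is a single-demand Wardrop flow, so within any parallel block the used routes all have equal cost under $\tilde f$, which collapses the $\min$/$\max$ over $\mathcal{R}_B'$, $\mathcal{R}_A'$ and lets the induction close. Once this lemma is established, the chain $0 > \max_{\mathcal{R}_A}\{c_r-\tilde c_r\} \ge \min_{\mathcal{R}_B}\{c_r-\tilde c_r\} \ge 0$ is the desired contradiction, completing part~(a).
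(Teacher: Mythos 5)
Your part (b) is essentially the paper's argument: the single-edge-addition form of BP on the Wheatstone network embedded in a non-SP $G$ is exactly IBP with restricted information sets (with $K=1$, which the definition permits), and the extension from the embedded network to $G$ is the same embedding construction as in Theorem \ref{thm:SLI}(b).

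Part (a), however, has a genuine gap, in two places. First, the step ``show $B\neq\emptyset$'' is circular: in the restricted setting every type $i\ge 2$ is fully informed before the change, so $c^{(i)}\le c^{(1)}$, while after the change $\tilde c^{(i)}=\tilde c^{(1)}$ for all $i$; hence $B\neq\emptyset$ is literally equivalent to $\tilde c^{(1)}\le c^{(1)}$, i.e., to the statement being proved, and under your contradiction hypothesis $B$ is automatically empty. Neither Lemma \ref{lemma:LI12}(b) (stated for LI networks) nor a potential-function comparison can deliver it. Second, and more seriously, the ``SP version of the max-min lemma'' on which your plan hinges is false as you state it. Take the network of Figure \ref{fig:CE21}, which is SP, with the costs and the two equilibrium flows of Example \ref{example:notSLI}(b); let $\mathcal{R}_A$ be the routes whose cost increased ($e_1e_3$, $e_1e_4$, $e_2e_3$, $e_5$) and $\mathcal{R}_B=\{e_2e_4\}$ the route whose cost decreased. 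All flow lies on $\mathcal{R}_A$ under both flows, so $\tilde s_A=s_A$ and $\tilde s_B=s_B=0$, yet $\max_{r\in\mathcal{R}_A}\{c_r-\tilde c_r\}=-1/4<3/4=\min_{r\in\mathcal{R}_B}\{c_r-\tilde c_r\}$. So no induction on the SP decomposition can establish the lemma at the generality you need (your parallel-composition step also fails in general), and any rescue would have to exploit the specific ``all other types fully informed'' equilibrium structure, which your sketch does not do; this is precisely why Lemma \ref{lem:maxminLI} requires LI and why the Theorem \ref{thm:SLI}(a) strategy does not transplant to SP networks. The paper's actual proof of (a) bypasses this machinery: it proves Lemma \ref{lemma:SProuteexist} (in an SP network, with total demand unchanged, there is a route $r$ with $f_e\ge\tilde f_e$ and $f_e>0$ on every edge) and Lemma \ref{lem:costbetween} (any route carrying positive flow on all its edges at the pre-expansion ICWE has cost at most $\max_i c^{(i)}=c^{(1)}$, by induction on the SP structure), and then concludes directly via $\tilde c^{(1)}\le\tilde c_r\le c_r\le c^{(1)}$.
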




\subsection{Extension to Multiple Origin-Destination Pairs}\label{sec:IBPmultipleODpairs}

In this subsection, we consider networks with multiple information types and
multiple origin-destination pairs as defined next.
\begin{definition}\label{def:multipleODpaitstrafficnetwork}
\textup{
Consider a graph $G= (V, \mathcal{E})$ containing $m$ origin-destination pairs denoted by $(O_{i},D_{i})$, $i \in [m]$. For any $ i \in [m]$, there are $K_{i}$ types of users, each with information set $%
\mathcal{E}_{i, j} \subseteq \mathcal{E}$, for $j\in [K_{i}]$. We refer to $(i, j)$ as the type of a user where $i \in [m]$ denotes the origin-destination pair of this type and $j \in [K_i]$ represents its information set. The traffic network with
multiple information types and multiple origin-destination pairs is denoted by $\left( G, \left\{\mathcal{E}_{i, 1:K_i} \right\}_{i=1}^{m},
\left\{s_{i, 1:K_i}\right\}_{i=1}^m, \mathbf{c} \right)$. We let $\mathcal{R}_{i, j}$ to denote the set of routes available to a user of type $(i, j)$ (i.e., routes formed by edges in $\mathcal{E}_{i, j}$). 
A feasible flow is a flow vector $f=(f^{(1, 1:K_1)}, \dots,f^{(m, 1:K_m)})$ such
that $f^{(i, 1:K_i)}$ is a feasible flow for origin-destination pair $(O_i, D_i)$. 
}
\end{definition}
We denote the total flow on an edge $e$ by $f_{e}$ where $f_{e}=\sum_{i=1}^m \sum_{j=1}^{K_i} \sum_{r \in \mathcal{R}_{i, j}~:~ e \in r}f_{r}^{(i, j)}$. Note that since $G$ is an undirected graph, the total flow on each edge is the sum of the flows sent through that edge in either direction (see \cite{lin2011stronger} and \cite{holzman2015strong}). The cost of a route $r$ is defined as $%
c_r(f)= \sum_{e \in r} c_e(f_e)$.
ICWE in this case is defined naturally as follows:

A feasible flow $f=(f^{(1, 1:K_1)},\dots ,f^{(m, 1:K_m)})$ is an Information Constrained Wardrop Equilibrium (ICWE) if for
every $i \in [m]$ and $j \in [K_i]$ and every pair $r,\tilde{r}\in \mathcal{R}_{i, j}$ with $f_{r}^{(i, j)}>0$, we have
\begin{equation}  \label{eq:defCWE}
c_{r}(f)\leq c_{\tilde{r}}(f).
\end{equation}%
This implies that all routes of type $(i, j)$ with positive flow have the same
cost, which is smaller than or equal to the cost of any other route in $\mathcal{R%
}_{i, j}$. The equilibrium cost of type $(i, j)$, denoted by $c^{(i, j)}$, is then
given by the cost of any route in $\mathcal{R}_{i, j}$ with positive flow from
type $(i, j)$.

The existence of ICWE in this setting follows from an
identical argument to that of Theorem \ref{thm:existsnceCWE}. Finally, the definition of IBP for this extended setting is as follows. 
\begin{definition}[IBP with Multiple Origin-destination Pairs]\label{def:IBPmultipleODpairs}
\textup{
Consider a traffic network with multiple information types and multiple origin-destination pairs $\left( G, \left\{\mathcal{E}_{i, 1:K_i} \right\}_{i=1}^{m},
\left\{s_{i, 1:K_i}\right\}_{i=1}^m, \mathbf{c} \right)$. IBP occurs if there exists an
expanded information set $\left\{ \tilde{\mathcal{E}}_{i, 1:K_i} \right\}_{i=1}^m$ with ${\mathcal{E}}_{1, 1} \subset \tilde{\mathcal{E}}_{1, 1}$ and $\tilde{\mathcal{E}}_{i, j}=\mathcal{E}_{i, j}$, for all $(i, j) \neq (1,1)$, $i \in [m]$, $j \in [K_i]$, such that the equilibrium cost of type $(1,1)$ in $\left( G, \left\{\tilde{\mathcal{E}}_{i, 1:K_i} \right\}_{i=1}^{m},
\left\{s_{i, 1:K_i}\right\}_{i=1}^m, \mathbf{c} \right)$ is strictly larger
than the equilibrium cost of type $(1,1)$ in $\left( G, \left\{\mathcal{E}_{i, 1:K_i} \right\}_{i=1}^{m},
\left\{s_{i, 1:K_i}\right\}_{i=1}^m, \mathbf{c} \right)$. 
}
\end{definition}
Note that the choice of type $(1,1)$ for information expansion is arbitrary and without loss of generality. We next establish a sufficient condition on the network topology
under which IBP with multiple origin-destination pairs does not arise. We will use the following definitions from \cite{chen2016network}. 
\begin{definition}
\textup{ 
\begin{itemize}
\item For any origin-destination pair $(O_i, D_i)$, the \emph{relevant network} $i$ denoted by $G_i=(V_i, \mathcal{E}_i)$ consists of all edges and nodes of $G$ that belong to at least one route from $O_i$ to $D_i$ in $G$. 
\item  For an SLI network $G_i$, each LI block has two terminal nodes, an origin and a destination, such that the origin is the first node and the destination is the last node in the block visited on any route in $G_i$. For two SLI networks $G_i$ and $G_j$, a \emph{coincident LI block} is a common LI block of $G_i$ and $G_j$ with the same set of terminal nodes, allowing origin of one to be the destination of other. 
\end{itemize} 
}
\end{definition}
Note that the definition of relevant network $G_i$ as well as its LI blocks depend only on the network $G$ and the origin-destination pair $(O_i, D_i)$, not on the information sets. Based on this definition, we next provide a sufficient condition for excluding IBP. 
\begin{proposition}\label{pro:MultipleOD}
Let $G$ be a graph with $m \ge 1$ origin-destination pairs. For any $i \in [m]$, let $G_i= (V_i, \mathcal{E}_i)$ be the relevant network for origin-destination pair $(O_i, D_i)$. IBP does not occur if the following two conditions hold. 
\begin{itemize}
\item [(a)] For any $i \in [m]$, the network $G_i$ is SLI.
\item [(b)] For any $i, i' \in [m]$ either $\mathcal{E}_i \cap \mathcal{E}_{i'} = \emptyset$ or $\mathcal{E}_i \cap \mathcal{E}_{i'}$ consists of all coincident blocks of $G_i$ and $G_{i'}$. 
\end{itemize}
\end{proposition}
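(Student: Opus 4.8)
The plan is to reduce the multiple origin-destination setting to the single origin-destination characterization in Theorem~\ref{thm:SLI}(a) by decomposing the network into coincident LI blocks and showing that, on each such block, the flows contributed by the various origin-destination pairs behave like a single-origin-destination traffic game with multiple information types. First I would set up notation: let $\{\tilde{\mathcal{E}}_{i,1:K_i}\}$ be an expansion of $\mathcal{E}_{1,1}$, let $f$ and $\tilde f$ be the corresponding ICWE flows, and suppose for contradiction that $\tilde c^{(1,1)}>c^{(1,1)}$. Using condition~(a), write each relevant network $G_i$ as a series concatenation of LI blocks; using condition~(b), observe that any two relevant networks that share edges share them along entire coincident LI blocks, so the collection of all LI blocks appearing across $G_1,\dots,G_m$ forms a well-defined partition of the used edge set into ``shared'' blocks (used by several pairs with identical terminals) and ``private'' blocks (used by one pair only).

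The key step is a \emph{series-decomposition argument} for a fixed origin-destination pair. For pair $i$, apply the analog of Lemma~\ref{lem:sumflowseries}(b): the equilibrium cost of type $(i,j)$ on $G_i$ is the sum of its equilibrium costs over the LI blocks of $G_i$ it traverses, and the restriction of an ICWE flow to each block is again an ICWE flow (Lemma~\ref{lem:sumflowseries}(a)). Hence $\tilde c^{(1,1)}>c^{(1,1)}$ forces $\tilde c^{(1,1)}_\beta>c^{(1,1)}_\beta$ on some LI block $\beta$ of $G_1$. Now the crucial observation: on block $\beta$, the \emph{total} flow traversing $\beta$ from \emph{all} origin-destination pairs behaves exactly like a traffic network with multiple information types over the single LI network $\beta$ with a common pair of terminals — precisely because condition~(b) guarantees that any pair using an edge of $\beta$ uses the whole block $\beta$ with the same terminals, so all flow on $\beta$ enters at one node and exits at one node. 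The expansion of $\mathcal{E}_{1,1}$ only enlarges the set of routes of type $(1,1)$ \emph{within} its own relevant network; restricted to $\beta$, it either does nothing to $\beta$ or enlarges the route set of the single type ``$(1,1)$ restricted to $\beta$'' while leaving all other types' route sets on $\beta$ fixed (here one uses that $\beta$, being an LI block, has its route set determined by its edge set). This is exactly the hypothesis of Theorem~\ref{thm:SLI}(a) applied to the LI network $\beta$, so IBP cannot occur on $\beta$, contradicting $\tilde c^{(1,1)}_\beta>c^{(1,1)}_\beta$.

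The main obstacle I expect is the careful bookkeeping in the last step: one must verify that the aggregated flow on a coincident block $\beta$, coming from possibly many origin-destination pairs each with many information types, genuinely constitutes a feasible flow and an ICWE on the stand-alone LI network $\beta$, and that the cost functions $c_e$ on $\beta$ are unchanged. The subtlety is that different pairs $i$ may route different total amounts through $\beta$, and these amounts are themselves endogenous (they depend on the equilibrium in $G$), but this is fine because Lemma~\ref{lem:sumflowseries}(a) tells us the restriction of the global ICWE to $\beta$ is an ICWE for whatever demand vector it induces — and Theorem~\ref{thm:SLI}(a) holds for arbitrary demands and arbitrary $K$. One must also check that when $\mathcal{E}_{1,1}$ is expanded, the induced change on block $\beta$ is still an \emph{expansion} of exactly one type's route set on $\beta$ (and not, say, a change that splits type $(1,1)$ into incomparable pieces); this follows because $\beta$ appears as a single series factor of $G_1$, so the routes of $(1,1)$ through $\beta$ are precisely the $\beta$-sections of its routes in $G_1$. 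Finally, if more than one information type of pair $1$ happens to traverse $\beta$, one groups them appropriately so that only the descendant of type $(1,1)$ gets an enlarged route set; condition~(b) and the LI structure of $\beta$ make this grouping consistent.
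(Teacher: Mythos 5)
Your proposal is correct and follows essentially the same route as the paper's proof: localize the cost increase to a single LI block of $G_1$ via Lemma \ref{lem:sumflowseries}, use condition (b) to view all flows crossing that coincident block as a single origin-destination traffic network with multiple information types on an LI network, and invoke Theorem \ref{thm:SLI}(a) for the contradiction. The only detail the paper adds that you gloss over is that a pair may traverse the coincident block from its destination terminal to its origin terminal; this is handled without loss of generality by reversing such flows, since edge costs depend only on the total (undirected) flow on each edge.
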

\begin{proof}
We let $f$ and $\tilde{f}$ denote the equilibrium flows before and after the expansion of information set of type $(1,1)$. To reach a contradiction suppose $\tilde{c}^{(1,1)} > c^{(1,1)}$. Using Lemma \ref{lem:sumflowseries} part (b) and condition (a) of the proposition, the equilibrium cost of type $(1, 1)$ users is the sum of equilibrium cost of the LI blocks of $G_1$. Since  $\tilde{c}^{(1,1)} > c^{(1,1)}$, there exists an LI block of $G_1$ for which the equilibrium cost after expanding information set of type $(1,1)$ increases. We denote this LI block by $G^*$ and its corresponding origin and destination by $O^*$ and $D^*$. Using condition (b) for any $i \neq 1$ we have one of the following two cases: (i) $G_i$ does not have any common edge with $G^*$ and therefore none of the route flows of $(O_i, D_i)$ go through any edge of $G^*$ (ii) $O^*$ and $D^*$ belong to all routes of $G_i$ and therefore all route flows of $(O_i, D_i)$ go through $G^*$. We let $C$ be the set of indices of such origin-destination pairs, i.e., $C=\{i \in [m] ~:~ O^*, D^* \in r, \forall r \in  G_i \}$. We next define a traffic network with single origin-destination pair $(O^*, D^*)$ over $G^*$ for which IBP has occurred. The types of users are $\left( \bigcup_{i \in C} \{(i, j)~:~ j \in [K_i]\} \right) \bigcup \{(1, j)~:~ j \in [K_1] \}$ with their corresponding traffic demands.  Note that for all $i \in C$, even though our definition of coincident LI block allows the route flows of $(O_i, D_i)$ to go from $O^*$ to $D^*$ in either direction, without loss of generality, we can assume that route flows go from $O^*$ to $D^*$. This is because the cost of any edge is a function of the sum of the flows that passes through that edge in either direction and reversing the flows does not change the equilibrium flows on edges. Using Lemma \ref{lem:sumflowseries} part (a) the restriction of equilibrium flows $f$ and $\tilde{f}$ to $G^*$ are equilibrium flows for the congestion game with multiple information types and single origin-destination pair defined on $G^*$. Note that $G^*$ is an LI network and the equilibrium cost of type $(1,1)$ users after expanding their information set has gone up, which is a contradiction using Theorem \ref{thm:SLI}. 
\end{proof}
Figure \ref{fig:multipleODpairs1} shows two SLI networks with their corresponding LI blocks and Figure \ref{fig:multipleODpairs2} shows a graph with two origin-destination pairs which satisfies our sufficient condition. 

The next example shows that the conditions of Proposition \ref{pro:MultipleOD} are not necessary for non-occurrence of IBP. 
\begin{example}\label{Example:MultipleOD}
\textup{
Consider the network $G$ shown in Figure \ref{fig:ExampleMultipleOD}. The common LI block of relevant networks $G_1$ and $G_2$ is $G$ itself which is not a coincident LI block because the sets of terminals of this block for $G_1$ and $G_2$ are different. Therefore, this network does not satisfy the conditions of Proposition \ref{pro:MultipleOD}. However, in Appendix \ref{app:Example:MultipleOD} we show that for any set of edge cost functions, IBP does not occur in this network. 
}
\end{example}

\begin{figure}[t]
\centering
\begin{subfigure}[b]{0.45\textwidth}
               \includegraphics[width=\textwidth]{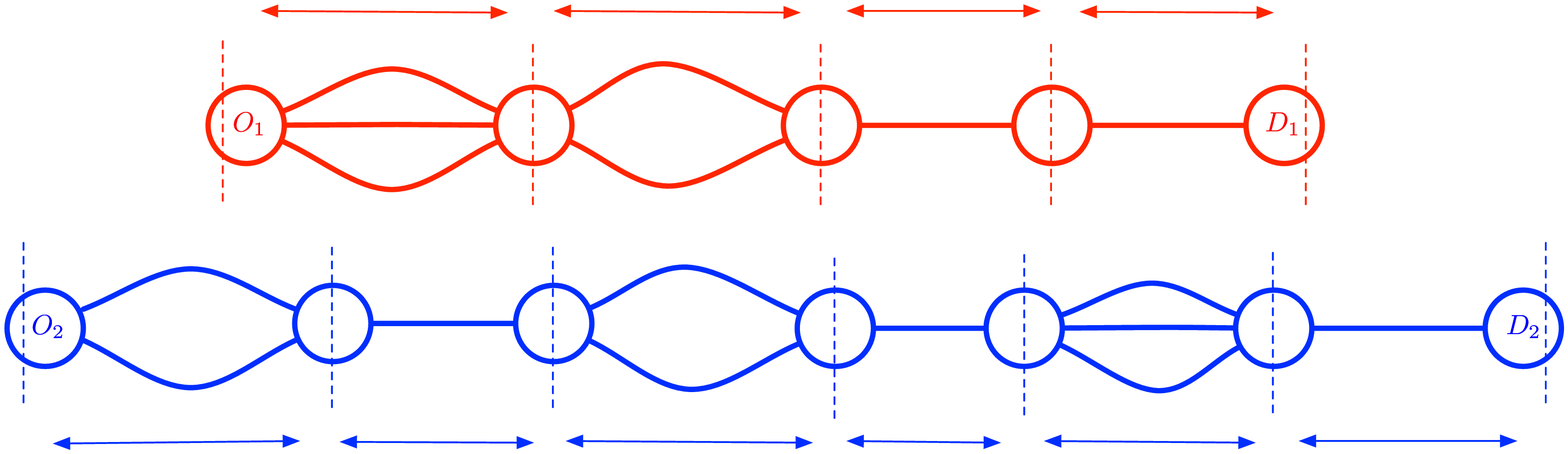}
                \caption{ }
                \label{fig:multipleODpairs1}
        \end{subfigure}
\begin{subfigure}[b]{0.35\textwidth}
 \includegraphics[width=\textwidth]{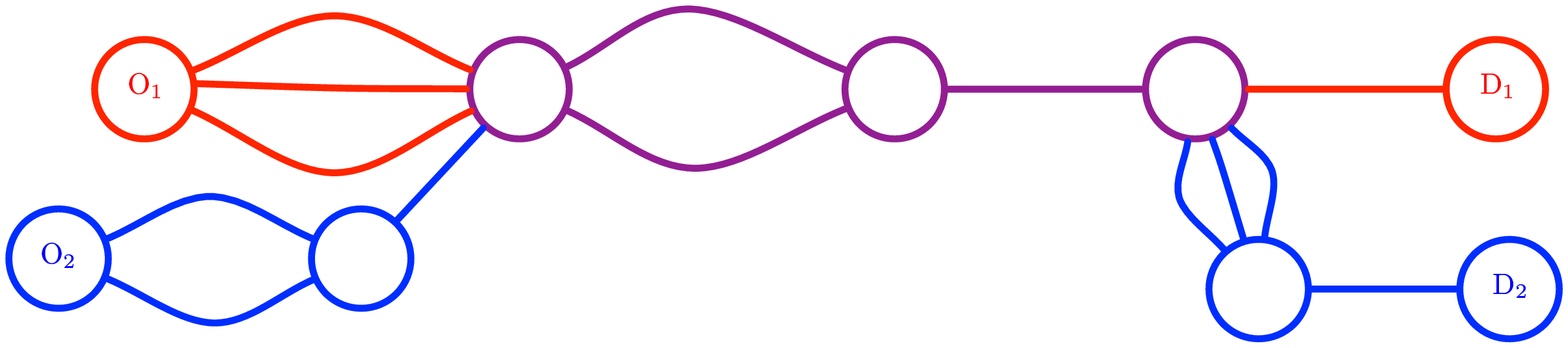}
                \caption{}
                \label{fig:multipleODpairs2}
\end{subfigure}
\caption{(a) Two SLI networks with their corresponding LI blocks. (b) A graph with two OD pairs for which IBP does not occur. }
\end{figure}

\begin{figure}[t]
\centering
               \includegraphics[width=0.35\textwidth]{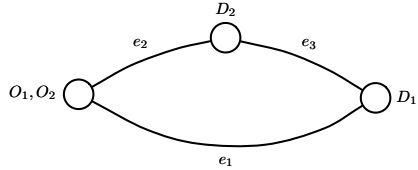}
                \caption{Example \ref{Example:MultipleOD}: IBP does not occur on this network.}
                \label{fig:ExampleMultipleOD}
\end{figure}

In concluding this subsection we should note that BP with multiple origin destination pairs has been studied in \cite{epstein2009efficient, lin2011stronger, fujishige2015matroids, holzman2015strong}, and \cite{chen2016network}. In particular, \cite{chen2016network} provide a full characterization of network topologies for which BP occurs with multiple origin-destination pairs. BP as defined in \cite{chen2016network} occurs if adding an edge (decreasing cost of an edge) increases the equilibrium cost of the users of one of the origin-destination pairs, even if that edge is never used by the users of that origin-destination pair. 
With this definition it is possible to have a network for which IBP does not occur while BP occurs. For instance, BP occurs in the network considered in Example \ref{Example:MultipleOD} (see \cite{chen2016network}) while we showed IBP does not occur in this network. 

\section{Efficiency of Information Constrained Wardrop Equilibrium}

\label{sec:POA}


In this section, we provide bounds on the inefficiency of ICWE. We show that the worst-case inefficiency remains the same as the standard Wardrop
Equilibrium, even though our notion of ICWE is considerably more general than
Wardrop Equilibrium since it allows for a rich amount of heterogeneity among
users.

We start by defining the social optimum defined as the feasible flow vector
that minimizes the total cost over all edges. We focus on aggregate efficiency loss defined as the ratio of total cost experienced by all users at social optimum and ICWE. We provide tight
bounds on this measure of efficiency loss which are realized for
different classes of cost functions. We also consider type-specific efficiency loss defined as the ratio of total cost experienced by type $i$ users at social optimum and ICWE. We show that the bounds in this case are different from the ones in the standard Wardrop Equilibrium.


Given a traffic network with multiple information types $(G,\mathcal{E}_{1:K},s_{1:K},\mathbf{c})$, we
define the social optimum, denoted by $f_{\mathrm{so}}^{(1:K)}= (f_{\mathrm{so}}^{(1)},\dots ,f_{%
\mathrm{so}}^{(K)})$ (or simply $f_{\mathrm{so}}$), as the optimal solution of the following optimization
problem: 
\begin{align}
& \min \sum_{e\in \mathcal{E}}f_{e}~c_{e}(f_{e}),  \notag
\label{eq:socialoptimum} \\
& f_{e}=\sum_{i=1}^{K}\sum_{r\in \mathcal{R}_{i}~:~e\in r}f_{r}^{(i)}, 
\notag \\
& \sum_{r\in \mathcal{R}_{i}}f_{r}^{(i)}=s_{i},\text{ and }f_{r}^{(i)}\geq 0%
\text{ for all }r\in \mathcal{R}_{i}\text{ and }i.
\end{align}%
This optimization problem minimizes the total cost over all edges incurred
by all users of all types. Under the assumption that each cost function is
continuous, it follows that the optimal solution of problem %
\eqref{eq:socialoptimum} and hence a social optimum always exists. 
We denote the total cost of a feasible flow $f^{(1:K)}$ by 
\begin{equation*}
C(f^{(1:K)})\triangleq \sum_{e\in \mathcal{E}}f_{e}c_{e}(f_{e}).
\end{equation*}%
Similarly, for a feasible flow $f^{(1:K)}$, we define the total cost
incurred by type $i$ users as 
\begin{equation*}
C^{(i)}(f^{(1:K)})\triangleq \sum_{e\in \mathcal{E}}f_{e}^{(i)}c_{e}(f_{e}).
\end{equation*}%
%
%
Consequently, we define the socially optimal cost of type $i$ as $C_{\text{so%
}}^{(i)}={C}^{(i)}(f_{\mathrm{so}}^{(1:K)})$ for $i\in[K]$ and the
overall cost (over all types) of social optimum as $C_{\mathrm{so}%
}={C}(f_{\mathrm{so}}^{(1:K)})$. Similarly, we define equilibrium cost of
type $i$ as $C_{\mathrm{cwe}}^{(i)}={C}^{(i)}(f_{\mathrm{cwe}}^{(1:K)})$ for 
$i \in [K]$ and the overall cost (over all types) of ICWE as $C_{\mathrm{%
cwe}}={C}(f_{\mathrm{cwe}}^{(1:K)})$, where $f_{\mathrm{cwe}}^{(1:K)}$ (or simply $f_{\mathrm{cwe}}$)
denotes an ICWE. Note that ${C}^{(i)}(f_{\mathrm{cwe}}^{(1:K)})$ is different
from equilibrium cost of type $i$ denoted by $c^{(i)}$, as the latter notion
is the cost per unit of flow and the former is the aggregate cost. The
relation between these two is simply $C_{\mathrm{cwe}}^{(i)}=s_{i}\ c^{(i)}$, $i\in \lbrack K \rbrack$. 

The following result from \cite{roughgarden2002bad} and \cite{correa2005inefficiency}
presents bounds on the efficiency loss of Wardrop Equilibrium, which
provides bounds on the efficiency loss of ICWE in a traffic network with
single information type with $\mathcal{E}_{1}=\mathcal{E}$ denoted by $(G,%
\mathcal{E}_{1},s_{1},\mathbf{c})$. 

\begin{proposition}[\protect\cite{roughgarden2002bad}]
\label{pro:TimPA} Consider a traffic network with a single information type $%
(G,\mathcal{E}_{1},s_{1},\mathbf{c})$. Let $f_{\mathrm{we}}$ be a Wardrop
Equilibrium and $f_{\mathrm{so}}$ be a social optimum. Then, we have

\begin{itemize}
\item[(a)] $\inf_{(G, \mathcal{E}_1,s_1, \mathbf{c}): \ c_e \  \mathrm{ convex}}
\frac{C_{\mathrm{so}}}{C_{\mathrm{we}}}=0$.

\item[(b)] Suppose $c_e(x)$ is an affine function for all $e \in \mathcal{E}$%
. Then, we have $\frac{C_{\mathrm{so}}}{C_{\mathrm{we}}} \ge \frac{3}{4}$, and
this bound is tight.

\item[(c)] Let $\mathcal{C}$ be a class of latency functions and let $\beta(%
\mathcal{C})= \sup_{c \in \mathcal{C}, \ x \ge 0} \beta(c, x)$, where
\begin{align*}
\beta(c, x)= \max_{z \ge 0} \frac{z\ (c(x)-c(z) )}{x\ c(x)}.
\end{align*}
Then, we have $\frac{C_{\mathrm{so}}}{C_{\mathrm{we}}} \ge 1- \beta(\mathcal{C})$, and
the bound is tight.
\end{itemize}
\end{proposition}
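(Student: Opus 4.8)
The plan is to reduce to the classical price-of-anarchy bounds for non-atomic congestion games: when $K=1$ and $\mathcal{E}_1=\mathcal{E}$, Definition~\ref{def:ICWE} is exactly Wardrop's condition, so it suffices to recover the estimates of \cite{roughgarden2002bad} and \cite{correa2005inefficiency} in our notation. The only tool needed is the variational inequality implied by Proposition~\ref{pro:potential}: since $f_{\mathrm{we}}$ minimizes the convex potential $\Phi=\sum_{e}\int_0^{f_e}c_e$ over the feasible polytope and $\partial\Phi/\partial f_r = c_r(f)$, first-order optimality gives $\sum_{e\in\mathcal{E}} c_e(f_{\mathrm{we},e})\,(f_e-f_{\mathrm{we},e})\ge 0$ for every feasible flow $f$; re-summing over routes, this reads $C_{\mathrm{we}}=\sum_e f_{\mathrm{we},e}\,c_e(f_{\mathrm{we},e})\le\sum_e f_e\,c_e(f_{\mathrm{we},e})$. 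Taking $f=f_{\mathrm{so}}$ reduces the whole problem to bounding the cross term $\sum_e f_{\mathrm{so},e}\,c_e(f_{\mathrm{we},e})$ in terms of $C_{\mathrm{so}}$ and $C_{\mathrm{we}}$.

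For part~(c) I would use the inequality that defines $\beta$: for every edge $e$ and every $z\ge 0$, $z\bigl(c_e(f_{\mathrm{we},e})-c_e(z)\bigr)\le\beta(c_e,f_{\mathrm{we},e})\,f_{\mathrm{we},e}\,c_e(f_{\mathrm{we},e})\le\beta(\mathcal{C})\,f_{\mathrm{we},e}\,c_e(f_{\mathrm{we},e})$, hence $z\,c_e(f_{\mathrm{we},e})\le z\,c_e(z)+\beta(\mathcal{C})\,f_{\mathrm{we},e}\,c_e(f_{\mathrm{we},e})$. Summing over $e$ with $z=f_{\mathrm{so},e}$ gives $\sum_e f_{\mathrm{so},e}\,c_e(f_{\mathrm{we},e})\le C_{\mathrm{so}}+\beta(\mathcal{C})\,C_{\mathrm{we}}$, which combined with the variational inequality yields $C_{\mathrm{we}}\bigl(1-\beta(\mathcal{C})\bigr)\le C_{\mathrm{so}}$, i.e.\ $C_{\mathrm{so}}/C_{\mathrm{we}}\ge 1-\beta(\mathcal{C})$. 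Part~(b) is the special case $\mathcal{C}=\{\text{affine}\}$: for $c(x)=ax+b$ one computes $\beta(c,x)=\max_{z\ge 0}\frac{a z(x-z)}{x(ax+b)}$, and since $z(x-z)$ is maximized at $z=x/2$ this equals $\frac{ax}{4(ax+b)}\le\frac14$, so $\beta(\mathcal{C})=\frac14$ and the bound becomes $\frac34$. Part~(a) then follows by observing that over the class of convex functions $\beta(\mathcal{C})$ can be pushed to $1$ — e.g.\ $c(x)=x^d$ has $\beta\to 1$ as $d\to\infty$ — so the general lower bound degrades to $0$; but to conclude $\inf=0$ one must actually exhibit instances rather than merely invoke a vanishing lower bound.

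The tightness claims in (a), (b), (c) I would establish with Pigou-type two-edge parallel networks carrying unit demand. For (b), take $c_1(x)=x$ and $c_2(x)=1$: the equilibrium routes all flow on $e_1$ with $C_{\mathrm{we}}=1$, while the optimum splits $\tfrac12$–$\tfrac12$ with $C_{\mathrm{so}}=\tfrac34$, so the bound $\tfrac34$ is met; the same family with $c_1(x)=x^d$, $c_2(x)=1$ gives $C_{\mathrm{so}}/C_{\mathrm{we}}=1-\tfrac{d}{d+1}(d+1)^{-1/d}\to 0$, proving (a). For general $\mathcal{C}$, pick — after the standard rescaling $x\mapsto x/t$, $c\mapsto t\,c$, which leaves all cost ratios invariant — a function $c_1\in\mathcal{C}$ whose value $\beta(c_1,1)$ approaches $\beta(\mathcal{C})$, and set $c_2\equiv c_1(1)$ constant; a direct computation of the equilibrium and optimal splits then shows $C_{\mathrm{so}}/C_{\mathrm{we}}\to 1-\beta(\mathcal{C})$. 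The step I expect to be the real obstacle is not either lower bound — those drop out of the two-line variational-inequality-plus-$\beta$ argument — but the tightness construction in (c): one has to verify that the single-parameter Pigou family together with the scaling reduction genuinely realizes the supremum defining $\beta(\mathcal{C})$, so that restricting attention to it incurs no loss of generality.
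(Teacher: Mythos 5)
Your argument is correct and is essentially the standard one: the paper itself does not prove Proposition~\ref{pro:TimPA} (it is quoted from \cite{roughgarden2002bad} and \cite{correa2005inefficiency}), but the variational-inequality-plus-$\beta$ computation you give is exactly the argument the paper deploys to prove its generalization, Proposition~\ref{pro:priceofanarchy}, and your Pigou-type instances for parts (a) and (b) are the standard tightness examples. The one step you flag as delicate --- realizing $1-\beta(\mathcal{C})$ exactly in part (c) via scaled two-edge instances --- is likewise deferred to the cited literature by the paper, so nothing is missing relative to the paper's treatment.
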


Our next result shows that Proposition \ref{pro:TimPA} holds exactly for
ICWE, indicating that within the class of heterogeneous,
information-constrained traffic equilibria we consider, the worst-case
scenario occurs for networks with homogeneous users.

\begin{proposition}
\label{pro:priceofanarchy} Consider a traffic network with multiple
information types $(G,\mathcal{E}_{1:K},s_{1:K},\mathbf{c})$. Let $f_{\mathrm{cwe}}$ be an ICWE and $f_{\mathrm{so}}$ be a social optimum. Then, we have

\begin{itemize}
\item[(a)] $\inf_{(G, \mathcal{E}_{1:K}, s_{1:K}, \mathbf{c}): ~ c_e \ \mathrm{ convex}} \frac{C_{\mathrm{so}}}{C_{\mathrm{cwe}}}=0$.

\item[(b)] Suppose $c_e(x)$ is an affine function for all $e \in \mathcal{E}$%
. Then, we have $\frac{C_{\mathrm{so}}}{C_{\mathrm{cwe}}} \ge \frac{3}{4}$, and
this bound is tight.

\item[(c)] Let $\mathcal{C}$ be a class of latency functions and let $\beta(%
\mathcal{C})= \sup_{c \in \mathcal{C}, x \ge 0} \beta(c, x)$, where
\begin{align*}
\beta(c, x)= \max_{z \ge 0} \frac{z\ (c(x)-c(z) )}{x\ c(x)}.
\end{align*}
Then, we have $\frac{C_{\mathrm{so}}}{C_{\mathrm{cwe}}} \ge 1- \beta(\mathcal{C}%
) $, and the bound is tight.
\end{itemize}
\end{proposition}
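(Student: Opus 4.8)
The plan is to reduce the efficiency-loss bounds for ICWE to the corresponding bounds for the ordinary Wardrop Equilibrium in Proposition~\ref{pro:TimPA}, by constructing an auxiliary single-type instance whose Wardrop Equilibrium has exactly the same edge flows as the given ICWE. For part~(a), the lower bound $0$ is trivial since the class of multiple-information-type instances strictly contains single-type instances, so the infimum can only be at most that of Proposition~\ref{pro:TimPA}(a), which is already $0$; nonnegativity of $C_{\mathrm{so}}/C_{\mathrm{cwe}}$ gives the matching lower bound, and tightness is inherited from the single-type construction. Parts~(b) and~(c) are the substantive claims: we must show the lower bounds $3/4$ (affine) and $1-\beta(\mathcal C)$ hold for every ICWE, and that they remain tight (tightness is again immediate by specializing to $K=1$, $\mathcal E_1=\mathcal E$, where ICWE coincides with WE and Proposition~\ref{pro:TimPA} already exhibits tight examples).

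The key step is the following reduction. Let $f_{\mathrm{cwe}}^{(1:K)}$ be an ICWE of $(G,\mathcal E_{1:K},s_{1:K},\mathbf c)$ with edge flows $(f_e)_{e\in\mathcal E}$. I would invoke the variational-inequality characterization underlying the ICWE: for each type $i$ and each $r,\tilde r\in\mathcal R_i$ with $f_r^{(i)}>0$ we have $c_r(f)\le c_{\tilde r}(f)$. The standard argument bounding $C_{\mathrm{so}}$ from below (as in \cite{roughgarden2002bad, correa2005inefficiency}) writes, for \emph{any} feasible flow $g^{(1:K)}$,
\begin{equation*}
C(f_{\mathrm{cwe}})=\sum_{e\in\mathcal E}f_e\,c_e(f_e)=\sum_{i=1}^K\sum_{r\in\mathcal R_i}f_r^{(i)}c_r(f)\le\sum_{i=1}^K\sum_{r\in\mathcal R_i}g_r^{(i)}c_r(f)=\sum_{e\in\mathcal E}g_e\,c_e(f_e),
\end{equation*}
where the inequality uses that at an ICWE every type-$i$ route carrying positive flow has minimal cost \emph{within $\mathcal R_i$}, and $g^{(i)}$ is supported on $\mathcal R_i$ as well. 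Crucially this inequality holds with $g^{(1:K)}=f_{\mathrm{so}}^{(1:K)}$, since the social optimum is itself a feasible flow respecting the information sets. Hence $C(f_{\mathrm{cwe}})\le\sum_{e}(f_{\mathrm{so}})_e\,c_e(f_e)$, which is precisely the inequality that drives the price-of-anarchy bound in the single-type case. From here the remainder is the purely algebraic manipulation already carried out in \cite{roughgarden2002bad} (for affine costs, using $x\,c_e(x)-y\,c_e(x)\ge\frac14 y\,c_e(y)$ pointwise when $c_e$ is affine) and in \cite{correa2005inefficiency} (using the definition of $\beta(\mathcal C)$), applied verbatim; none of it sees the type structure, only the edge flows $f_e$ of the equilibrium and the edge flows $(f_{\mathrm{so}})_e$ of the optimum.

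The main obstacle — and the place where one must be careful — is justifying the middle inequality $\sum_i\sum_{r\in\mathcal R_i}f_r^{(i)}c_r(f)\le\sum_i\sum_{r\in\mathcal R_i}(f_{\mathrm{so}})_r^{(i)}c_r(f)$ when the social-optimal flow $f_{\mathrm{so}}^{(i)}$ of type $i$ may be routed on routes in $\mathcal R_i$ that are \emph{not} used by type $i$ in the ICWE: here we need that $c_r(f)\ge c^{(i)}$ for \emph{every} $r\in\mathcal R_i$, not merely for those with positive equilibrium flow — this is exactly the second sentence of Definition~\ref{def:ICWE}, so it is available. One should also note that the edge-cost evaluation point $f_e$ in $c_r(f)$ is the same on both sides (it is the ICWE edge flow), so no monotonicity of $c_e$ is needed for this step; monotonicity/affineness/convexity enters only in the subsequent pointwise estimates. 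With the reduction in hand, parts~(b) and~(c) follow by quoting the computations in Proposition~\ref{pro:TimPA}(b),(c), and tightness in all three parts follows by exhibiting the single-type worst-case instances of Proposition~\ref{pro:TimPA} as special cases of ICWE instances.
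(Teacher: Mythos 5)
Your proposal is correct and follows essentially the same route as the paper: the variational-inequality step $C(f_{\mathrm{cwe}})\le\sum_{e}f_{e,\mathrm{so}}\,c_e(f_{e,\mathrm{cwe}})$ obtained from the ICWE optimality of each type within its own information set (applied with the social optimum as the comparison flow) is exactly the paper's Equation~\eqref{eq:pro:proof:priceofanarchy}, after which both arguments invoke the same pointwise affine/$\beta(\mathcal{C})$ estimates and inherit part~(a) and all tightness claims from the single-type case of Proposition~\ref{pro:TimPA}. Your remark that the definition of ICWE gives $c_r(f)\ge c^{(i)}$ for \emph{all} $r\in\mathcal{R}_i$, which is what licenses comparing against socially optimal routes not used in equilibrium, is precisely the point the paper's proof relies on as well.
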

\begin{proof} We first show that for any type $i$, and any feasible flow $%
f^{(i)}$ for this type, we have 
\begin{align}  \label{eq:pro:proof:priceofanarchy}
\sum_{e \in \mathcal{E}} c_e(f_{e, \mathrm{cwe}}) ({f_{e, \mathrm{cwe}}}%
^{(i)}- f_e^{(i)}) \le 0.
\end{align}
The reason is that in ICWE each type uses only the routes with the minimal
costs. Therefore, for any type $i$ and any feasible flow $f^{(i)}$ for type $%
i$, we have 
\begin{align*}
\sum_{ r\in \mathcal{R}_i} c_r(f_{\mathrm{cwe}}^{(1:K)}) {f}^{(i)}_{r, 
\mathrm{cwe}} \le \sum_{ r\in \mathcal{R}_i} c_r(f_{\mathrm{cwe}}^{(1:K)}) {f%
}^{(i)}_r.
\end{align*}
This leads to 
\begin{align*}
0 & \ge \sum_{ r\in \mathcal{R}_i} c_r(f_{\mathrm{cwe}}^{(1:K)}) ({f}%
^{(i)}_{r, \mathrm{cwe}} - {f}^{(i)}_r) = \sum_{ r\in \mathcal{R}_i} \left(
\sum_{e: e \in r} c_e(f_{e, \mathrm{cwe}}) \right) ({f}^{(i)}_{r, \mathrm{cwe%
}} - {f}^{(i)}_r) \\
& = \sum_{e \in \mathcal{E}} c_e(f_{e, \mathrm{cwe}}) \sum_{ r\in \mathcal{R}%
_i: ~ e \in r} ({f}^{(i)}_{r, \mathrm{cwe}} - {f}^{(i)}_r)= \sum_{e \in 
\mathcal{E}} c_e(f_{e, \mathrm{cwe}}) ({f}^{(i)}_{e, \mathrm{cwe}} - {f}%
^{(i)}_e),
\end{align*}
which is the desired inequality, showing Equation %
\eqref{eq:pro:proof:priceofanarchy}. We next proceed with the proof. \newline
\textbf{Part (a)}: this holds because a traffic network with one type is a
special case of traffic network with multiple information types and part (a)
of Proposition \ref{pro:TimPA} shows the infimum is zero. \newline
\textbf{Part (b)}: using Equation \eqref{eq:pro:proof:priceofanarchy} for $f^{(i)}=
f^{(i)}_{\mathrm{so}}$ for any $i  \in [K]$, and taking summation over all
types $i \in [K]$, we obtain 
\begin{align*}
C_{\mathrm{cwe}} = \sum_{e \in \mathcal{E}} f_{e, \mathrm{cwe}} c_e(f_{e, 
\mathrm{cwe}})&= \sum_{i=1}^K \sum_{e \in \mathcal{E}} c_e(f_{e, \mathrm{cwe}%
}){f}^{(i)}_{e, \mathrm{cwe}} \le \sum_{i=1}^K \sum_{e \in \mathcal{E}}
c_e(f_{e, \mathrm{cwe}}){f}^{(i)}_{e, \mathrm{so}} \\
& = \sum_{e \in \mathcal{E}} c_e(f_{e, \mathrm{cwe}})\sum_{i=1}^K {f}%
^{(i)}_{e, \mathrm{so}} =\sum_{e \in \mathcal{E}} f_{e, \mathrm{so}}
c_e(f_{e, \mathrm{cwe}}) \\
&= \sum_{e \in \mathcal{E}} f_{e, \mathrm{so}} c_e(f_{e, \mathrm{so}}) +
\sum_{e \in \mathcal{E}} f_{e, \mathrm{so}} \left( c_e(f_{e, \mathrm{cwe}})
- c_e(f_{e, \mathrm{so}}) \right) \\
& \le \sum_{e \in \mathcal{E}} f_{e, \mathrm{so}} c_e(f_{e, \mathrm{so}}) + 
\frac{1}{4} \sum_{e \in \mathcal{E}} f_{e, \mathrm{cwe}} c_e(f_{e, \mathrm{%
cwe}}),
\end{align*}
where the last inequality comes from the fact that with $c_e(x)=a_e x +
b_e$ for $b_e, a_e \ge 0$, we have
\begin{align*}
f_{e, \mathrm{so}} \left( c_e(f_{e, \mathrm{cwe}}) - c_e(f_{e, \mathrm{so}})
\right)= a_e f_{e, \mathrm{so}} (f_{e, \mathrm{cwe}}- f_{e, \mathrm{so}})
\le \frac{1}{4} {f^2_{e, \mathrm{cwe}}} a_e \le \frac{1}{4} f_{e, \mathrm{cwe%
}} c_e(f_{e, \mathrm{cwe}}).
\end{align*}
The proof of tightness follows from part (b) of Proposition \ref%
{pro:TimPA} as a traffic network with one type is a special case of a traffic
network with multiple information types. 
\newline
\textbf{Part (c)}: using the same argument as in part (b), we obtain
\begin{align*}
C_{\mathrm{cwe}} = \sum_{e \in \mathcal{E}} f_{e, \mathrm{cwe}} c_e(f_{e, 
\mathrm{cwe}}) &\le \sum_{e \in \mathcal{E}} f_{e, \mathrm{so}} c_e(f_{e, 
\mathrm{so}}) + \sum_{e \in \mathcal{E}} f_{e, \mathrm{so}} \left( c_e(f_{e, 
\mathrm{cwe}}) - c_e(f_{e, \mathrm{so}}) \right) \\
& \le \sum_{e \in \mathcal{E}} f_{e, \mathrm{so}} c_e(f_{e, \mathrm{so}}) +
\beta(\mathcal{C}) \sum_{e \in \mathcal{E}} f_{e, \mathrm{cwe}} c_e(f_{e, 
\mathrm{cwe}}),
\end{align*}
where the last inequality comes from the fact that 
\begin{align*}
f_{e, \mathrm{so}} \left( c_e(f_{e, \mathrm{cwe}}) - c_e(f_{e,\mathrm{so}})
\right) \le \beta(c_e, f_{e, \mathrm{cwe}}) f_{e, \mathrm{cwe}} c_e(f_{e, 
\mathrm{cwe}}) \le \beta(\mathcal{C}) f_{e, \mathrm{cwe}} c_e(f_{e, \mathrm{%
cwe}}).
\end{align*}
The proof of the tightness follows from part (c) of Proposition \ref%
{pro:TimPA}.
 \end{proof}


In concluding this section, we should note that in this environment with
heterogeneous users, there are alternatives to our formulation of the social
optimum problem, which considers the \textquotedblleft
utilitarian\textquotedblright\ social optimum, summing over the costs of all
groups. An alternative would be to consider a weighted sum or focus on the
class of users suffering the greatest costs. We next illustrate that if we
focus on type-specific costs, even with {affine} cost functions, some groups
of users may have worse than $3/4$ performance relative to the social optimum. 


\begin{example}
\label{example:POA} \textup{\ Consider the network shown in Figure \ref
{fig:examplePOA} with $\mathcal{E}_{1}=\{e_{1}\}$, $\mathcal{E}%
_{2}=\{e_{1},e_{2}\}$. The ICWE is $f_{e_{1},\mathrm{cwe}}^{(1)}=s_{1}$ and $f_{e_{1},%
\mathrm{cwe}}^{(2)}=\frac{1}{a}-s_{1}$, $f_{e_{2},\mathrm{cwe}}^{(2)}=s_{2}-%
\frac{1}{a}+s_{1}$. The equilibrium costs are $C_{\mathrm{cwe}}^{(1)}=s_{1}$
and $C_{\mathrm{cwe}}^{(2)}=s_{2}$. The social optimum is ${f}_{e_{1},%
\mathrm{so}}^{(1)}=s_{1}$ and $f_{e_{1},\mathrm{so}}^{(2)}=\frac{1}{2a}-s_{1}$,
 ${f}_{e_{2},\mathrm{so}}^{(2)}=s_{2}-\frac{1}{2a}+s_{1}$. The
corresponding costs are $C_{\mathrm{so}}^{(1)}=\frac{s_{1}}{2}$ and $C_{\text{%
so}}^{(2)}=\frac{-1}{4a}+s_{2}+\frac{s_{1}}{2}$ (assuming $\frac{1}{2a}\geq
s_{1}$ and $s_{2}\geq \frac{1}{a}-s_{1}$). Therefore, we have
\begin{equation*}
\frac{C_{\mathrm{so}}^{(1)}}{C_{\mathrm{cwe}}^{(1)}}=\frac{1}{2},~\frac{C_{\text{%
so}}^{(2)}}{C_{\mathrm{cwe}}^{(2)}}=\frac{\frac{-1}{4a}+s_{2}+\frac{s_{1}}{2}}{%
s_{2}},\text{ and }\frac{C_{\mathrm{so}}^{(1)}+C_{\mathrm{so}}^{(2)}}{C_{\text{%
cwe}}^{(1)}+C_{\mathrm{cwe}}^{(2)}}=\frac{s_{1}+s_{2}-\frac{1}{4a}}{s_{1}+s_{2}%
}.
\end{equation*}%
We next show that the ratio of the aggregate costs is greater than or equal
to $3/4$. We have $s_{1}+s_{2}\geq \frac{1}{a}$ which leads to
\begin{equation*}
\frac{C_{\mathrm{so}}}{C_{\mathrm{cwe}}}=\frac{s_{1}+s_{2}-\frac{1}{4a}}{%
s_{1}+s_{2}}=1-\frac{1}{4a}\frac{1}{s_{1}+s_{2}}\geq 1-\frac{1}{4a}a=\frac{3%
}{4}.
\end{equation*}%
However, the type-specific efficiency loss can be smaller than $\frac{3}{4}$ as we
have $\frac{C_{\mathrm{so}}^{(1)}}{C_{\mathrm{cwe}}^{(1)}}<\frac{3}{4}$. }
\end{example}

\begin{figure}[t]
\centering
\includegraphics[width=0.25\textwidth]{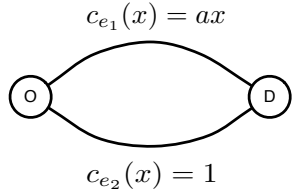}
\caption{Example \protect\ref{example:POA}, type-specific efficiency loss
versus aggregate efficiency loss.}
\label{fig:examplePOA}
\end{figure}

\section{Concluding Remarks}

\label{sec:conclusion}

GPS-based route guidance systems, such as Waze or Google maps, are rapidly
spreading among drivers because of their promise of reduced delays as they
inform their users about routes that they were not aware of or help them
choose dynamically between routes depending on recent levels of congestion.
Nevertheless, there is no systematic analysis of the implications for
traffic equilibria of additional information provided to subsets of users.
In this paper, we systematically studied this question. We first extended
the class of standard congestion games used for analysis of traffic
equilibria to a setting where users are heterogeneous because of their
different information sets about available routes. In particular, each
user's information set contains information about a subset of the edges in
the entire road network, and drivers can only utilize routes consisting of
edges that are in their information sets. We defined the notion of
Information Constrained Wardrop Equilibrium (ICWE), an extension of the
classic Wardrop Equilibrium notion, and established the existence and
essential uniqueness of ICWE.

We then turned to our main focus, which we formulate in the form of
Informational Braess' Paradox (IBP). IBP asks that whether users receiving
additional information can become worse off. Our main result is a
comprehensive answer to this question. We showed that in any network in the
Series of Linearly Independent (SLI) class, which is a strict subset of
series-parallel network, IBP cannot occur, and in any network that is not in
the SLI class, there exists a configuration of edge-specific cost functions
for which IBP will occur. The SLI class is comprised of networks that join
linearly independent networks in series, and linearly independent networks are
those for which every path between origin and destination contains at least
one edge that is not in any other such path. This is the property that
enables us to prove that IBP cannot occur in any SLI network. We also showed
that any network that is not in the SLI class necessarily embeds at least
one of a specific set of basic networks, and then used this property to show that
IBP will occur for some cost configurations in any non-SLI network. We further
proved that whether a given network is SLI can be determined in linear time.
Finally, we also established that the worst-case inefficiency performance of
ICWE is no worse than the standard Wardrop Equilibrium with one type of
users.

There are several natural research directions which are opened up by our
study. These include:

\begin{itemize}
\item Our analysis focused on the effect of additional information on the
set of users receiving the information. For what classes of networks is
additional information very harmful for other users? This question is
important from the viewpoint of fairness and other social objectives. We may
like that users utilizing route guidance systems are experiencing lower
delays, but not if this comes at the cost of significantly longer delays for
others.

\item How \textquotedblleft likely\textquotedblright\ are the cost function
configurations that will cause IBP to occur in non-SLI networks. This
question is important for determining, ex-ante before knowing the exact
traffic flows, whether additional information for some sets of users, coming
for example from route guidance systems, might be harmful.

\item Is there an \textquotedblleft optimal information\textquotedblright\
configuration for users of a traffic network? Specifically, one could
consider the following question: given the traffic demands of $K$ types, $%
s_{1},\dots ,s_{K}$, find the information sets $\mathcal{E}_{1},\dots ,%
\mathcal{E}_{K}$ that generate the minimum overall cost for all types in an
ICWE. This question is related to \cite{roughgarden2001designing,
roughgarden2006severity} who investigate the question of finding the
subnetwork of the initial network that leads to optimal equilibrium cost
with one type of user.

\item We established a sufficient condition under which IBP does not occur on a traffic network with multiple origin-destination pairs. One natural question is to find sufficient and necessary condition for this problem. 

\item Finally, our study poses an obvious empirical question, complementary
to similar studies for the Braess' Paradox: are there real-world settings
where we can detect IBP? 
\end{itemize}


\section*{Acknowledgment}
We would like to thank anonymous referees and editors as well as participants at several seminars and conferences for useful comments and suggestions. In particular, we thank Nicolas Stier, Rakesh Vohra, Ozan Candogan, and Saeed Alaei for helpful comments. 
\section{Appendix}
\subsection{Proofs of Section \ref{sec:existence}}\label{app:proofs:sec:existence}
\subsubsection{Proof of Proposition \ref{pro:potential}} 
Since for any $e \in \mathcal{E}$ the function $c_{e}(\cdot)$ is nondecreasing, $\int_{0}^{f_{e}}c_{e}(z)dz$ as a function of $%
f_{r}^{(i)}$ is convex and continuously differentiable. 
\\ \textbf{Claim 1}: If $f^{(1:K)}$ is an optimal solution of  \eqref{eq:propotential}, then it is an ICWE. 
\\ Since the objective function is convex and the constraints are affine functions, regularity conditions holds and KKT conditions are satisfied, i.e., there exists $\mu_{i, r} \le 0$ and $\lambda_i$ such that for all $ i \in [K]$ and $r\in \mathcal{R}_i$ we have
\begin{align}\label{eq:legrangiandetails1}
\frac{\partial }{\partial f_r^{(i)}} \left( \sum_{e\in \mathcal{E}}\int_{0}^{f_{e}}c_{e}(z)dz-\sum_{i=1}^{K} \lambda_i \left( \sum_{r\in \mathcal{R}_{i}}f_{r}^{(i)}-s_{i}\right) + \sum_{r, i} \mu_{r,i} f_r^{(i)}  \right)=0, 
\end{align}
where $\mu_{r, i} =0$ for $f_r^{(i)} > 0$ \citet[Chapter 3]{bertsekas1999nonlinear}. We show that the flow $f^{(1:K)}$ is an ICWE with the equilibrium cost of type $i$ being $\lambda_i$. First, note that $f^{(1:K)}$ is a feasible flow by the constraints of \eqref{eq:propotential}. Second, we can rewrite \eqref{eq:legrangiandetails1} as 
\begin{align}
\sum_{e \in \mathcal{E}} \frac{\partial f_e }{\partial f^{(i)}_r }  c_e(f_e) = \sum_{e \in \mathcal{E} ~: ~ e \in r}  c_e(f_e) = \begin{cases} = \lambda_i& \mbox {if } $$f_r^{(i)}>0, \\
\ge \lambda_i   & \mbox {if } $$f_r^{(i)}=0,
\end{cases}
\end{align}
where we used $\mu_{r, i}=0$ for $f_r^{(i)}>0$ in the first case and $\mu_{r,i} \le 0$ for $f_r^{(i)}=0$ in the second case. This is exactly the definition of ICWE which completes the proof of Claim 1. 
\\ \textbf{Claim 2}: If $f^{(1:K)}$ is an ICWE, then it is an optimal solution of \eqref{eq:propotential}. 
\\ We let the equilibrium cost of type $i$ users be $\lambda_i$ which leads to the following relation
\begin{align}
\sum_{e \in \mathcal{E} ~: ~ e \in r}  c_e(f_e) = \begin{cases} = \lambda_i & \mbox {if } $$f_r^{(i)}>0, \\
\ge \lambda_i   & \mbox {if } $$f_r^{(i)}=0.
\end{cases}
\end{align}
For all $i \in [K]$ and $r \in \mathcal{R}_i$, if $f_r^{(i)}>0$, then we define $\mu_{i, r}=0$ and if $f_r^{(i)}=0$, then we define $\mu_{i, r}= \lambda_i - \sum_{e \in \mathcal{E} ~: ~ e \in r}  c_e(f_e)$. First, note that $\mu_{i, r} \le 0$ and if $f_r^{(i)}>0$, then $\mu_{i, r}=0$. Second, note that 
\begin{align}
\frac{\partial }{\partial f_r^{(i)}} \left( \sum_{e\in \mathcal{E}}\int_{0}^{f_{e}}c_{e}(z)dz-\sum_{i=1}^{K} \lambda_{i}\left( \sum_{r\in \mathcal{R}_{i}}f_{r}^{(i)}-s_{i}\right) + \sum_{r, i} \mu_{r,i} f_r^{(i)}  \right)=0.
\end{align}
Therefore, the flow $f^{(1:K)}$ together with $\lambda_i$ and $\mu_{i,r}$ satisfy the KKT conditions. Since the objective function of \eqref{eq:propotential} is convex and the constraints are affine functions, KKT conditions are sufficient for optimality \citet[Chapter 3]{bertsekas1999nonlinear}, proving the claim. 

\subsubsection{Proof of Theorem \ref{thm:existsnceCWE}} The set of feasible flows $f^{(1:K)}$ is a compact subset of $%
K|\mathcal{R}|$-dimensional Euclidean space. Since edge cost functions are
continuous, the potential function is also continuous. Weierstrass extreme
value theorem establishes that optimization problem \eqref{eq:propotential}
 attains its minimum which by Proposition \ref{pro:potential} is an ICWE.
\newline
We next, show that in two different equilibria $f^{(1:K)}$
and $\tilde{f}^{(1:K)}$, the equilibrium cost for
each type is the same. By Proposition \ref{pro:potential}, both $%
f^{(1:K)}$ and $\tilde{f}^{(1:K)}$ are
optimal solutions of \eqref{eq:propotential}. Since $\Phi(\cdot)$ is a convex
function, we have 
\begin{align*}
& \Phi \left( \alpha f^{(1:K)} +(1-\alpha )\tilde{f}^{(1:K)} \right)  \leq \alpha \Phi \left( f^{(1:K)} \right)+(1-\alpha )\Phi \left( \tilde{f}^{(1:K)}\right),
\end{align*}%
for any $\alpha \in \lbrack 0,1\rbrack$. Since $\Phi \left( f^{(1:K)} \right)$ and
$\Phi \left( \tilde{f}^{(1:K)}\right)$ are both equal to optimal
value of \eqref{eq:propotential}, and for each $e$, the function $%
\int_{0}^{f_{e}}c_{e}(z)dz$ is convex (its derivative with respect to $f_{e}$
is $c_{e}(f_{e})$ which is non-decreasing), the functions $%
\int_{0}^{f_{e}}c_{e}(z)dz$ for any $e\in \mathcal{E}$ must be linear
between values of $f_{e}$ and $\tilde{f}_{e}$. This shows that all cost
functions $c_{e}$ are constant between $f_{e}$ and $\tilde{f}_{e}$ and in particular the equilibrium costs are the same.
\subsection{Proofs of Section \ref{sec:graphtheory}}
\subsubsection{Proof of Equivalence in  Definition \ref{def:SLI}}\label{app:proofofDefSLI}
We first show that each LI network $G$ is the result of attaching several LI blocks in series. This follows by induction on the number of edges. Using Definition \ref{def:LI}, $G$ is either the result of attaching two LI networks in parallel or the result of attaching an LI network and a single edge in series. If $G$ is the result of attaching two LI networks in parallel, then $G$ is biconnected and so is an LI block. If $G$ is the result of attaching an LI network $G_1$ with a single edge, then the single edge is an LI block and by induction hypothesis $G_1$ is series of several LI blocks. Therefore, $G$ is the result of attaching several LI blocks in series. 

We next show that the following two definitions are equivalent.
\begin{itemize}
\item An SLI network  is either a single LI network or the connection of two SLI networks in series. We let SET1 to denote the set of such networks. 
\item An SLI network consists of attaching several LI blocks in series. We let SET2 to denote the set of such networks.
\end{itemize}
We show that SET1=SET2 by induction on the number of edges, i.e., we suppose that for any network with number of edges less than or equal to $m$ these two sets are equal and then show that for networks with $m+1$ edges the two sets are equal as well (note that the base of this induction for $m=1$ corresponds to a single edge which evidently holds).
\begin{itemize}
\item If a network $G$ belongs to SET1, then either it is a single LI network or is the result of attaching two SLI networks in series. In the former case, it belongs to SET2 as we have shown each LI network is the result of attaching several LI blocks. In the latter case, by induction hypothesis both SLI subnetworks are the series of several LI blocks and so is their attachment in series. This shows SET1 $\subseteq$ SET2. 
\item If a network $G$ belongs to SET2, then either it is a single LI block or is the result of attaching several LI blocks in series. In the former case, by definition it belongs to SET1. In the latter case, we let $G_1$ to denote the LI block that contains origin and the series of the rest of LI blocks by $G_2$. $G_1$ is SLI by definition as it is a single LI block and $G_2$ is SLI by induction hypothesis. Therefore, the series attachment of $G_1$ and $G_2$ belongs to SET1. This shows SET2 $\subseteq$ SET1, completing the proof. 
\end{itemize}

\subsubsection{Proof of Theorem \ref{thm:SLIembedding}}
\begin{figure}[t]
\centering
\includegraphics[width=0.5\textwidth]{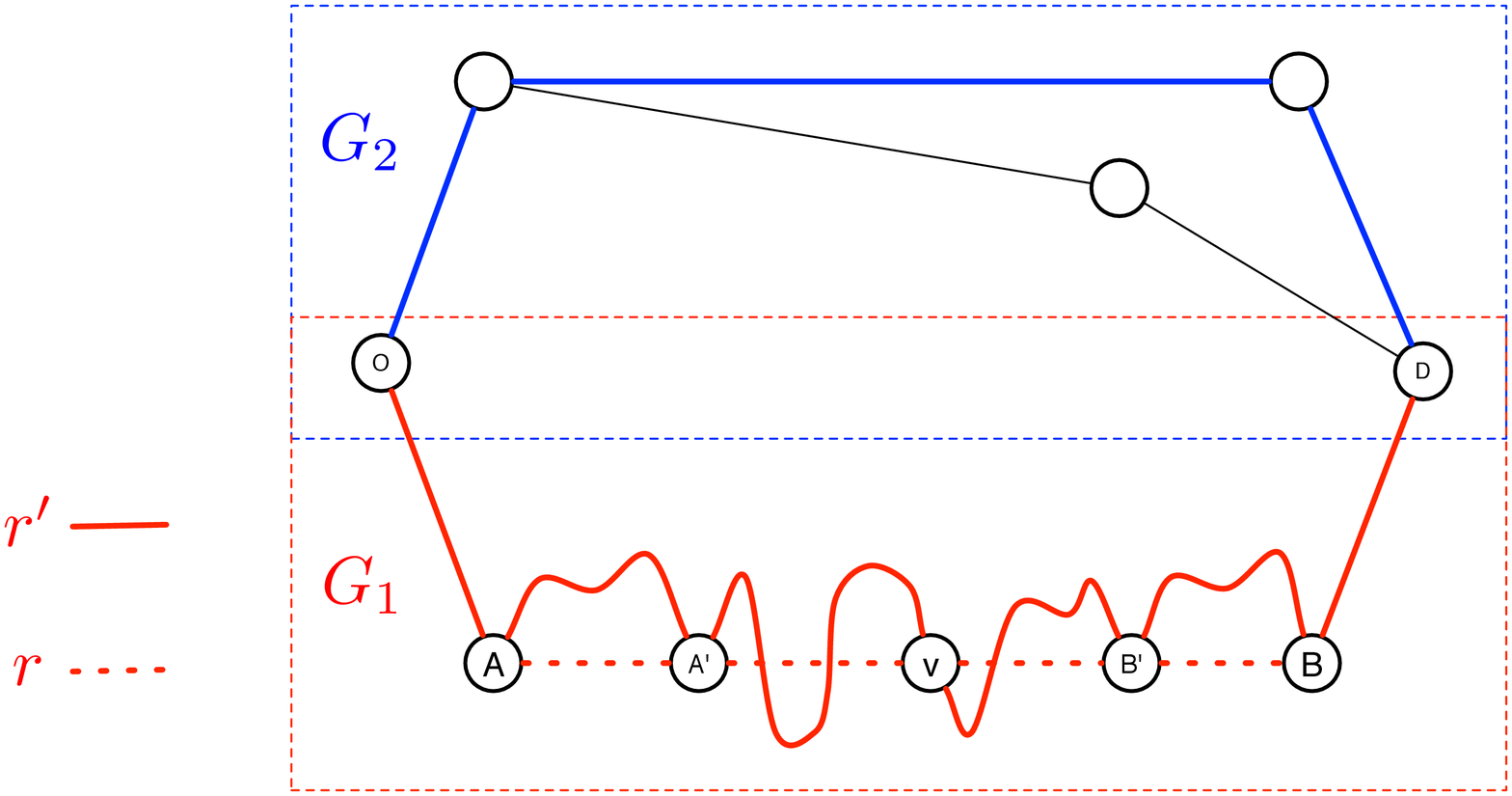}
\caption{Proof of Theorem \ref{thm:SLIembedding}: $G_{1}$ is not LI and $G_{2}$ has at
least one route from O to D. }
\label{fig:pfSLI}
\end{figure}
We first show that if a network $G$ belongs to the class SLI, then none of
the networks shown in Figure \ref{fig:CE2} is embedded in it. First note that
since all networks in the class SLI are series-parallel, using part (b) of
Proposition \ref{pro:alternatedef} implies that the Wheatstone network
shown in Figure \ref{fig:CE23} is not embedded in it. The SLI network $G$ consists of
several LI blocks that are attached in series. Using part (a) of Proposition \ref{pro:alternatedef} none of the networks shown in Figure \ref{fig:CE2} (i.e., networks 
shown in Figs. \ref{fig:CE21}, \ref{fig:CE22},  \ref{fig:CE25},  \ref{fig:CE26},  \ref{fig:CE27},  \ref{fig:CE28},  \ref{fig:CE29},  and \ref{fig:CE210}) can be embedded in one of the LI blocks. We next show that they cannot be embedded in the series of two LI blocks as well.  We let $%
G_1$ and $G_2$ be two LI blocks that are attached in series where the
resulting network from this attachment is $H$. Also, we let the node $c$ be the
attaching node of these two networks. We will show that the network shown
in Figure \ref{fig:CE21} can not be embedded in $H$ (a similar argument shows
that the rest of the networks shown in Figure \ref{fig:CE2} cannot be embedded in it). In
order to reach to a contradiction, we suppose the contrary, i.e., $H$ is
obtained from the network shown in Figure \ref{fig:CE21} by applying the
embedding procedure described in Definition \ref{def:embedding}. We define
the corresponding routes to $e_5$, $e_1 e_4$, and $e_2 e_3$ in $H$ by $r_3$,
$r_1$ and $r_2$. Formally, we start from $r_3=e_5$, $r_1= e_1 e_4$, and $%
r_2= e_2 e_3$ in the network shown in Figure \ref{fig:CE21} and at each step of the embedding procedure whenever we divide an edge on $%
r_i$ ($i=1,2,3$) we will update $r_i$ by adding that edge and whenever we
extend origin or destination we will add the new edge to all $r_i$'s. Given this
construction, in the network $H$ we have three routes $r_3$, $r_1$, and $r_2$,
where $r_1$ and $r_2$ have a common node and they do not have any common
node (except $O$ and $D$) with $r_3$. This is a contradiction as
all routes in $H$ must have node $c$ in common. This completes the proof of the first part. 
\newline
We next show that if none of networks shown in Figure \ref{fig:CE2} is embedded
in $G$, then $G$ belongs to the class SLI. Proposition \ref
{pro:alternatedef}(b) implies that since Figure \ref{fig:CE23} is not embedded
in $G$, it is series-parallel. We next show that given
a series-parallel network $G$, if $G$ is not SLI then we can find an
embedding of one of networks shown in Figures \ref{fig:CE21}, $\cdots$,\ref%
{fig:CE210} in it. The proof is by induction on the number of edges of $G$.
Following Definition \ref{def:SP}, consider the last
building step of the network $G$. If the last step, is attaching two
networks $G_1$ and $G_2$ in series, then assuming that $G$ is not SLI, we
conclude that either $G_1$ or $G_2$ (or both) is not SLI. Therefore, by
induction hypothesis, we can find an embedding of one of the networks shown in Figures \ref%
{fig:CE21},$\cdots$, \ref{fig:CE210} in either $G_1$ or $G_2$, which in turn shown it is embedded in $G$.  If the last step, is attaching two
networks $G_1$ and $G_2$ in parallel, then it must be the case that that either $G_1$ or $G_2$ is not LI. Because otherwise the parallel attachment of two LI networks is LI (Definition \ref{def:LI}) and hence SLI, which contradicts the fact that $G$ is not SLI. Without loss of generality, we let
the network that is not LI to be $G_1$. Therefore, part (a) of Proposition 
\ref{pro:alternatedef} shows that there exist two routes $r$ and $%
r^{\prime }$ and a vertex $v$ common to both routes such that both sections $%
r_{Ov}$ and $r^{\prime }_{Ov}$ as well as $r_{vD}$ and $r^{\prime }_{vD}$
are not equal (note that $v \not\in \{O, D\}$ because otherwise if $v=O$, then $r_{Ov}=r^{\prime }_{Ov}$ as both are the single node $O$).

Note that using part (b) of Proposition \ref{pro:alternatedef}, there is a way to index vertices such that along any route, the vertices have increasing indices. We let $A$ be the last vertex (with the prescribed indexing) before which the two
routes $r$ and $r^{\prime }$ become the same (this vertex can be $O$
itself). Since $v$ is the common vertex of theses two routes and $r_{Ov} \neq
r^{\prime }_{Ov}$ such a vertex exists.
Because $v$ is a common vertex of $r$ and $r^{\prime }$ the two routes $r$
and $r^{\prime }$ have a common vertex between $A$
and $v$. We let $A^{\prime }$ to be the first such vertex (it can be $v$ itself). Similarly, we define $B$ as the first
vertex after which $r$ and $r^{\prime }$ become the same ($B$ can be $D$ itself) and $%
B^{\prime }$ as the last vertex after $v$ for which $r$ and $r^{\prime}$ coincide ($B^{\prime }$ can be $v$ itself). Given these definitions for the nodes $v, A, A', B$, and $B'$, we know that $r_{AA'}$ (the path between $A$ and $A'$ on $r$) and $r^{\prime}_{AA'}$ (the path between $A$ and $A'$ on $r^{\prime}$) do not have any vertex in common and similarly $r_{BB'}$ and $r^{\prime}_{BB'}$ do not have any vertex in common. The definition of the nodes $A,
A^{\prime }, B, B^{\prime }$ is illustrated in Figure \ref{fig:pfSLI}. Next, we
show that one of the networks shown in Figures \ref{fig:CE21},$\dots$, \ref{fig:CE210} is embedded in $G$. We have the following
cases:

\begin{itemize}
\item $A=O$, $B=D$, $A^{\prime }=v$, and $B^{\prime }=v$: in this case, the
network shown in Figure \ref{fig:CE21} is embedded in $G$. This is because
there are two disjoint paths from $O$ to $v$ and from $v$ to $D$ and there
is at least one path from $O$ to $D$ in $G_2$. Since any other edge and vertex of
the network belongs to a path that connects $O$ to $D$, we can construct the
graph $G$ by starting from the network shown in Figure \ref{fig:CE21} and
applying the embedding procedure.

\item $A=O$, $B=D$, and $A^{\prime }\neq v$ or $B^{\prime }\neq v$: in this case, the network shown in Figure \ref{fig:CE22} is embedded in $G$. This is because there is at least one path from $O$ to $D$ in $G_2$ and the network shown in Figure \ref{fig:CE22} is embedded in $G_1$. To see this, note that the edges $e_1$ and $e_2$ are embedded in the section of the routes $r$ and $r^{\prime}$ between $O$ and $A^{\prime}$, and the edges $e_3$ and $e_4$ are embedded in the the section of the routes $r$ and $r^{\prime}$ between $B^{\prime}$ and $D$. Also, note that the single edge $e_6$ is embedded in the network between $A^{\prime}$ and $B^{\prime}$ (single edge is embedded in any network). 

\item $A \neq O$, $B=D$, and $A^{\prime }= v$ and $B^{\prime }= v$: the network shown in Figure \ref{fig:CE25} is embedded in $G$.

\item $A = O$, $B \neq D$, and $A^{\prime }= v$ and $B^{\prime }= v$: the network shown in Figure \ref{fig:CE26} is embedded in $G$.

\item $A = O$, $B \neq D$, and $A^{\prime }\neq v$ or $B^{\prime }\neq v$: the network shown in Figure \ref{fig:CE27} is embedded in $G$.

\item $A \neq O$, $B = D$, and $A^{\prime }\neq v$ or $B^{\prime }\neq v$: the network shown in Figure \ref{fig:CE28} is embedded in $G$.

\item $A \neq O$, $B \neq D$, and $A^{\prime }\neq v$ or $B^{\prime }\neq v$%
: the network shown in figure \ref{fig:CE29} is embedded in $G$.

\item $A \neq O$, $B \neq D$, and $A^{\prime }= v$ and $B^{\prime }= v$: the network shown in Figure \ref{fig:CE210} is embedded in $G$.
\end{itemize}
This completes the proof.

\endproof

\subsubsection{Proof of Proposition \ref{pro:recognitionSLI}}
We use the following results and definitions in this proof. 
\begin{proposition}[\protect\cite{valdes1979recognition}]
\label{pro:treedecomposition}
A network is series-parallel if following the steps $S$ and $P$ shown in Figure \ref{fig:tarjaninduction} in any order, turns the network into a single
edge connecting origin to destination. 
Moreover, if a network is series-parallel, then in linear time $O(|\mathcal{E}|+ |V|)$ we can obtain a binary tree decomposition (shown in Figure \ref
{fig:treedecomposition}) which indicates a sequence of $S$ and $P$ that turns $G$ into a single edge. 
\end{proposition}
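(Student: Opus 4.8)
The plan is to establish the two assertions separately: first, the \emph{correctness} of the $S$--$P$ reduction, namely that a network collapses to a single edge under the reductions precisely when it is series-parallel and that the outcome does not depend on the order in which reductions are applied; and second, the \emph{efficiency} claim, namely that for a series-parallel network a terminating reduction sequence together with its associated binary decomposition tree can be computed in $O(|\mathcal{E}|+|V|)$ time. The first assertion carries the mathematical content, and the second is an implementation argument built on top of it.

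For correctness I would argue in both directions using the recursive structure in Definition \ref{def:SP}. For the direction that every series-parallel network reduces to a single edge, I would induct on the number of edges following the building steps of Definition \ref{def:SP}: a single edge is already reduced; if $G$ is the series composition of $G_1$ and $G_2$, then by the inductive hypothesis each reduces to a single edge and one $S$ step merges the two resulting edges across the shared terminal; if $G$ is the parallel composition, each part reduces and one $P$ step merges the two parallel edges between $O$ and $D$. The converse, that reducibility to a single edge forces $G$ to be series-parallel, I would run backwards: the final single edge is trivially series-parallel, and each inverse reduction (splitting an edge into two series edges through a new degree-two vertex, or into two parallel edges between the same endpoints) is exactly a series or parallel composition in the sense of Definition \ref{def:SP}, so series-parallelness is preserved along the un-reduction and yields $G$. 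The order-independence (``in any order'') is the confluence property: since each reduction strictly decreases the edge count, the rewriting system terminates, so by Newman's lemma it suffices to verify \emph{local} confluence. Here I would enumerate the finitely many ways two reducible configurations can overlap---two $S$ steps at distinct internal degree-two vertices, two $P$ steps at distinct parallel bundles, and a mixed $S$/$P$ pair---and check in each case that the two reductions either commute outright or that one remains available (possibly on a relabeled edge) after the other fires. Local confluence then upgrades to global confluence, so the normal form is unique and ``reaches a single edge'' is a well-defined, order-free property.

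For the linear-time claim I would maintain the multigraph in an adjacency structure supporting $O(1)$ edge insertion and deletion and $O(1)$ degree lookups, together with a worklist of pending reductions. A series reduction is applicable exactly at an internal vertex ($v \neq O,D$) of degree two, and a parallel reduction exactly at a pair of vertices joined by at least two edges; I would seed the worklist by a single $O(|\mathcal{E}|+|V|)$ scan and then repeatedly pop a reduction, apply it, and push any newly created opportunity: after an $S$ step the freshly formed edge may be parallel to an existing one, and after a $P$ step an endpoint may drop to degree two. The crucial counting observation is that every reduction removes exactly one edge, so there are at most $|\mathcal{E}|-1$ reductions, and with these data structures each reduction together with its local bookkeeping costs $O(1)$ amortized, giving $O(|\mathcal{E}|+|V|)$ overall; if the worklist empties before a single edge remains, the network is declared non-series-parallel, consistent with Proposition \ref{pro:alternatedef}(b). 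The binary decomposition tree is produced on the fly: each edge carries a pointer to a tree node (a leaf for an original edge), an $S$ step creates an internal node labeled $S$ whose children are the nodes of the two merged edges, and a $P$ step creates a node labeled $P$ likewise; the node attached to the final surviving edge is the desired tree, and it records a valid sequence of $S$ and $P$ steps.

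The main obstacle I anticipate is twofold. On the theory side, the local-confluence case analysis must be done carefully for the multigraph with distinguished terminals, since the overlap cases---especially a single shared vertex participating simultaneously in a candidate $S$ and a candidate $P$ reduction---need explicit checking rather than hand-waving. On the algorithmic side, the delicate point is detecting newly created parallel edges in amortized $O(1)$ without ever re-sorting or re-scanning a neighborhood: this requires the standard bucketing/lazy-checking device so that each parallel pair is discovered at the moment the second edge of the bundle is created, and verifying that this bookkeeping is genuinely charged $O(1)$ per reduction is where the linear-time bound is actually earned.
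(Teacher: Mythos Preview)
The paper does not prove this proposition: it is quoted verbatim from \cite{valdes1979recognition} and invoked as a black-box tool inside the proof of Proposition~\ref{pro:recognitionSLI}. There is therefore no in-paper argument to compare your proposal against.

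On its own merits, your sketch is sound. The correctness half---induction along Definition~\ref{def:SP} in one direction, reading the reduction sequence backwards in the other, and confluence via termination plus local confluence (Newman's lemma)---is a clean and standard route. For the linear-time half, your worklist-with-constant-time-updates approach is closer in spirit to \cite{schoenmakers1995new} (also cited by the paper) than to the original \cite{valdes1979recognition}, which works via a depth-first-search/ear-decomposition style structural analysis rather than greedy local rewriting; both approaches yield the same bound, and yours is arguably more elementary to present. The point you flag as delicate---amortized $O(1)$ detection of a newly created parallel edge after an $S$ step---is genuine but standard: after contracting the degree-two vertex $v$ with neighbors $u,w$, one only needs to test whether an $(u,w)$ edge already exists, which is $O(1)$ if each adjacency list is indexed by neighbor identity (e.g., a hash or a bucket keyed on the opposite endpoint). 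With that in place the amortized bookkeeping goes through and the overall bound is $O(|\mathcal{E}|+|V|)$ as claimed.
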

\begin{figure}[t]
\centering
\includegraphics[width=0.3\textwidth]{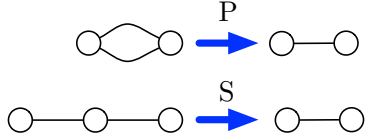}
\caption{Two operations that turns a series-parallel network into a single
edge.}
\label{fig:tarjaninduction}
\end{figure}
\begin{figure}[t]
\centering
\includegraphics[width=0.6\textwidth]{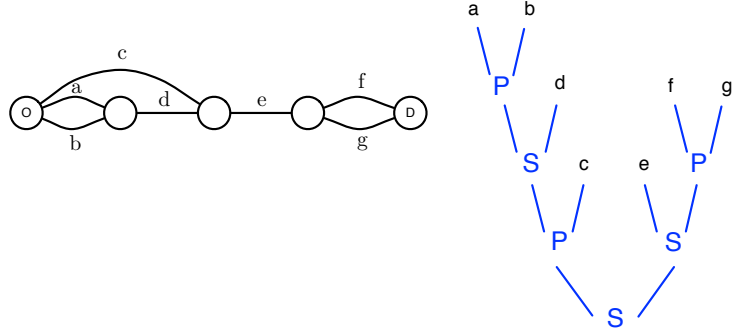}
\caption{Binary tree decomposition of a series-parallel network.}
\label{fig:treedecomposition}
\end{figure}
We now proceed with the proof of Proposition \ref{pro:recognitionSLI}. Using Proposition \ref{pro:treedecomposition}, we first verify whether $G$
is series-parallel or not, which can be done in linear time. If $G$ is not series-parallel, then it is not SLI as well. If $G$
is series-parallel, then a binary tree decomposition can be obtained in linear time (again using
Proposition \ref{pro:treedecomposition}). Note that the binary tree decomposition is not unique and the following argument works with any binary tree decomposition. In this tree the edges of $G$ are represented by the leaves of the tree. We label the incident edges to origin by $O$ and the incident edges to destination by $D$ (an edge might be labeled both $O$ and $D$). Since $G$ is SP, by definition it is the result of attaching two SP networks in series or parallel. If it is the result of attaching two SP networks in series, then there exist a node of the tree labeled $S$, referred to as the root of the tree, such that on one of the subtrees starting from that node we only have $O$ labeled leaves and on the other subtree we only have $D$ labeled leaves (this can be done in linear time by traversing the tree). If $G$ is the result of attaching two SP networks in parallel, then there exists a node of the tree labeled $P$, again referred to as the root of the tree, such that on both subtrees starting from it we have both $O$ and $D$ labeled leaves.

We next show by induction on the size of tree that whether the binary tree represents an SLI network can be verified in linear time. If the root of the tree is $S$, then we have
series of two networks. By induction hypothesis in linear time we can verify whether each of these subtrees represent and SLI network, which in turn determines whether $G$ is SLI. If the root of the tree is $P$, we
need to check whether each subtree represents an LI network. We next show this can be done in linear time which concludes the proof.
\\ \textbf{Claim:} Given the binary tree decomposition, we can verify whether the underlying network is LI in linear time.
\\  We show this claim by induction on the size of the tree as well. Starting from the root of the tree, if the root has label $P$, then by
induction hypothesis, for each of the subtrees denoted by $T_1$ and $T_2$, we can verify whether the underlying network is LI in $
O(V_{T_1})$ and $O(V_{T_2})$, respectively. The underlying network is LI if and only if both of these subtrees represent an LI network. Therefore, in $O(V)$ it can be verified whether the underlying network is
LI. If the root is labeled $S$, then the underlying network is LI if and only if one of the subtrees is only labeled $S$, and the other subtree is LI. Using any traversing algorithm (breadth first search, or depth first
search), one can visit all nodes in both subtrees in linear time, verifying if it only has $S$
labels. Furthermore, by induction we can verify whether each subtree
represents an LI network. Therefore, in linear time, we can verify whether the network is LI, completing the proof.

\subsection{Proofs of Section \ref{sec:IBP}}
\subsubsection{Expansion of Example \ref{example:notSLI}}\label{app:exampleinfinitelymany}
We provide the example for part (a) of Example \ref{example:notSLI}.  Let $K=1$, $c_{e_1}(x)=x, c_{e_2}(x)=1$, $c_{e_3}(x)=1, c_{e_4}(x)=x, c_{e_5}(x)=0$ and $s_1=1$. Also, we let the information sets be $\mathcal{E}_1=\{e_1, e_2, e_3, e_4\}$ and $\tilde{\mathcal{E}}_1=\{e_1, e_2, e_3, e_4, e_5 \}$. In equilibrium, we have $f^{(1)}_{e_1e_3}=f^{(1)}_{e_2e_4}=\frac{1}{2}$ with $c^{(1)}=\frac{3}{2}$ and $\tilde{f}^{(1)}_{e_1e_3}=\tilde{f}^{(1)}_{e_2e_4}=0$, $\tilde{f}^{(1)}_{e_1e_5e_4}=1$ with $\tilde{c}^{(1)}=2$. Therefore, after expanding the information set of type $1$ users their equilibrium cost has increased from $\frac{3}{2}$ to $2$. 
\subsubsection{Proof of the Claim of Remark \ref{remark:expansionofexample}}\label{app:expansionofexample}
We will show that there are infinitely many cost functions for the network shown in Figure \ref{fig:CE21} for which IBP occurs. In particular, we show the following claim.  
\\ \textbf{Claim:} For any $a_1, a_3, a_5 > 0$ such that $a_1+a_3 > a_5$, there exist non-negative $ b_1, b_2, b_3, b_4, b_5, a_2, s_1$, and $s_2$ such that with cost functions $c_{e_i}(x)=a_i x+ b_i$, $1 \le i \le 5$, IBP occurs in the network shown in Figure \ref{fig:CE21}. In particular, we show the following cost function parameters along  with $\mathcal{E}_2=\{e_1, e_4, e_5\}$, $\mathcal{E}_1=\{e_2, e_3, e_5\}$, and $\tilde{\mathcal{E}}_1=\{e_1, e_2, e_3, e_5\}$ leads to IBP.  
\begin{align*}
& a_4=b_1=b_3=b_5=0, \\
& b_2 = a_1 y = a_1  \frac{a_5 (s_1 + s_2)}{a_1+a_3+a_5}, \\
& b_4= \frac{a_5 a_3 (s_1+s_2)}{a_1+a_3+a_5}, \\
& \frac{s_1}{s_1+s_2} \in \left(\frac{a_1+a_3}{a_1+a_3+a_5}, \min \left\{ \frac{(a_3+ a_5) (a_3a_5 + a_1^2 + a_1a_3 + a_1 a_5)- a_1a_5^2}{(a_1+a_3+a_5)\left(a_3 a_5 + a_3a_1 +a_1 a_5 \right)}, 1 \right\} \right), \\
& a_2= \frac{a_5^2 \left(\frac{s_1}{s_1+s_2} \left( a_1+a_3+a_5 \right) - a_1 \right)}{a_5\left(a_1+a_3+a_5 \right) \frac{s_1}{s_1+s_2}- a_1\left(a_1+a_3+a_5 \right)\frac{s_2}{s_1+s_2}-a_3a_5} - a_3 - a_5.
\end{align*}
\\ \emph{Proof:} we let $a_4=b_1=b_3=b_5=0$ and then find $a_2, b_2, b_4, s_1$, and $s_2$ for which IBP occurs with $\mathcal{E}_2=\{e_1, e_4, e_5\}$, $\mathcal{E}_1=\{e_2, e_3, e_5\}$, and $\tilde{\mathcal{E}}_1=\{e_1, e_2, e_3, e_5\}$. We will find the $a_2, b_2, b_4, s_1$, and $s_2$ parameters such that before expanding the information set, the equilibrium flow is $f^{(2)}_{e_5}=0$, $f^{(2)}_{e_1 e_4}=s_2$, and $f^{(1)}_{e_5}=s_1-x$, $f^{(1)}_{e_2 e_3}=x$. We will further impose the constraint that the cost of route $e_5$ for type $2$ users is equal to the cost of route $e_1 e_4$. For this to hold it is sufficient and necessary to have $a_5 (s_1-x)= a_2 x + b_2 + a_3 x$, which leads to 
\begin{align}\label{eq:exampleeq1}
x= \frac{a_5 s_1 - b_2}{a_2+ a_3+ a_5} \in [0, s_1].
\end{align}
We also have $a_1 s_2 + b_4 = a_5 (s_1-x)$, which leads to 
\begin{align}\label{eq:exampleeq2}
a_1 s_2 + b_4= a_5 \left(s_1 - \frac{a_5 s_1 - b_2}{a_2+ a_3+ a_5}  \right)
\end{align}
We will also choose $a_2, b_2, b_4, s_1$, and $s_2$ parameters such that after expanding the information set, the equilibrium flow becomes $\tilde{f}^{(2)}_{e_5}=s_2$, $\tilde{f}^{(2)}_{e_1 e_4}=0$, $\tilde{f}^{(1)}_{e_5}=s_1-y$, $\tilde{f}^{(1)}_{e_2 e_3}=0$, and $\tilde{f}^{(1)}_{e_1 e_3}=y$. We will further impose the constraint that the cost of all available routes for each type of users are equal. For this to hold it is sufficient and necessary to have $a_5 (s_1 + s_2 - y) = a_1 y + a_3 y$, which leads to 
\begin{align}\label{eq:exampleeq4}
y= \frac{a_5 (s_2 + s_1)}{a_1+a_3+a_5} \in (0, s_1).
\end{align}
We also have $a_1 y + a_3 y = b_2 + a_3 y$, which after substituting $y$ from \eqref{eq:exampleeq4} leads to 
\begin{align}\label{eq:exampleeq5}
b_2 = a_1 y = a_1  \frac{a_5 (s_2 + s_1)}{a_1+a_3+a_5}.
\end{align}
Also, for type $2$ users we have $a_1 y + b_4 = a_5 (s_2 + s_1 - y)$, which after substituting $y$ from \eqref{eq:exampleeq4} leads to 
\begin{align}\label{eq:exampleeq3}
b_4= \frac{a_5 a_3 (s_2+s_1)}{a_1+a_3+a_5}.
\end{align}
Therefore, Equations \eqref{eq:exampleeq3} and \eqref{eq:exampleeq5} determine $b_2$ and $b_4$ as a function of other parameters. In what follows we will show how to choose non-negative $s_1, s_2$, and $a_2$ such that Equations \eqref{eq:exampleeq1}, \eqref{eq:exampleeq2}, and \eqref{eq:exampleeq4} hold as well. After some rearrangements, we can see that the constraints imposed by Equations \eqref{eq:exampleeq1} and \eqref{eq:exampleeq4} are equivalent to 
\begin{align}\label{eq:exampleeqprime14}
\frac{s_1}{s_2+s_1} \ge \frac{\max\{a_5, a_1\}}{a_1+a_3+a_5}. 
\end{align}
Furthermore, IBP occurs if we have $a_1 y + b_4 > a_1 s_2 + b_4$, which leads to $\frac{s_2}{s_2+s_1} < \frac{a_5}{a_1 + a_3+ a_5}$ or equivalently 
\begin{align}\label{eq:exampleeqprimeIBP}
\frac{s_1}{s_2+s_1} > \frac{a_1+a_3}{a_1 + a_3+ a_5}. 
\end{align}
Using $a_1+a_3> a_5$, Equations \eqref{eq:exampleeqprime14} and \eqref{eq:exampleeqprimeIBP} become equivalent to 
\begin{align}\label{eq:exampleeqprimeprime14IBP}
\frac{s_1}{s_2 + s_1} > \frac{a_1+a_3}{a_1 + a_3+ a_5}. 
\end{align}
Using Equation \eqref{eq:exampleeq2}, we can find $a_2$ as follows
\begin{align}\label{eq:exampleeqprime2}
a_2= \frac{a_5^2 \left(\left(\frac{s_1}{s_2 + s_1} (a_1+a_3+a_5) \right) - a_1 \right)}{a_5 \left( \frac{s_1}{s_2 + s_1} (a_1+a_3+a_5)\right) - a_5 a_3 + a_1 \left( \frac{s_1}{s_2 + s_1} (a_1+a_3+a_5) \right) - a_1(a_1+a_3+a_5)} - a_3 - a_5,
\end{align}
with the condition that the right-hand side of Equation \eqref{eq:exampleeqprime2} is non-negative. From \eqref{eq:exampleeqprimeprime14IBP} the non-negativity of $a_2$ becomes equivalent to 
\begin{align}\label{eq:exampleeqprimeprim2}
\frac{s_1}{s_1 + s_2} (a_1+a_3+a_5)  \le \frac{(a_3+ a_5) (a_3a_5 + a_1^2 + a_1a_3 + a_1 a_5)- a_1a_5^2}{(a_3+a_5)(a_5+a_1)-a_5^2}.
\end{align}
Choosing $\frac{s_1}{s_1 + s_2} (a_1+a_3+a_5)$ which satisfies both Equations \eqref{eq:exampleeqprimeprim2} and \eqref{eq:exampleeqprimeprime14IBP} is feasible if we have 
\begin{align*}
a_1+a_3 < \frac{(a_3+ a_5) (a_3a_5 + a_1^2 + a_1a_3 + a_1 a_5)- a_1a_5^2}{(a_3+a_5)(a_5+a_1)-a_5^2},
\end{align*}
which after simplification becomes equivalent to $a_3 a_5^2 > 0$, and therefore holds. Hence, by choosing $\frac{s_1}{s_1 + s_2} (a_1+a_3+a_5)$ such that 
\begin{align}
\frac{s_1}{s_1 + s_2} (a_1+a_3+a_5) \in \left(a_1+a_3, \min \left\{ \frac{(a_3+ a_5) (a_3a_5 + a_1^2 + a_1a_3 + a_1 a_5)- a_1a_5^2}{(a_3+a_5)(a_5+a_1)-a_5^2}, a_1+a_3+a_5 \right\} \right),
\end{align}
all the conditions are satisfied and IBP occurs in this network for infinitely many cost functions.

\subsection{Proofs of Section \ref{sec:characIBP}}
\subsubsection{Proof of Lemma \ref{lemma:LI12}}\label{app:Proof:lemma:LI12}
Given the feasible flow $f^{(1:K)}$ for $(G,\mathcal{E}_{1:K},s_{1:K},\mathbf{c})$ we construct a feasible flow $f$ with load $\sum_{i=1}^K s_i$ for a single type of users by letting $f_r = \sum_{i=1}^K f_r^{(i)}$. Using this constructions, from two feasible flows $f^{(1:K)}$ and $\tilde{f}^{(1:K)}$ we obtain two feasible flows $f$ and $\tilde{f}$ for a single type congestion game such that the load of $f$ is larger than or equal to the traffic demand of $\tilde{f}$. Therefore, part (a) follows from \citet[Lemma 5]{milchtaich2006network}. 

We next show part (b). Since part (a) holds for any two feasible flows, we
can apply it for the equilibrium flows $f^{(1:K)}$ and $\tilde{f}^{(1:K)}$
over the traffic networks $(G, \mathcal{E}_{1:K}, s_{1:K}, \mathbf{c})$ and $%
(G, \tilde{\mathcal{E}}_{1:K}, s_{1:K}, \mathbf{c})$, respectively (we can
view $f^{(1:K)}$ as a feasible flow over the traffic network $(G, \tilde{%
\mathcal{E}}_{1:K}, s_{1:K}, \mathbf{c})$ as well). It follows that there
exists a route $r$ such that $\sum_{i=1}^K f^{(i)}_r > \sum_{i=1}^K \tilde{f}%
^{(i)}_r$ and $f_e \ge \tilde{f}_e$ for all $e \in r$. From the first
inequality it follows that $\sum_{i=1}^K f^{(i)}_r > 0$ which shows at least
one of the types, say type $i$, sends a positive traffic on route $r$. Note
that $i$ can be any element of $[K]$ (it can also be $1$ as the
flow $f^{(1:K)}$ is a feasible flow for the traffic network $(G, \mathcal{E%
}_{1:K}, s_{1:K}, \mathbf{c})$). We obtain
\begin{align*}
c^{(i)}=c_r \ge \tilde{c}_r \ge \tilde{c}^{(i)},
\end{align*}
where the first equality follows from $f^{(i)}_r > 0$. The first
inequality follows from $f_e \ge \tilde{f}_e$ for all $e \in r$. The
second inequality follows from the definition of ICWE and the fact that if
type $i$ users can use route $r$ in $(G, \mathcal{E}_{1:K}, s_{1:K}, \mathbf{%
c})$, then they can use it in $(G, \tilde{\mathcal{E}}_{1:K}, s_{1:K},
\mathbf{c})$ as well since the information sets are not smaller in the
second game. This completes the proof.
%
\subsubsection{Proof of Lemma \ref{lem:sumflowseries}}\label{app:lem:sumflowseries}
\textbf{Part (a)}: Suppose $f^{(1:K)}$ is an equilibrium flow on $G$. We next show that the restriction of $f^{(1:K)}$ to $G_1$ creates an equilibrium for $G_1$. Consider type $i$ users and let $r_1$ be a route in $G_1$ such that $f^{(i)}_{r_1} > 0$ and let $r'_1$ be another route in $G_1$ which belongs to the information set of type $i$ users. The route $r_1$ is part of a route $r$ in $G$ for which $f^{(i)}_r > 0$. We let $r_2$ be the restriction of $r$ to $G_2$ (so that $r= r_1+ r_2$). Since $f^{(1:K)}$ is an equilibrium of $G$, we have  
$c_{r}=c_{r_1}+ c_{r_2} \le c_{r'_1} + c_{r_2}= c_{r_1'+r_2}$ which leads to $c_{r_1} \le c_{r'_1}$, showing that the restriction of $f^{(1:K)}$ to $G_1$ is an equilibrium. Similarly, the restriction to $G_2$ is an equilibrium. 
\\ \textbf{Part (b):} We consider an equilibrium $f^{(1:K)}$ for $G$ and then using part (a) we consider the equilibria of $G_1$ and $G_2$ obtained by restriction of $f^{(1:K)}$ to $G_1$ and $G_2$. For a type $i$ and route $r$ such that $f^{(i)}_r > 0$, we have $c^{(i)}= c_r= c_{r_1}+ c_{r_2}$, where $r_1$ and $r_2$ are the restriction of $r$ to $G_1$ and $G_2$, respectively (note that the only common node of $r_1$ and $r_2$ is the destination of $G_1$ which is the same as the origin of $G_2$, hence the operation $r_1+r_2$ is a valid operation). Since $f^{(i)}_{r_1} > 0$ and $f^{(i)}_{r_2} > 0$ we have $c_{r_1}= c^{(i)}_1$ and $c_{r_2} = c^{(i)}_2$, which leads to $c^{(i)}=c^{(i)}_1+c^{(i)}_2$.

\subsubsection{Proof of Theorem \ref{thm:parttowhole}}\label{app:proof:thm:parttowhole}
We first show two lemmas that we will use in the proof. The first lemma directly follows from the results of \cite{milchtaich2006network} for single type congestion game. 

\begin{lemma}
\label{lemma:SProuteexist} Consider a traffic network with multiple
information types $(G, \mathcal{E}_{1:K}, s_{1:K}, \mathbf{c})$. Let $f^{(1:K)}$ and $\tilde{f}^{(1:K)}$ be two (arbitrary) non-identical feasible flows such that $\sum_{i=1}^{K}s_{i}\geq
\sum_{i=1}^{K}\tilde{s}_{i}$.  If $G$ is
series-parallel, there exists a
route $r$ such that $f_e \ge \tilde{f}_e$ and $f_e > 0$ for all $e \in r$.
\end{lemma}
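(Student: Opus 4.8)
The plan is to reduce the statement to a single-type comparison and then exploit the recursive structure of series-parallel networks from Definition~\ref{def:SP}. First I would collapse the per-type flows into aggregate route flows exactly as in the proof of Lemma~\ref{lemma:LI12}: set $f_r=\sum_{i=1}^K f_r^{(i)}$ and $\tilde f_r=\sum_{i=1}^K \tilde f_r^{(i)}$, which are feasible single-type flows on $G$ with total values $S=\sum_i s_i$ and $\tilde S=\sum_i \tilde s_i$, and $S\ge \tilde S$ by hypothesis. The induced edge loads are unchanged, so it suffices to exhibit a route $r$ with $f_e\ge \tilde f_e$ and $f_e>0$ on every $e\in r$ for these aggregate flows. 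Since $f^{(1:K)}$ and $\tilde f^{(1:K)}$ are non-identical, not all $s_i$ can be $0$ (that would force all $\tilde s_i=0$ and both flows to vanish), so $S>0$; this positivity is the only use of the non-identical hypothesis.

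Next I would prove, by induction on the recursive construction of $G$, the self-contained claim: if $G$ is SP and $g,h$ are feasible single-type flows with total values $p\ge q$ and $p>0$, then some route $r$ of $G$ has $g_e\ge h_e$ and $g_e>0$ for all $e\in r$. The base case (a single edge $O$--$D$) is immediate since the unique route carries load $p\ge q$ and $p>0$. If $G$ is $G_1$ connected in series with $G_2$, restrict $g,h$ to $G_1$ and to $G_2$; all flow traverses both parts, so each restriction still has total values $p\ge q>0$, and by induction I obtain routes $r_1\subseteq G_1$, $r_2\subseteq G_2$ with the edgewise properties, whence $r_1+r_2$ works in $G$. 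If $G$ is $G_1$ connected in parallel with $G_2$, every route of $G$ lies entirely in one part; let $p^{(j)}$ (resp. $q^{(j)}$) be the amount of $g$ (resp. $h$) carried on routes inside $G_j$, so $p^{(1)}+p^{(2)}=p$ and $q^{(1)}+q^{(2)}=q$. A short case check produces $j$ with $p^{(j)}>0$ and $p^{(j)}\ge q^{(j)}$: if $p^{(1)}>0$ and $p^{(1)}\ge q^{(1)}$ take $j=1$; if $p^{(1)}>0$ but $p^{(1)}<q^{(1)}$ then $p^{(2)}=p-p^{(1)}>q-q^{(1)}=q^{(2)}\ge 0$, so take $j=2$; if $p^{(1)}=0$ then $p^{(2)}=p>0$ and $q^{(2)}\le q\le p=p^{(2)}$, so take $j=2$. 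Applying the induction hypothesis to $G_j$, and noting edges of $G_j$ carry the same load in $G$ as in $G_j$, the resulting route works in $G$. Applying the claim with $g=f$, $h=\tilde f$ gives the lemma.

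The step I expect to require the most care is the parallel case: one must pick not merely a branch carrying at least as much of $g$ as of $h$, but a branch that \emph{also} carries strictly positive $g$-flow, since the conclusion demands $f_e>0$; the case analysis above is the one genuinely non-mechanical point, everything else (the reduction to one type, the series and base cases) being routine. Finally, I would note that this claim is exactly the single-type case of the results of \cite{milchtaich2006network} cited in the statement, so one may equally well invoke those directly in place of the induction.
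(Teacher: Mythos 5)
Your argument is correct, and its first step is exactly the paper's: collapse the per-type flows into aggregate single-type flows $f_r=\sum_i f_r^{(i)}$, $\tilde f_r=\sum_i \tilde f_r^{(i)}$, note the edge loads are unchanged, and observe that the non-identical hypothesis only matters to guarantee $\sum_i s_i>0$. Where you diverge is what happens next: the paper simply invokes \citet[Lemma 2]{milchtaich2006network} for the single-type statement, whereas you re-derive that statement by induction on the recursive series-parallel construction of Definition~\ref{def:SP}. Your induction is sound: the series step correctly uses that both restrictions keep the full demands $p\ge q>0$ and that edge loads of the restriction agree with those in $G$, and in the parallel step your three-way case analysis does produce a branch carrying weakly more aggregate flow \emph{and} strictly positive flow, which is the one genuinely non-mechanical point (and is essentially how Milchtaich proves his Lemma~2). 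So the net effect is a self-contained proof in place of a citation: slightly longer, but it makes the lemma independent of the external reference, and your closing remark correctly identifies that one could instead just cite the single-type result, which is precisely what the paper does.
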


\begin{proof}
 Similar to the proof of Lemma \ref{lemma:LI12}, given a feasible flow $f^{(1:K)}$ for $(G,\mathcal{E}_{1:K},s_{1:K},\mathbf{c})$ we define a feasible flow $f$ with traffic demand $\sum_{i=1}^K s_i$ for a congestion game with a single information type. Therefore, this lemma follows from \citet[Lemma 2]{milchtaich2006network}. 
\end{proof}

\begin{lemma}
\label{lem:costbetween} Consider a traffic network with multiple information
types $(G, \mathcal{E}_{1:K}, s_{1:K}, \mathbf{c})$ where $\mathcal{E}_i=\mathcal{E}$ for $i=2, \dots, K$, $\mathcal{E}_1 \subseteq \mathcal{E}$, and $G$ is a SP network. Consider an ICWE with flow $(f^{(1)}, \dots,
f^{(K)})$ and let $r$ be a route for which $f_e > 0$ for any $e \in r$. We
have 
\begin{align*}
c_r \in \left[ \min_{i \in [K]} c^{(i)}, \max_{i \in [K]} c^{(i)} \right],
\end{align*} 
where for any $i \in [K]$, $c^{(i)}$ denotes the equilibrium cost of type $i$ users. 
\end{lemma}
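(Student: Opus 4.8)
The plan is to reduce the two-sided bound to the single inequality $c_r\le c^{(1)}$ and then prove that by induction on the series-parallel construction of $G$ from Definition \ref{def:SP}. For the reduction, observe first how the two endpoints of the interval look under this restricted information structure. Types $2,\dots,K$ all carry the full information set $\mathcal E$, so their route set is all of $\mathcal R$, and the ICWE condition forces each of them to send flow only on globally cheapest routes; hence $c^{(i)}=\min_{r'\in\mathcal R}c_{r'}=:c_{\min}$ for every $i\ge 2$, while $c^{(1)}=\min_{r'\in\mathcal R_1}c_{r'}\ge c_{\min}$ because $\mathcal R_1\subseteq\mathcal R$. Therefore $\min_i c^{(i)}=c_{\min}$ when $K\ge2$, and $\max_i c^{(i)}=c^{(1)}$ in every case. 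The lower bound $c_r\ge\min_i c^{(i)}$ is then immediate: for $K\ge2$ it is just $c_r\ge\min_{r'\in\mathcal R}c_{r'}$, and for $K=1$ every edge of $r$ carries only type-$1$ flow, hence lies in $\mathcal E_1$, so $r\in\mathcal R_1$ and $c_r\ge c^{(1)}$ by the ICWE condition. Thus only $c_r\le c^{(1)}$ remains.

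I would prove $c_r\le c^{(1)}$ by induction on the number of edges of $G$, using the recursive structure of SP networks. If $G$ is a single edge, $r$ is the unique route and $c_r=c^{(1)}$. If $G=G_1+G_2$ is a series composition, write $r=r_1+r_2$ with $r_t$ a route of $G_t$. By Lemma \ref{lem:sumflowseries}(a) the restriction of the equilibrium to each $G_t$ is an ICWE, again with types $2,\dots,K$ fully informed on $G_t$ and type $1$ having a nonempty route set there (every type-$1$ route meets both factors), so the inductive hypothesis applies; moreover each edge of $r$ lies on one $r_t$ with unchanged flow, so each $r_t$ is again a full-support route and $c_{r_t}\le c^{(1)}_t$, the equilibrium cost of type $1$ on $G_t$. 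Adding these and invoking $c^{(1)}=c^{(1)}_1+c^{(1)}_2$ from Lemma \ref{lem:sumflowseries}(b) gives $c_r=c_{r_1}+c_{r_2}\le c^{(1)}$.

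The remaining case, $G=G_1\parallel G_2$, is where the work lies. Every route of $G$ lies wholly in one factor, so without loss of generality $r$ is a route of $G_1$, and restricting the equilibrium to the routes lying in $G_1$ gives an ICWE on $G_1$ with the induced demands. Care is needed because type $1$ may or may not route flow through $G_1$, and this changes which form of the hypothesis to apply. If type $1$ uses some route $\rho_0$ of $G_1$ with positive flow, then $\rho_0$ witnesses both that this restricted instance has the required structure and that type $1$'s equilibrium cost on $G_1$ equals $c_{\rho_0}=c^{(1)}$, so the hypothesis yields $c_r\le c^{(1)}$ directly. If type $1$ sends no flow into $G_1$, then necessarily $K\ge2$ (otherwise $G_1$ is flow-free and $r$ could not have full support), and deleting the now zero-demand type $1$ leaves an ICWE on $G_1$ in which all types are fully informed; the hypothesis then bounds $c_r$ by the common equilibrium cost $\min_{r'\in\mathcal R(G_1)}c_{r'}$, which equals $c_{\min}$ since the full support of $r$ forces some fully informed type to use a route of $G_1$, and $c_{\min}\le c^{(1)}$. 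Either way $c_r\le c^{(1)}$, completing the induction. I expect the parallel step — verifying the restricted instance is of the allowed form, transporting $c^{(1)}$ to $G_1$ correctly, and seeing that the all-informed sub-case collapses to $c_{\min}$ — to be the only real obstacle; note that a naive induction directly on the full interval $[\min_i c^{(i)},\max_i c^{(i)}]$ breaks at the series step, because $\max_i c^{(i)}_1+\max_i c^{(i)}_2$ can strictly exceed $\max_i(c^{(i)}_1+c^{(i)}_2)$, which is precisely why the one-sided reduction in the first paragraph is essential.
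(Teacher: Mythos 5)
Your proof is correct and follows essentially the same route as the paper's: induction on the series-parallel decomposition, Lemma \ref{lem:sumflowseries} for the series step, and a case split in the parallel step according to whether type $1$ (and which fully informed types) route flow through the factor containing $r$. The only cosmetic difference is that you reduce up front to the one-sided bound $c_r\le c^{(1)}$, while the paper inducts on $c_r\le\max_i c^{(i)}$ and handles your series-step worry by noting that type $1$, having positive demand on both factors, attains the maximum on each, so the two arguments coincide.
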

\begin{proof} Since all the types except type $1$ have full information, we have $c^{(i)}=c^{(j)}$, for all $i, j \in \{2, \dots, K\}$, $\max_{i \in [K]} c^{(i)}= c^{(1)}$, and $\min_{i \in [K]} c^{(i)}= c^{(j)}$, $j\neq 1$. By definition of ICWE, we have $c_r \ge c^{(i)}$ (as $r \in \mathcal{R}_i$) for all $i \ge 2$. This leads to $c_r \ge \min_{i \in [K]} c^{(i)}$, showing the lower bound. We will next show the upper bound. 
We will prove this by induction on the number of edges of $G$. It evidently holds for a single edge as all equilibrium costs are equal to $c_r$. We next show the result for a series-parallel network $G$.  Since $G$ is SP, it is either the result of attaching two
SP networks in series or attaching two SP networks
in parallel. If $G$ is the result
of attaching two SP networks $G_A$ and $G_B$ in series, then using part (a) of Lemma \ref{lem:sumflowseries}, an
ICWE for the overall network is obtained by concatenating an ICWE for $G_A$
with an ICWE for $G_B$. We let $r_A$ and $r_B$ denote the sections of $r$ that belong to $G_A$ and $G_B$, respectively. We also let $c^{(i)}_A$ and $c^{(i)}_B$ be the equilibrium costs of type $i$ users in $G_A$ and $G_B$, respectively. By induction hypothesis, we have $c_{r_A} \le \max_{i \in [K]} c^{(i)}_A$ and $c_{r_B} \le  \max_{i \in [K]} c^{(i)}_B$. Since the traffic demands of type $1$ users on both $G_A$ and $G_B$ are non-zero, we have $\max_{i \in [K]} c^{(i)}_A= c^{(1)}_A$ and $\max_{i \in [K]} c^{(i)}_B= c^{(1)}_B$. 
This leads to 
\begin{align*}
c_{r} = c_{r_A} + c_{r_B} \le c^{(1)}_A + c^{(1)}_B = c^{(1)},
\end{align*}
where we used part (b) of Lemma \ref{lem:sumflowseries} in the last equality. 

Now suppose that $G$ is the result of attaching $G_A$ and $G_B$ in parallel
and suppose $r \in G_A$. Let $T=\{i \ge 2 ~:~ f^{(i)}_A > 0\}$ denotes the set of types that are sending a non-zero flow over $G_A$. Depending on whether $T = \emptyset$, we have the following two cases:
\begin{itemize}
\item $T = \emptyset$: since $f_e > 0$ for all $e \in r$, at least one type must send a non-zero flow over $G_A$ and since $T = \emptyset$, only type $1$ sends a non-zero flow over $G_A$. Therefore, we have $c_r= c^{(1)}$. We also have $c^{(1)} \le \max_{i \in [K]} c^{(i)}$, leading to $c_r \le \max_{i \in [K]} c^{(i)}$. 
\item $T \neq \emptyset$: we either have $f^{(1)}_A > 0$ or $f^{(1)}_A = 0$. If $f^{(1)}_A > 0$, then by induction hypothesis, we have 
\begin{align*}
c_r \le \max_{i \in T \cup\{1\}} c^{(i)}_A = \max_{i \in T \cup\{1\}} c^{(i)} \le \max_{i \in [K]} c^{(i)},
\end{align*}
where the equality holds because each type $i \in T \cup\{1\}$ sends a positive flow over $A$ and its equilibrium cost in $G$ is the same as its equilibrium cost in $G_A$. If $f^{(1)}_A = 0$, then again by induction hypothesis and using $T \neq \emptyset$, we have 
\begin{align*}
c_r \le \max_{i \in T} c^{(i)}_A = \max_{i \in T} c^{(i)} \le \max_{i \in [K]} c^{(i)},
\end{align*}
where the equality holds because each type $i \in T $ sends a positive flow over $A$. 
\end{itemize}

This concludes the proof of lemma. 
\end{proof}

\textbf{Proof of part (a) of Theorem \ref{thm:parttowhole}:} After expanding
 information set of type $1$ users to $\mathcal{E}$, we obtain $\tilde{c}^{(i)}=%
\tilde{c}^{(1)}$ for all $i\in \lbrack K \rbrack$. Using Lemma \ref
{lemma:SProuteexist}, there exists a route $r$ such that $f_{e}\geq \tilde{f}%
_{e}$ and $f_{e}>0$ for any $e\in r$. We have 
\begin{equation*}
c_{r}\geq \tilde{c}_{r}\geq \tilde{c}^{(i)} = \tilde{c}^{(1)}, \quad \forall ~i \in [K], 
\end{equation*}%
where the first inequality follows form $f_{e}\geq \tilde{f}_{e}$, the
second inequality follows from the definition of ICWE, and the
equality follows from $\tilde{\mathcal{E}}_{i}=\mathcal{E}$ for all $%
i=1,\dots ,K$. Since $\mathcal{E}_{1}\subseteq \mathcal{E}$, we have $%
c^{(i)}=c^{(j)}\leq c^{(1)}$ for all $i,j=2,\dots ,K$. Using Lemma \ref
{lem:costbetween}, this leads to
\begin{equation*}
c_{r}\le \max_{i \in [K]} c^{(i)}=c^{(1)}.
\end{equation*}%
Combining the previous two relations leads to $\tilde{c}^{(1)}\leq c^{(1)}$%
. 
\newline
\textbf{Proof of part (b) of Theorem \ref{thm:parttowhole}:} The proof is similar to the proof of part (b) of Theorem \ref{thm:SLI}. 
In Example \ref
{example:notSLI} we have provided an example showing that IBP with restricted information sets can occur over
Wheatstone network shown in Figure \ref
{fig:CE23}.  \newline
Suppose that a network $G$ is not series-parallel. Using Proposition \ref
{pro:alternatedef}, $G$ can be constructed from Wheatstone network shown in Figure \ref
{fig:CE23} by following the steps of embedding. 
To construct an example for $G$, we start from the cost functions for which the embedded network
features IBP with restricted information sets and then
following the steps of embedding we will update the information sets as well
as the cost functions in a way that IBP occurs in the final network which is $G
$. The updates are identical to those described in the proof of part (b) 
of Theorem \ref{thm:SLI} and establishes that if IBP with restricted information sets is present in the initial
network (i.e., the Whetstone network shown in Figure \ref{fig:CE23}), it will
be present in network $G$ as well. This completes the proof of part (b). 
\subsubsection{Omitted Proof of Example \ref{Example:MultipleOD}}\label{app:Example:MultipleOD}
First, note that after expansion of information, without loss of generality, each type of users $(i, j)$, $i=1, 2$ have at least two routes from $O_i$ to $D_i$. Because, otherwise, if a type with traffic demand $s$ has only one route $r$, we can consider an equivalent game in which we update the cost of all edges on $r$ from $c_e(x)$ to $c_e(x+s)$. Also, note that due to symmetry we can only consider the information expansion of one of the types of the form $(1, j)$. Therefore, without loss of generality we assume that there exists one type from $O_2$ to $D_2$ with information about all edges of the network and there exist either one or two types from $O_1$ to $D_1$. Below, we examine all possible cases and show that IBP does not occur:
\begin{itemize}
\item [(1)] There exist two types $\{(1,1), (2,1)\}$ such that $\mathcal{R}_{2,1}=\{e_1e_3, e_2\}$, $\mathcal{R}_{1,1}=\{e_1\}$, and $\tilde{\mathcal{R}}_{1,1}=\{e_1, e_2e_3\}$: If type $(1,1)$ does not use route $e_2 e_3$ after information expansion, then equilibrium remains the same. Now suppose, type $(1,1)$ uses route $e_2 e_3$ (i.e., $\tilde{f}^{(1,1)}_{e_1} < f^{(1,1)}_{e_1}= s_{1,1}$). If $\tilde{f}_{e_1} \le f_{e_1}$, then we have 
\begin{align*}
\tilde{c}^{(1,1)} \le c_{e_1}(\tilde{f}_{e_1}) \le c_{e_1}(f_{e_1}) = c^{(1,1)},
\end{align*}
which shows IBP does not occur. Now suppose $\tilde{f}_{e_1} > f_{e_1}$, which in turn shows $\tilde{f}_{e_2} < f_{e_2}$ as $\tilde{f}_{e_1} +\tilde{f}_{e_2}=  f_{e_1}+ f_{e_2}= s_{1,1}+ s_{2,1}$. We have 
\begin{align*}
\tilde{f}_{e_3} = \tilde{f}^{(1,1)}_{e_2 e_3} + \tilde{f}^{(2,1)}_{e_1e_3} >  f^{(1,1)}_{e_2 e_3} + f^{(2,1)}_{e_1e_3}= f_{e_3},
\end{align*}
where we used $\tilde{f}^{(1,1)}_{e_2 e_3} > f^{(1,1)}_{e_2 e_3}=0$ and $\tilde{f}^{(2,1)}_{e_1e_3} \ge f^{(2,1)}_{e_1e_3}$, which holds because $\tilde{f}^{(2,1)}_{e_1e_3}= \tilde{f}_{e_1}- \tilde{f}^{(1,1)}_{e_1} > f_{e_1}- s_{1,1}=f_{e_1}- f^{(1,1)}_{e_1}=  f^{(2,1)}_{e_1e_3}$. Therefore, we have 
\begin{align}\label{eq:tempEx01}
c_{e_1}(\tilde{f}_{e_1}) + c_{e_3}(\tilde{f}_{e_3}) \le c_{e_2}(\tilde{f}_{e_2}) \le c_{e_2}(f_{e_2}) \le c_{e_1}(f_{e_1}) + c_{e_3}(f_{e_3}),
\end{align}
where the first inequality holds because $\tilde{f}^{(2,1)}_{e_1e_3} > f^{(2,1)}_{e_1e_3} \ge 0$, the second inequality holds because $\tilde{f}_{e_2} < f_{e_2}$, and the third inequality holds because $f^{(2,1)}_{e_2}= s_{2,1}- f^{(2,1)}_{e_1e_3} > s_{2,1}- \tilde{f}^{(2,1)}_{e_1e_3}=  \tilde{f}^{(2,1)}_{e_2}\ge 0$. Inequality \eqref{eq:tempEx01} together with $c_{e_1}(\tilde{f}_{e_1}) \ge c_{e_1}(f_{e_1})$ and $c_{e_3}(\tilde{f}_{e_3}) \ge c_{e_3}(f_{e_3})$ shows that the cost of all three edges before and after information expansion are the same, leading to the same equilibrium cost for all types. Therefore, IBP does not occur in this case. 
\item [(2)] There exist two types $\{(1,1), (2,1)\}$ such that $\mathcal{R}_{2,1}=\{e_1e_3, e_2\}$, $\mathcal{R}_{1,1}=\{e_2e_3\}$, and $\tilde{\mathcal{R}}_{1,1}=\{e_1, e_2e_3\}$: If type $(1,1)$ does not use $e_1$ after information expansion, then equilibrium remains the same. Now suppose type $(1,1)$ uses route $e_1$. We show IBP does not occur in this case by considering all possibilities as follows:
\begin{itemize}
\item $\tilde{f}_{e_1} \le f_{e_1}$: Since $f_{e_1}+ f_{e_2}= s_{1,1}+ s_{2,1}= \tilde{f}_{e_1}+ \tilde{f}_{e_2}$, we have $\tilde{f}_{e_2} \ge f_{e_2}$. We also have $\tilde{f}_{e_3} < f_{e_3}$, because
\begin{align*}
\tilde{f}_{e_3} = \tilde{f}^{(1,1)}_{e_2e_3} + \tilde{f}^{(2,1)}_{e_1 e_3} <  f^{(1,1)}_{e_2e_3} + f^{(2,1)}_{e_1 e_3} = f_{e_3},
\end{align*}
where we used $\tilde{f}^{(1,1)}_{e_2e_3} < f^{(1,1)}_{e_2e_3} $ as type $(1,1)$ is using $e_1$ after information expansion and $\tilde{f}^{(2,1)}_{e_1 e_3} < f^{(2,1)}_{e_1 e_3}$ as $\tilde{f}^{(2,1)}_{e_1 e_3}= \tilde{f}_{e_1}- \tilde{f}^{(1,1)}_{e_1} < f_{e_1} = f^{(2,1)}_{e_1 e_3}$. The inequality $\tilde{f}^{(2,1)}_{e_1 e_3}< f^{(2,1)}_{e_1 e_3}$ implies $\tilde{f}^{(2,1)}_{e_2} > f^{(2,1)}_{e_2} \ge 0$. 
Therefore, we have 
\begin{align}\label{eq:tempEx1}
c_{e_2}(\tilde{f}_{e_2}) \le c_{e_1}(\tilde{f}_{e_1})+ c_{e_3}(\tilde{f}_{e_3}) \le  c_{e_1}(f_{e_1})+ c_{e_3}(f_{e_3}) \le c_{e_2}(f_{e_2}),
\end{align}
where the first inequality follows from $\tilde{f}^{(2,1)}_{e_2} > 0$, the second inequality follows from $\tilde{f}_{e_1} \le f_{e_1}$ and $\tilde{f}_{e_3} \le f_{e_3}$, and the third inequality follows from $f^{(2,1)}_{e_1e_3} > \tilde{f}^{(2,1)}_{e_1e_3} \ge 0$. Inequality \eqref{eq:tempEx1} leads to 
\begin{align*}
c^{(1,1)}=c_{e_2} (\tilde{f}_{e_2}) + c_{e_3} (\tilde{f}_{e_3}) \le c_{e_2} (f_{e_2}) + c_{e_3} (f_{e_3}) \le \tilde{c}^{(1,1)},
\end{align*}
showing IBP does not occur. 
\item $\tilde{f}_{e_1} > f_{e_1}$: Since $f_{e_1}+ f_{e_2}= s_{1,1}+ s_{2,1}= \tilde{f}_{e_1}+ \tilde{f}_{e_2}$, we have $\tilde{f}_{e_2} < f_{e_2}$. If $\tilde{f}_{e_3} \le f_{e_3}$, then we have
\begin{align*}
c^{(1,1)}=c_{e_2} (\tilde{f}_{e_2}) + c_{e_3} (\tilde{f}_{e_3}) \le c_{e_2} (f_{e_2}) + c_{e_3} (f_{e_3}) \le \tilde{c}^{(1,1)},
\end{align*}
showing IBP does not occur. Otherwise, we have $\tilde{f}_{e_3} > f_{e_3}$. First note that if $\tilde{f}_{e_3} > f_{e_3}$, then $\tilde{f}^{(2,1)}_{e_1 e_3} > f^{(2,1)}_{e_1 e_3} $. This inequality holds because
\begin{align*}
\tilde{f}^{(2,1)}_{e_1 e_3} = \tilde{f}_{e_3} - \tilde{f}^{(1,1)}_{e_2e_3} > f_{e_3} - f^{(1,1)}_{e_2e_3} = f^{(2,1)}_{e_1 e_3},
\end{align*}
where we used $\tilde{f}_{e_3} > f_{e_3}$ and $\tilde{f}^{(1,1)}_{e_2e_3} < f^{(1,1)}_{e_2e_3}$ as $(1,1)$ uses $e_1$ after the expansion of information (i.e., $f^{(1,1)}_{e_2e_3} = s_{1,1}$ and $\tilde{f}^{(1,1)}_{e_2e_3} < s_{1,1}$). Therefore, we have 
\begin{align}\label{eq:tempEx2}
c_{e_1}(\tilde{f}_{e_1}) + c_{e_3}(\tilde{f}_{e_3}) \le  c_{e_2}(\tilde{f}_{e_2}) \le  c_{e_2}(f_{e_2}) \le  c_{e_1}(f_{e_1}) + c_{e_3}(f_{e_3}),
\end{align}
where the first inequality holds because $\tilde{f}^{(2,1)}_{e_1e_3} > f^{(2,1)}_{e_1e_3} \ge 0$, the second inequality holds because $\tilde{f}_{e_2} < f_{e_2}$, and the third inequality holds because $f^{(2,1)}_{e_2}= s_{2,1}- f^{(2,1)}_{e_1e_3} > s_{2,1}- \tilde{f}^{(2,1)}_{e_1e_3}= \tilde{f}^{(2,1)}_{e_2} \ge 0 $. Inequality \eqref{eq:tempEx2} together with $\tilde{f}_{e_1}> f_{e_1}$ and $\tilde{f}_{e_3} > f_{e_3}$ leads to $c_{e_1}(f_{e_1})= c_{e_1}(\tilde{f}_{e_1})$, $c_{e_2}(f_{e_2})= c_{e_2}(\tilde{f}_{e_2})$, and $c_{e_3}(f_{e_3})= c_{e_3}(\tilde{f}_{e_3})$. Therefore, we have 
\begin{align*}
\tilde{c}^{(1,1)} \le c_{e_2}(\tilde{f}_{e_2}) + c_{e_3}(\tilde{f}_{e_3}) =  c_{e_2}(f_{e_2}) + c_{e_3}(f_{e_3}) =  c^{(1,1)},
\end{align*}
showing IBP does not occur. 
\end{itemize}
\item [(3)] There exist three types $\{(1,1), (1,2), (2,1)\}$ such that $\mathcal{R}_{2,1}=\{e_1e_3, e_2\}$, $\mathcal{R}_{1,2}=\{e_1, e_2e_3\}$, $\mathcal{R}_{1,1}=\{e_1\}$, and $\tilde{\mathcal{R}}_{1,1}=\{e_1, e_2e_3\}$: This case is similar to the first case. If type $(1,1)$ does not use $e_2e_3$ after the expansion of information, then equilibrium remains the same. Now suppose, type $(1,1)$ uses route $e_2 e_3$ (i.e., $\tilde{f}^{(1,1)}_{e_1} < f^{(1,1)}_{e_1}= s_{1,1}$). If $\tilde{f}_{e_1} \le f_{e_1}$, then we have 
\begin{align*}
\tilde{c}^{(1,1)} \le c_{e_1}(\tilde{f}_{e_1}) \le c_{e_1}(f_{e_1}) = c^{(1,1)},
\end{align*}
which shows IBP does not occur. Now suppose $\tilde{f}_{e_1} > f_{e_1}$, which in turn shows $\tilde{f}_{e_2} < f_{e_2}$ as $\tilde{f}_{e_1} +\tilde{f}_{e_2}=  f_{e_1}+ f_{e_2}= s_{1,1}+ s_{1,2}+ s_{2,1}$. We consider the following two cases:
\begin{itemize}
\item $\tilde{f}^{(1,1)}_{e_1}+ \tilde{f}^{(1,2)}_{e_1} <  f^{(1,1)}_{e_1}+ f^{(1,2)}_{e_1}$ (note that $f^{(1,1)}_{e_1}= s_{1,1}$): we have 
\begin{align*}
& \tilde{f}^{(2,1)}_{e_1e_3}= \tilde{f}_{e_1}- \left( \tilde{f}^{(1,1)}_{e_1}+ \tilde{f}^{(1,2)}_{e_1} \right) > f_{e_1} - \left( f^{(1,1)}_{e_1}+ f^{(1,2)}_{e_1} \right)= f^{(2,1)}_{e_1e_3},\\
& \tilde{f}^{(1,1)}_{e_2e_3}+ \tilde{f}^{(1,2)}_{e_2 e_3}= s_{1,1}+ s_{1,2}-  \left( \tilde{f}^{(1,1)}_{e_1}+ \tilde{f}^{(1,2)}_{e_1} \right)>  s_{1,1}+ s_{1,2}-  \left( f^{(1,1)}_{e_1}+ f^{(1,2)}_{e_1} \right)= f^{(1,2)}_{e_2 e_3}.
\end{align*}
These two inequalities lead to 
\begin{align*}
\tilde{f}_{e_3} = \tilde{f}^{(2,1)}_{e_1e_3} + \tilde{f}^{(1,1)}_{e_2e_3}+ \tilde{f}^{(1,2)}_{e_2 e_3} > f^{(2,1)}_{e_1e_3}+ f^{(1,2)}_{e_2 e_3} = f_{e_3}. 
\end{align*}
Therefore, we have 
\begin{align}\label{eq:tempEx3}
c_{e_1}(\tilde{f}_{e_1}) + c_{e_3}(\tilde{f}_{e_3}) \le c_{e_2}(\tilde{f}_{e_2}) \le c_{e_2}(f_{e_2}) \le c_{e_1}(f_{e_1}) + c_{e_3}(f_{e_3}),
\end{align} 
where the first inequality holds because $\tilde{f}^{(2,1)}_{e_1e_3} > f^{(2,1)}_{e_1e_3} \ge 0$, the second inequality holds because $\tilde{f}_{e_2} < f_{e_2}$, and the third inequality holds because $f^{(2,1)}_{e_2}= s_{2,1}- f^{(2,1)}_{e_1e_3} > s_{2,1}- \tilde{f}^{(2,1)}_{e_1e_3}=  \tilde{f}^{(2,1)}_{e_2}\ge 0$. Inequality \eqref{eq:tempEx3} together with $\tilde{f}_{e_1}> f_{e_1}$ and $\tilde{f}_{e_3} > f_{e_3}$ leads to $c_{e_1}(f_{e_1})= c_{e_1}(\tilde{f}_{e_1})$, $c_{e_2}(f_{e_2})= c_{e_2}(\tilde{f}_{e_2})$, and $c_{e_3}(f_{e_3})= c_{e_3}(\tilde{f}_{e_3})$. Therefore, the cost of all three edges before and after information expansion are the same, leading to the same equilibrium cost for all types. Therefore, IBP does not occur in this case. 
\item $\tilde{f}^{(1,1)}_{e_1}+ \tilde{f}^{(1,2)}_{e_1} \ge  f^{(1,1)}_{e_1}+ f^{(1,2)}_{e_1}$: We have 
\begin{align}
& \tilde{f}^{(1,2)}_{e_1}=  \left( \tilde{f}^{(1,1)}_{e_1}+ \tilde{f}^{(1,2)}_{e_1}\right) -  \tilde{f}^{(1,1)}_{e_1} \ge \left( f^{(1,1)}_{e_1}+ f^{(1,2)}_{e_1} \right) - \tilde{f}^{(1,1)}_{e_1}>  \left( f^{(1,1)}_{e_1}+ f^{(1,2)}_{e_1} \right) - f^{(1,1)}_{e_1} =f^{(1,2)}_{e_1}, \label{eq:tempEx4} \\
& \tilde{f}^{(1,1)}_{e_2e_3}+ \tilde{f}^{(1,2)}_{e_2 e_3}= s_{1,1}+ s_{1,2}-  \left( \tilde{f}^{(1,1)}_{e_1}+ \tilde{f}^{(1,2)}_{e_1} \right) \le  s_{1,1}+ s_{1,2}-  \left( f^{(1,1)}_{e_1}+ f^{(1,2)}_{e_1} \right)= f^{(1,2)}_{e_2 e_3}. \label{eq:tempEx5}
\end{align}
If $\tilde{f}^{(2,1)}_{e_1e_3} \le f^{(2,1)}_{e_1e_3}$, then Inequality \eqref{eq:tempEx5} leads to 
\begin{align*}
\tilde{f}_{e_3} = \tilde{f}^{(1,1)}_{e_2e_3}+ \tilde{f}^{(1,2)}_{e_2 e_3} + \tilde{f}^{(2,1)}_{e_1e_3} \le f^{(1,2)}_{e_2 e_3}+ f^{(2,1)}_{e_1e_3} = f_{e_3}. 
\end{align*}
Therefore, we obtain 
\begin{align}\label{eq:tempEx6}
c_{e_1}(\tilde{f}_{e_1}) = c_{e_2}(\tilde{f}_{e_2})+ c_{e_3}(\tilde{f}_{e_3}) \le c_{e_2}(f_{e_2})+ c_{e_3}(f_{e_3}) \le c_{e_1}(f_{e_1}),
\end{align}
where the first equality holds because using Inequality \eqref{eq:tempEx4} we obtain $\tilde{f}^{(1,2)}_{e_1} > f_{e_1}^{(1,2)} \ge 0$ and $\tilde{f}^{(1,1)}_{e_2e_3} > 0$, the first inequality holds because $\tilde{f}_{e_2} < f_{e_2}$ and $\tilde{f}_{e_3}< f_{e_3}$, and the second inequality holds because using Inequality \eqref{eq:tempEx4} we obtain $f^{(1,2)}_{e_2e_3}= s_{1,2}- f^{(1,2)}_{e_1} > s_{1,2}- \tilde{f}^{(1,2)}_{e_1}\ge 0$. Inequality \eqref{eq:tempEx6} leads to
\begin{align*}
\tilde{c}^{(1,1)} \le c_{e_1}(\tilde{f}_{e_1}) \le c_{e_1}(f_{e_1}) = c^{(1,1)},
\end{align*}
showing IBP does not occur in this case. 
\\ Now suppose $\tilde{f}^{(2,1)}_{e_1e_3} > f^{(2,1)}_{e_1e_3}$ which leads to 
\begin{align}\label{eq:tempEx8}
c_{e_1}(\tilde{f}_{e_1}) \le c_{e_1}(\tilde{f}_{e_1}) + c_{e_3}(\tilde{f}_{e_3}) \le c_{e_2}(\tilde{f}_{e_2}) \le c_{e_2}(f_{e_2}) \le c_{e_2}(f_{e_2}) + c_{e_3}(f_{e_3}) \le c_{e_1}(f_{e_1}),
\end{align}
where the second inequality holds because $\tilde{f}^{(2,1)}_{e_1e_3} > f^{(2,1)}_{e_1e_3} \ge 0$, the third inequality holds because $\tilde{f}_{e_2} < f_{e_2}$, and the last inequality holds because using Inequality \eqref{eq:tempEx4} we obtain $f^{(1,2)}_{e_2e_3}= s_{1,2}- f^{(1,2)}_{e_1} > s_{1,2}- \tilde{f}^{(1,2)}_{e_1}\ge 0$. Inequality \eqref{eq:tempEx8} leads to
\begin{align*}
\tilde{c}^{(1,1)} \le c_{e_1}(\tilde{f}_{e_1}) \le c_{e_1}(f_{e_1}) = c^{(1,1)},
\end{align*}
showing IBP does not occur in this case. 
\end{itemize}
\item [(4)] There exist three types $\{(1,1), (1,2), (2,1)\}$ such that $\mathcal{R}_{2,1}=\{e_1e_3, e_2\}$, $\mathcal{R}_{1,2}=\{e_1, e_2e_3\}$, $\mathcal{R}_{1,1}=\{e_2e_3\}$, and $\tilde{\mathcal{R}}_{1,1}=\{e_1, e_2e_3\}$: This case is similar to the second case. 
\end{itemize}

\bibliographystyle{plainnat}

\end{document}